\newtheorem{lemma}{Lemma}
\providecommand{\algorithmname}{Algorithm}
\newtheorem{proposition}{Proposition}
\begin{document}
\title{Line Spectrum Estimation and Detection with Few-bit ADCs: Theoretical Analysis and Generalized NOMP Algorithm}
\author{Jiang Zhu, Hansheng Zhang, Ning Zhang, Jun Fang and Fengzhong Qu
\thanks{
Jiang Zhu, Hansheng Zhang and Fengzhong Qu are with the Ocean College, Zhejiang University, and are also with the engineering research center of oceanic sensing technology and equipment, Ministry of Education, No.1 Zheda Road, Zhoushan, 316021, China (email: \{jiangzhu16, hanshengzhang, jimqufz\}@zju.edu.cn). Ning Zhang is with the Nanjing Marine Radar Institute, Nanjing, China (email: zhangn\_ee@163.com). Jun Fang is with the National Key Laboratory of Science and Technology on Communications, University of Electronic Science and Technology of China, Chengdu 611731, China (email: JunFang@uestc.edu.cn).}}
\maketitle
\begin{abstract}
As radar systems will be equipped with thousands of antenna elements and wide bandwidth, the associated costs and power consumption become exceedingly high, and a potential solution is to adopt low-resolution quantization technology, which not only reduces data storage needs but also lowers power and hardware costs. This paper focuses on line spectral estimation and detection (LSE\&D) with few-bit ADCs (typically 1-4 bits) by investigating the signal-to-noise ratio (SNR) loss, establishing a framework to understand the impact of intersinusoidal interference, the bit-depth of the quantizer, and the noise variance on weak signal detection in scenarios involving multiple sinusoids under low-resolution quantization. Additionally, a low-complexity, super-resolution, and constant false alarm rate (CFAR) algorithm, named generalized Newtonized orthogonal matching pursuit (GNOMP), is proposed. Extensive numerical simulations are conducted to validate the theoretical findings, particularly in terms of the detection probability bound. The effectiveness of GNOMP is demonstrated through comparisons with state-of-the-art algorithms, the Cram\'{e}r Rao bound, and the detection probability bound. Real data acquired by mmWave radar further substantiates the effectiveness of GNOMP in practical applications.
\end{abstract}
\begin{IEEEkeywords}
Generalized Newtonized orthogonal matching pursuit (GNOMP), low resolution quantization, constant false alarm rate (CFAR), Rao test, weak signal detection, line spectral estimation and detection.
\end{IEEEkeywords}

\section{Introduction}

Low-resolution quantization (e.g., 1-4 bits) emerges as a promising technique for future digital radar systems equipped with large arrays and high bandwidth, as highlighted in previous studies \cite{Mishra19SPM, KumariICASSP2020, Ciuonzo2013SPL, NiTSP2023, Xu2021TSIPN}. When compared to traditional radars employing high-resolution analog-to-digital converters (ADCs) (e.g., 10-12 bits), radar systems utilizing low-resolution quantization present both advantages and challenges.

On the positive side, adopting low-resolution quantization significantly reduces power consumption, hardware costs, and data rates generated by the radar. Conversely, the nonlinearity inherent in low-resolution quantization introduces substantial distortion to the original signal. This poses a challenge to conventional signal processing methods such as matched filtering (MF)-based linear approaches and compressed sensing (CS)-based iterative methods, leading to performance degradation by neglecting the nonlinear quantization effects. Theoretical analyses indicate that coarsely quantized line spectral exhibits abundant harmonics, potentially causing false alarms when using conventional fast Fourier transform (FFT) methods.

Moreover, the low-resolution ADC exacerbates the near-far problem by reducing the instantaneous dynamic range (DR)\footnote{In \cite{DWR2016p25}, the instantaneous dynamic range concerning two simultaneous signals is defined as the power ratio of the maximum and minimum simultaneously received pulses that can be properly detected by the receiver.}. Consequently, investigating the detection performance limits for weak signals in scenarios involving multiple sinusoids becomes meaningful. It is crucial to unveil the impacts of quantizer bit-depth, noise variance, and intersinusoidal interference on weak signal detection and to design practical algorithms that approach these performance limits.
\subsection{Related Work}
Range estimation and target detection in linear frequency modulated continuous wave (LFMCW) radar can be conceptualized as line spectral estimation and detection (LSE\&D). Existing research on LSE\&D from coarsely quantized samples can be broadly categorized into two groups: signal-reconstruction-based linear approaches and parameter-estimation-and-detection-based nonlinear methods \cite{LFMCWTAES20}. From the signal reconstruction perspective, several criteria of the analog-to-digital converter (ADC), including the dynamic range (DR) and spurious free dynamic range (SFDR), have been thoroughly investigated. The spectrum of the low-resolution quantized signal, particularly in cases of 1-bit quantization, has been analyzed, shedding light on the effects of linear processing on target estimation and detection. On the other hand, the parameter-estimation-and-detection-based method aims to directly perform target detection and estimation using a parameterized model through nonlinear processing, such as the CS-based algorithm. In the following sections, we delve into the specifics of these research endeavors.

Extensive analysis has been conducted on the criteria of ADCs when subjected to sinusoidal inputs. The study reveals that the DR of a general ADC is $6B+1.72$ dB, where $B$ represents the bit-depth. Additionally, it has been established that ADCs generate harmonics, imposing limitations on their SFDR. In the context of radar applications, it has been demonstrated that binary data contains abundant self-generated harmonics \cite{SAR1991} and cross-generated harmonics \cite{LFMCWTAES20}. In scenarios with low signal-to-noise ratio (SNR), harmonic strengths decay rapidly, and conventional FFT methods demonstrate robust performance. However, in high SNR scenarios, FFT may lead to an overestimation of the model order. A notable approach presented in \cite{onebitDBF} involves forming fundamental and harmonic beams separately within subbands. Fine angular-resolution harmonic beams are then leveraged, bypassing the need for suppression in such scenarios.

Regarding parameter estimation, the  Cram\'{e}r Rao bound (CRB) has been employed to analyze the performance limits of maximum likelihood estimators (MLE). In-depth analysis has been carried out in \cite{Papatit} on scalar parameter estimation under various control inputs, including additive noise control input, threshold control input, and feedback control input. It was unveiled that appropriately tuned noise positively influences the estimator's performance, a phenomenon well-known as stochastic resonance. The study also highlighted the significant impact of the threshold input on MLE performance. When the threshold is close to the true value compared to the noise, the performance degradation, compared to the unquantized system, is minimal; otherwise, the degradation is substantial. In the context of single-tone frequency estimation \cite{SingletoneTSP}, it was discovered that 1-bit quantization results in a dramatic increase in variance at certain frequencies and slightly worse performance at other frequencies. Meanwhile, in \cite{DOA1bit02}, a stochastic Gaussian point source model was assumed to study direction of arrival (DOA) estimation, analogous to the temporal line spectral estimation (LSE) problem with multiple measurement vectors (MMVs). The closed-form CRB for a two-sensor array case was derived, revealing weak dependency on SNR for estimation error. Notably, two singular DOA angles, $0^{\circ}$ and $30^{\circ}$, were identified, showing that higher SNR leads to improved estimation performance. In the study presented in \cite{NingTAES22}, LSE with MMVs from coarsely quantized samples was investigated. For single-tone frequency estimation with multisnapshots and deterministic parameter modeling, the performance bound and its asymptotic behavior under 1-bit quantization were provided. The asymptotic analysis indicated that the CRB is inversely proportional to the number of snapshots and the cubic of the number of samples per snapshot. In lower SNR scenarios, the CRB is inversely proportional to SNR, while in higher SNR scenarios, the CRB is inversely proportional to the square root of the SNR.

The above work focuses on the theoretical aspects of LSE. Next, we will introduce related work in the algorithmic part. On-grid methods, which discretize frequencies into a finite number of grids, have been commonly employed \cite{Yu, MengZhu, LFMCWTAES20}. However, these methods often suffer from model mismatch \cite{mismatch}. To overcome this issue incurred by on-grid assumptions \cite{Yangzaireview}, a gridless atomic norm minimization (ANM) approach has been introduced, involving solving semidefinite programming and exhibiting high computational complexity \cite{Fu, Wen}. In efforts to mitigate computational burdens, alternative methods such as a gridless support vector machine (SVM) approach \cite{Gaoyu} and a relaxation-based approach known as 1bRELAX have been proposed \cite{Gianelli2}. 1bRELAX utilizes grid refinement to address model mismatch issues \cite{LiJian18SPL, LiJian19TSP, Zhang2019}. Additionally, the Bayesian information criterion (BIC) is employed to determine the number of scatters \cite{LiJian18SPL}. This approach has been extended to handle range estimation and range-Doppler imaging in LFMCW radar \cite{Zhang2019}. However, the computational complexity of the 1bRELAX algorithm remains high, as it solves convex optimization problems for all the grids in a single iteration. To enhance computational efficiency, a Majorization–Minimization (MM)-based variant, denoted as 1bMMRELAX, has been proposed. In efforts to enhance computational efficiency, low-complexity algorithms, VALSE-EP and MVALSE-EP, have been proposed \cite{VALSEEP, NingTAES22, zhuTWC}. These methods incorporate Newton steps to refine frequencies and utilize the expectation maximization (EM) algorithm to automatically estimate noise variance for bit-depth greater than 1. While these methods provide an efficient Bayesian inference framework for LSE\&D from quantized samples, they have limitations. Specifically, VALSE-EP and MVALSE-EP may act like black boxes, making it challenging to ascertain if a sinusoid is missed. Furthermore, Bayesian algorithms, designed for estimation, may not be a good choice for target detection.

In summary, the existing studies mentioned do not delve into the impact of low-resolution quantization on the DR for the recovery of multiple sinusoidal signals. Additionally, there is a gap in the availability of a fast algorithm that simultaneously preserves CFAR behavior, boasts lower computational complexity, attains super-resolution capabilities, and ensures high estimation accuracy. These identified gaps serve as the motivations for our current work.
\begin{table}[]
	\centering
        \resizebox{\textwidth}{25mm}{
	\begin{tabular}{c  c | c c}
		\hline
		ADC &  Analog-to-Digital Converter &  LSE\&D & line spectral estimation and detection  \\
		SNR & Signal-to-noise-ratio & CFAR & constant false alarm rate\\
		NOMP & Newtonized orthogonal matching pursuit&GNOMP & generalized Newtonized orthogonal matching pursuit\\
		FIM & Fisher information matrix & CRB &  Cram\'{e}r Rao bound   \\
		FFT &  fast Fourier transform &  IFFT  & inverse  fast Fourier transform  \\
		DR&dynamic range&MLE&maximum likelihood estimator\\
		BIC & Bayesian information criterion&CDF &  Cumulative distribution function\\
		AM & alternating minimization & ANM &  atomic norm minimization  \\
		1bRELAX& one bit RELAX& 1bMMRELAX&one bit majorization-minimization based RELAX\\
		VALSE-EP & varational line spectral estimation and expectation propagation   & MVALSE-EP & multisnapshot VALSE-EP \\
        AST & atomic norm soft-thresholding & FMCW & frequency modulated continuous wave \\
        MIMO & multiple input multiple output & PDF & probability density function \\ \hline
	\end{tabular}}
	\caption{ACRONYMS}
\end{table}
\subsection{Main Contributions}
In this paper, the LSE\&D have been thoroughly investigated, and the main contributions can be summarized as follows: Firstly, a comprehensive examination of the inherent DR limitations associated with a receiver employing low-resolution ADCs, particularly in the case of 1-bit ADCs, has been conducted. The study reveals that in scenarios involving multiple sinusoids, the SNR loss of the weakest signal is attributed to both quantization effects and intersinusoidal interference, considered as the synthesis of the remaining sinusoids. The Rao test is proposed and its false alarm probability and detection probability are provided. It is also shown that the proposed Rao test can be efficiently implemented through FFT and inverse FFT (IFFT). Secondly, inspired by the remarkable super-resolution and rapid performance of Newtonalized orthogonal matching pursuit (NOMP) \cite{Madhow16TSP}, a fast, superresolution and CFAR perserved generalized NOMP (GNOMP) is proposed \footnote{It's noteworthy that the GNOMP can be readily extended to handle various settings, such as compressive measurement scenarios, multisnapshot measurement scenarios, multidimensional LSE, signed measurements from time-varying thresholds, measurements from mixed-resolution ADCs, unknown noise variance scenarios, etc. Although these details are not explicitly presented, the GNOMP is applied to address scenarios involving signed measurements from time-varying thresholds, unknown noise variance and random linear measurements in numerical simulations.}. Finally, extensive numerical simulations and real experiments are conducted to verify the theoretical results and showcase the efficiency and outstanding performance of the GNOMP. The comparison is made against theoretical bounds in terms of estimation and detection, as well as state-of-the-art algorithms. The results demonstrate the effectiveness of the GNOMP in various scenarios.

The remainder of this article is structured as follows. In Section \ref{ProbSet}, we introduce the signal model. Section \ref{Signalsignal} delves into the study of the effects of low-resolution quantization in a single-signal scenario with nonidentical thresholds, and its relationship with multiple sinusoids scenarios is discussed. The proposed GNOMP approach is detailed in Section \ref{GNOMP}. Subsequently, in Section \ref{NS}, we present substantial numerical experiments to illustrate the frequency estimation and detection performances of GNOMP. The efficacy of GNOMP is further demonstrated using real data in Section \ref{RE}. Finally, Section \ref{Cons} provides the conclusion for this article.


For a complex vector ${\mathbf x}\in {\mathbb C}^M$, let ${\Re}\{{\mathbf x}\}$ and $\Im\{\mathbf x\}$ denote the real and imaginary part of $\mathbf x$, respectively. For the square matrix $\mathbf A$, let ${\rm diag}(\mathbf A)$ return a vector with elements being the diagonal of $\mathbf A$. Meanwhile, for a vector ${\mathbf a}$, let ${\rm diag}(\mathbf a)$ return a diagonal matrix with the diagonal being $\mathbf a$. Thus ${\rm diag}({\rm diag}(\mathbf A))$ returns a diagonal matrix. Let ${\rm j}$ denote the imaginary unit. For the matrix $\mathbf A\in\mathbb C^{N\times N}$, let ${\mathbf A}^{*}$, ${\mathbf A}^{\rm T}$ and ${\mathbf A}^{\rm H}$ be the conjugate, transpose and Hermitian transpose operator of $\mathbf A$, respectively. For the matrix $\mathbf A$, let $|\mathbf A|$ denote the elementwise absolute value of $\mathbf A$. Let $\mathbf I_L$ denote the identity matrix of dimension $L$. Let ${\mathcal {CN}}({\mathbf x};{\boldsymbol \mu},{\boldsymbol \Sigma})$ denote the complex normal (CN) distribution of ${\mathbf x}$ with mean ${\boldsymbol \mu}$ and covariance ${\boldsymbol \Sigma}$. Let $\phi(x)=\exp(-x^2/2)/{\sqrt{2\pi}}$ and $\Phi(x)=\int_{-\infty}^x\phi(t){\rm d}t$ denote the standard normal probability density function (PDF) and cumulative distribution function (CDF), respectively. For reading convenience, Table I shows the acronyms employed throughout this paper.

\section{Problem Setup}\label{ProbSet}
The LSE\&D from low-resolution quantized samples is formulated as
\begin{align}\label{defGLMCSLSE}
	\mathbf{y}=\mathcal{Q}_{\rm C}\left(\sum\limits_{k=1}^K{\mathbf a}(\omega_k)x_k+{\boldsymbol \epsilon}\right),
\end{align}
where
\begin{align}\label{defatom}
{\mathbf{a}}({\omega })=[1,{\rm e}^{{\rm j}\omega},\cdots,{\rm e}^{{\rm j}(N-1)\omega}]^{\rm T}
\end{align}
is the atom or array manifold vector, $\mathcal{Q}_{\rm C}({\mathbf x})=\mathcal{Q}(\Re\{{\mathbf x}\})+{\rm j}\mathcal{Q}(\Im\{{\mathbf x}\})$, ${\mathbf y}\in{\mathbb C}^N$ are the measurements, ${\boldsymbol\epsilon}\in{\mathbb C}^N$ is the additive white Gaussian noise (AWGN) satisfying ${\boldsymbol\epsilon}\sim {\mathcal {CN}}({\mathbf 0},\sigma^2{\mathbf I}_N)$, $\sigma^2$ is the variance of the noise, $N$ is the number of measurements, ${\mathcal Q}(\cdot)$ is a uniform/nonuniform quantizer which maps the continuous-valued observations into a finite number of bits. A uniform quantizer with bit-depth $B$ is used, defined as \cite{Nir2019}:
\begin{align}
{\mathcal Q}(x)=
\begin{cases}
&-\gamma+\frac{2\gamma}{b}\left(d+\frac{1}{2}\right),\quad x-d\frac{2\gamma}{b}+\gamma\in[0,\frac{2\gamma}{b}],\\&\qquad \quad \qquad\quad\qquad\quad d\in \{0,1,\cdots,b-1\},\\
&{\rm sign}(x)\left(\gamma-\frac{\gamma}{b}\right),\qquad |x|>\gamma,
\end{cases}
\end{align}
where $b=2^B$ is the cardinality of the output of the quantizer, $\gamma$ is the maximum full-scale range. Assume that the quantization intervals for the quantizer ${\mathcal Q}(\cdot)$ are $\{(\tau_d,\tau_{d+1})\}_{d=0}^{b-1}$, where $\tau_0=-\infty$, $\tau_d=d\frac{2\gamma}{b}-\gamma$, $d=1,2,\cdots,b-1$, $\tau_{b}=\infty$. For example, one-bit quantization refers to $B=1$, $b=2$, $\tau_0=-\infty$, $\tau_1=0$, $\tau_{2}=\infty$, and ${\mathcal Q}(\cdot)$ reduces to the signum function, i.e., ${\mathcal Q}(\cdot)={\rm sign}(\cdot)\gamma/2$. To better describe the quantizer, we define two functions $l(\cdot)$ and $u(\cdot)$ which return the componentwise lower thresholds $l({\mathbf y})$ and upper thresholds $u({\mathbf y})$ of the measurements ${\mathbf y}$. For example, $u\left(-\gamma+\frac{2\gamma}{b}\left(d+\frac{1}{2}\right)\right)=(d+1)\frac{2\gamma}{b}-\gamma$ and $l\left(-\gamma+\frac{2\gamma}{b}\left(d+\frac{1}{2}\right)\right)=d\frac{2\gamma}{b}-\gamma$.

The goal of this work is to perform line spectrum estimation and detection from the coarsely quantized measurements $\mathbf{y}$, i.e., inferring the unknown parameters $\{\omega_k\}_{k=1}^K$, $\{x_k\}_{k=1}^K$, and $K$, while maintaining the CFAR property.
\section{A Single Sinusoid Estimation and Detection with Nonzero Thresholds}\label{Signalsignal}
Directly analyzing the theoretical estimation and detection performance limits in a multiple targets scenario is a very challenging task due to intersinusoidal interference. Therefore, we adopt an oracle theoretical perspective: For a given sinusoid with frequency $\omega_k$ and amplitude $x_k$ that we aim to estimate and detect, assuming the remaining signals are perfectly known or estimated, this scenario can be likened to a single sinusoidal signal estimation and detection with non-zero thresholds, as defined later in (\ref{BHT}). Here, we set $\omega=\omega_k$, $x=x_k$, and ${\boldsymbol\zeta}=\sum\limits_{k^{'}=1,k^{'}\neq k}^K{\mathbf{a}}({\omega }_{k^{'}}){x_{k^{'}}}$, which can be regarded as the intersinusoidal interference for the current sinusoid under detection in a multiple sinusoid scenario. In this context, the estimation bound for the given sinusoid is expected to be lower than that in the true model (\ref{defGLMCSLSE}). Furthermore, the detection probability of the given sinusoid in such a scenario provides an upper bound for the detection probability in the true model (\ref{defGLMCSLSE}). It is anticipated that the estimation and detection bounds will be accurate provided that the designed algorithm exhibits very high estimation accuracy, as demonstrated in Sec.\ref{NS}.

From an algorithmic perspective, the proposed GNOMP, detailed in Section \ref{GNOMP}, follows a greedy approach by sequentially estimating the sinusoids. Ideally, GNOMP begins by detecting the strongest signal while ignoring the presence of other signals. Subsequently, GNOMP incorporates the first detected strong signal and assesses the existence of the second signal. Assuming GNOMP exhibits high estimation accuracy, detecting the presence of the current signal can be modeled as a binary hypothesis testing problem, as defined later in (\ref{BHT}), where the thresholds can be regarded as a synthesis of the estimated/detected signals. Consequently, it is meaningful to explore single sinusoid estimation and detection with non-zero thresholds.

This section introduces the mathematical model for the estimation and detection of a single sinusoid with nonzero thresholds in the case of a known frequency. Subsequently, the results are extended to address the scenario where the frequency is unknown.
\subsection{Complex Amplitude Unknown}
Consider the following binary hypothesis testing (BHT) problem
\begin{align}\label{BHT}
	\begin{cases}
		{\mathcal H}_0:{\mathbf y}=\mathcal{Q}_{\rm C}\left({\boldsymbol\zeta}+{\boldsymbol\epsilon}\right),\\
		{\mathcal H}_1:{\mathbf y}=\mathcal{Q}_{\rm C}\left({\boldsymbol\zeta}+{\mathbf{a}}({\omega }){{x}}+{\boldsymbol\epsilon}\right),
	\end{cases}
\end{align}
where ${\boldsymbol\zeta}\in{\mathbb C}^N$ are the nonzero thresholds. For the known frequency $\omega$ case, let ${\boldsymbol\theta}=[\Re\{x\},\Im\{x\}]^{\rm T}$ denote the unknown deterministic parameters and ${\boldsymbol\theta}_0={\mathbf 0}_{2\times 1}$. Then one could obtain the generalized likelihood ratio test (GLRT), Wald test and Rao test \cite{KayDet}. Since the asymptotic performances of GLRT and Wald test are the same as that of Rao test, and both GLRT and Wald Test depend on the ML estimate $\hat{\boldsymbol\theta}_1$ under hypothesis ${\mathcal H}_1$, making the computation complexity high especially when the frequency is unknown. We focus on the Rao test, similar to \cite{onebitMIMOradaedet} but only under the zero threshold case. Note that the Rao Test depends on the Fisher information matrix (FIM) under hypothesis ${\mathcal H}_0$, and the asymptotic distribution of Rao test depends on the FIM under ${\mathcal H}_1$ \cite{KayEst,KayDet}, we provide the FIM in the following proposition.
\begin{proposition}
For the quantization with bit-depth $B$, the FIM ${\mathbf I}_B({\boldsymbol\theta})$ and the CRB ${\mathbf I}_B^{-1}({\boldsymbol\theta})$ are
\begin{align}\label{FIMsingle0}
{\mathbf I}_B({\boldsymbol\theta})&=\frac{2}{\sigma^2}\left({\mathbf a}^{\rm H}{\rm diag}({\mathbf h}_{B,+}({\mathbf a}{x}+{\boldsymbol\zeta})){\mathbf a}
\left[
  \begin{array}{cc}
    1 & 0 \\
    0 & 1\\
  \end{array}\right]
+\Re\{{\mathbf a}^{\rm T}{\rm diag}({\mathbf h}_{B,-}({\mathbf a}{x}+{\boldsymbol\zeta})){\mathbf a}\}
\left[
  \begin{array}{cc}
    1 & 0 \\
    0 & -1\\
  \end{array}\right]\right.\notag\\
&\left.
-\Im\{{\mathbf a}^{\rm T}{\rm diag}({\mathbf h}_{B,-}({\mathbf a}{x}+{\boldsymbol\zeta})){\mathbf a}\}
\left[
  \begin{array}{cc}
    0 & 1 \\
    1 & 0\\
  \end{array}\right]\right),
\end{align}
and
\begin{align}\label{CRBsingle0}
&{\mathbf I}_B^{-1}({\boldsymbol\theta})=\frac{\sigma^2}{2\left(\left({\mathbf a}^{\rm H}{\rm diag}({\mathbf h}_{B,+}({\mathbf a}{x}+{\boldsymbol\zeta})){\mathbf a}\right)^2-\Re^2\{\left({\mathbf a}^{\rm T}{\rm diag}({\mathbf h}_{B,-}({\mathbf a}{x}+{\boldsymbol\zeta})){\mathbf a}\right)\}-\Im^2\{\left({\mathbf a}^{\rm T}{\rm diag}({\mathbf h}_{B,-}({\mathbf a}{x}+{\boldsymbol\zeta})){\mathbf a}\right)\}\right)}\notag\\
&\left({\mathbf a}^{\rm H}{\rm diag}({\mathbf h}_{B,+}({\mathbf a}{x}+{\boldsymbol\zeta})){\mathbf a}
\left[
  \begin{array}{cc}
    1 & 0 \\
    0 & 1\\
  \end{array}\right]
-\Re\{{\mathbf a}^{\rm T}{\rm diag}({\mathbf h}_{B,-}({\mathbf a}{x}+{\boldsymbol\zeta})){\mathbf a}\}
\left[
  \begin{array}{cc}
    1 & 0 \\
    0 & -1\\
  \end{array}\right]\right.\notag\\
&\left.
+\Im\{{\mathbf a}^{\rm T}{\rm diag}({\mathbf h}_{B,-}({\mathbf a}{x}+{\boldsymbol\zeta})){\mathbf a}\}
\left[
  \begin{array}{cc}
    0 & 1 \\
    1 & 0\\
  \end{array}\right]\right),
\end{align}
respectively, where
\begin{subequations}
	\begin{align}
		{\mathbf h}_{B,+}({\boldsymbol \eta})=\frac{h_B(\Re\{{\boldsymbol \eta}\},{\sigma}^2)+h_B(\Im\{{\boldsymbol \eta}\},{\sigma}^2)}{2},\label{hBplus}\\
		{\mathbf h}_{B,-}({\boldsymbol \eta})=\frac{h_B(\Re\{{\boldsymbol \eta}\},{\sigma}^2)-h_B(\Im\{{\boldsymbol \eta}\},{\sigma}^2)}{2},
	\end{align}
\end{subequations}
$h_B(x,\sigma^2)$ is given by
\begin{align}\label{hxsigma}
	h_B(x,\sigma^2)=\sum\limits_{d=0}^{b-1}
	\frac{\left[\phi\left(\frac{\tau_{d+1}-x}{\sigma/\sqrt{2}}\right)-\phi\left(\frac{\tau_{d}-x}{\sigma/\sqrt{2}}\right)\right]^2}
	{\Phi\left(\frac{\tau_{d+1}-x}{\sigma/\sqrt{2}}\right)-\Phi\left(\frac{\tau_{d}-x}{\sigma/\sqrt{2}}\right)},
\end{align}
and $h_B({\mathbf x},{\sigma}^2)=[h_B(x_1,{\sigma}^2),h_B(x_2,{\sigma}^2),\cdots h_B(x_N,{\sigma}^2)]^{\rm T}$ for ${\mathbf x}\in {\mathbb R}^N$.
Here we use ${\mathbf a}$ instead of ${\mathbf a}(\omega)$ for brevity. Besides, $h_{B=\infty}(x,\sigma^2)=\lim_{B\rightarrow\infty}h_B(x,\sigma^2)=1$, and the FIM ${\mathbf I}_{\infty}({\boldsymbol\theta})$ is
\begin{align}\label{FIMunq}
{\mathbf I}_{\infty}({\boldsymbol\theta})=\frac{2}{\sigma^2}{\mathbf a}^{\rm H}{\mathbf a}
\left[
  \begin{array}{cc}
    1 & 0 \\
    0 & 1\\
  \end{array}\right]=\frac{2N}{\sigma^2}\left[
  \begin{array}{cc}
    1 & 0 \\
    0 & 1\\
  \end{array}\right].
\end{align}
For one-bit quantization where $B=1$, $b=2^B=2$, $\tau_0=-\infty$, $\tau_1=0$, $\tau_2=\infty$, $h_{B=1}(x,\sigma^2)$ simplifies to be
\begin{align}\label{onebith}
h_{B=1}(x,\sigma^2)&=\frac{\phi^2\left(\frac{x}{\sigma/\sqrt{2}}\right)}{\Phi\left(\frac{x}{\sigma/\sqrt{2}}\right)\Phi\left(\frac{-x}{\sigma/\sqrt{2}}\right)}=\frac{1}{2\pi}\frac{{\rm e}^{-\frac{2x^2}{\sigma^2}}}{\Phi\left(\frac{x}{\sigma/\sqrt{2}}\right)\Phi\left(\frac{-x}{\sigma/\sqrt{2}}\right)}.
\end{align}
\end{proposition}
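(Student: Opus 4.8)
The plan is to exploit the fact that the componentwise quantizer $\mathcal{Q}_{\rm C}$ acts separately on the real and imaginary parts, and that the complex noise $\boldsymbol{\epsilon}\sim\mathcal{CN}(\mathbf{0},\sigma^2\mathbf{I}_N)$ has independent real and imaginary components, each distributed as $\mathcal{N}(0,\sigma^2/2)$. Consequently the likelihood factorizes into $2N$ independent scalar quantized-Gaussian likelihoods, one for each $\Re\{y_n\}$ and each $\Im\{y_n\}$. Writing the pre-quantization means as $\mu_n^{\rm R}=\Re\{\zeta_n\}+\Re\{a_n x\}$ and $\mu_n^{\rm I}=\Im\{\zeta_n\}+\Im\{a_n x\}$, the first step is to establish the per-component Fisher information about the mean. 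For a scalar Gaussian with standard deviation $\sigma/\sqrt{2}$ quantized into the bins $\{(\tau_d,\tau_{d+1})\}$, differentiating the bin probability $\Phi((\tau_{d+1}-\mu)/(\sigma/\sqrt2))-\Phi((\tau_d-\mu)/(\sigma/\sqrt2))$ with respect to $\mu$ and summing the squared score $\tfrac{1}{P}(\partial P/\partial\mu)^2$ over the bins yields, since $(\sigma/\sqrt2)^{-2}=2/\sigma^2$, exactly $\frac{2}{\sigma^2}h_B(\mu,\sigma^2)$ with $h_B$ as in (\ref{hxsigma}).

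Next I would assemble the full FIM by the reparameterization rule $\mathbf{I}_B(\boldsymbol{\theta})=\mathbf{J}^{\rm T}\mathbf{I}_{\boldsymbol\mu}\mathbf{J}$, where $\mathbf{I}_{\boldsymbol\mu}$ is the diagonal matrix of the $2N$ per-component informations $\frac{2}{\sigma^2}h_B(\mu_n^{\rm R},\sigma^2)$ and $\frac{2}{\sigma^2}h_B(\mu_n^{\rm I},\sigma^2)$, and $\mathbf{J}=\partial\boldsymbol\mu/\partial\boldsymbol\theta$ has rows $[\alpha_n,-\beta_n]$ (from $\mu_n^{\rm R}$) and $[\beta_n,\alpha_n]$ (from $\mu_n^{\rm I}$), writing $a_n=\alpha_n+{\rm j}\beta_n$. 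Carrying out $\mathbf{J}^{\rm T}\mathbf{I}_{\boldsymbol\mu}\mathbf{J}$ produces a $2\times2$ matrix with diagonal entries $\frac{2}{\sigma^2}\sum_n[\alpha_n^2 h_B(\mu_n^{\rm R})+\beta_n^2 h_B(\mu_n^{\rm I})]$ and $\frac{2}{\sigma^2}\sum_n[\beta_n^2 h_B(\mu_n^{\rm R})+\alpha_n^2 h_B(\mu_n^{\rm I})]$, and off-diagonal entry $\frac{2}{\sigma^2}\sum_n\alpha_n\beta_n[h_B(\mu_n^{\rm I})-h_B(\mu_n^{\rm R})]$.

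The main obstacle, and the step that forces the specific form of (\ref{FIMsingle0}), is to recognize these scalar sums as the real/imaginary parts of the quadratic forms $\mathbf{a}^{\rm H}{\rm diag}(\mathbf{h}_{B,+})\mathbf{a}$ and $\mathbf{a}^{\rm T}{\rm diag}(\mathbf{h}_{B,-})\mathbf{a}$. The key identities are $|a_n|^2=\alpha_n^2+\beta_n^2=1$, so that $\mathbf{a}^{\rm H}{\rm diag}(\mathbf{h}_{B,+})\mathbf{a}=\sum_n\tfrac12[h_B(\mu_n^{\rm R})+h_B(\mu_n^{\rm I})]$, together with $a_n^2=(\alpha_n^2-\beta_n^2)+2{\rm j}\alpha_n\beta_n$, so that $\Re\{\mathbf{a}^{\rm T}{\rm diag}(\mathbf{h}_{B,-})\mathbf{a}\}=\sum_n(\alpha_n^2-\beta_n^2)\tfrac12[h_B(\mu_n^{\rm R})-h_B(\mu_n^{\rm I})]$ and $\Im\{\mathbf{a}^{\rm T}{\rm diag}(\mathbf{h}_{B,-})\mathbf{a}\}=\sum_n\alpha_n\beta_n[h_B(\mu_n^{\rm R})-h_B(\mu_n^{\rm I})]$. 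Substituting these and using $\alpha_n^2+\beta_n^2=1$ to recombine terms reproduces the three coefficients above, so that the decomposition of the answer into the identity, ${\rm diag}(1,-1)$, and the anti-diagonal matrix mirrors exactly this $+/-$ split. The CRB (\ref{CRBsingle0}) then follows by inverting the resulting matrix $\frac{2}{\sigma^2}\bigl[\begin{smallmatrix}P+R&-S\\-S&P-R\end{smallmatrix}\bigr]$, with $P=\mathbf{a}^{\rm H}{\rm diag}(\mathbf{h}_{B,+})\mathbf{a}$, $R=\Re\{\mathbf{a}^{\rm T}{\rm diag}(\mathbf{h}_{B,-})\mathbf{a}\}$, $S=\Im\{\mathbf{a}^{\rm T}{\rm diag}(\mathbf{h}_{B,-})\mathbf{a}\}$, whose determinant is $(2/\sigma^2)^2(P^2-R^2-S^2)$.

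Finally I would verify the two limiting cases. For $B\to\infty$ the bin widths tend to zero, so with $t_d=(\tau_d-\mu)/(\sigma/\sqrt2)$ each summand of $h_B$ behaves as $(\phi'(t_d))^2/\phi(t_d)\,\Delta t$ and the sum becomes the Riemann integral $\int_{-\infty}^{\infty}(\phi'(t))^2/\phi(t)\,{\rm d}t=\int_{-\infty}^{\infty}t^2\phi(t)\,{\rm d}t=1$, using $\phi'(t)=-t\phi(t)$; hence $\mathbf{h}_{B,+}\to\mathbf{1}$, $\mathbf{h}_{B,-}\to\mathbf{0}$ and (\ref{FIMunq}) follows. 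For $B=1$ only the two bins $(-\infty,0)$ and $(0,\infty)$ survive; collapsing the sum and using $\phi(\pm\infty)=0$, the evenness of $\phi$, and $1-\Phi(-z)=\Phi(z)$ yields the closed form (\ref{onebith}).
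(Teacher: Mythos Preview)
Your proposal is correct and follows essentially the same approach as the paper: both derive the FIM via the chain-rule/reparameterization formula $\mathbf{J}^{\rm T}\mathbf{I}_{\boldsymbol\mu}\mathbf{J}$ with the diagonal per-component information matrix $\boldsymbol\Lambda={\rm diag}(h_B(\mu_i,\sigma^2))$, using the same Jacobian $\mathbf{J}$ of the real/imaginary means with respect to $[\Re\{x\},\Im\{x\}]^{\rm T}$. The paper simply quotes this formula as a lemma from prior work and then writes ``simplifying yields (\ref{FIMsingle0}),'' whereas you rederive the scalar Fisher information $\tfrac{2}{\sigma^2}h_B(\mu,\sigma^2)$ from scratch and spell out the algebraic identification of $\sum_n\alpha_n^2 h_B(\mu_n^{\rm R})+\cdots$ with $P\pm R$ and $-S$ via $|a_n|^2=1$ and $a_n^2=(\alpha_n^2-\beta_n^2)+2{\rm j}\alpha_n\beta_n$; you also supply the Riemann-sum argument for $h_{B=\infty}=1$ and the two-bin collapse for $h_{B=1}$, which the paper states without justification.
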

\begin{proof}
The proof is postponed to Appendix \ref{FIM0single}.
\end{proof}
\subsubsection{Estimation Performance}
We investigate the estimation performance by evaluating the CRB which equals to the trace of ${\mathbf I}_{B}^{-1}({\boldsymbol\theta})$, i.e., ${\rm tr}\left({\mathbf I}_B^{-1}({\boldsymbol\theta})\right)$ \cite{KayEst}.
\subsubsection{Detection Performance}
The Rao test $T_{{\rm R},B}(\mathbf y,{\boldsymbol\zeta},\omega)$ decides ${\mathcal H}_1$ if \cite{KayDet}
\begin{small}
	\begin{align}\label{Raoorig}
		T_{{\rm R},B}(\mathbf y,{\boldsymbol\zeta},\omega)&= \left.\frac{\partial \ln p({\mathbf y},{\boldsymbol\zeta};{\boldsymbol \theta})}
		{\partial {\boldsymbol\theta}}\right|_{{\boldsymbol\theta}={\boldsymbol\theta}_0}^{\rm T}
		{\mathbf I}_B^{-1}({\boldsymbol\theta}_0) \left.\frac{\partial \ln p({\mathbf y},{\boldsymbol\zeta};{\boldsymbol \theta})}
		{\partial {\boldsymbol\theta}}\right|_{{\boldsymbol\theta}={\boldsymbol\theta}_0}\notag\\
		&\geq \gamma_{{\rm th}, B},
	\end{align}
\end{small}
where
${\boldsymbol\theta}=[\Re\{x\},\Im\{x\}]^{\rm T}$, ${\boldsymbol\theta}_0=[0,0]^{\rm T}$,
\begin{small}
\begin{align}\label{lnpdef}
    \ln p({\mathbf y},{\boldsymbol\zeta};{\boldsymbol \theta})=&\sum\limits_{n=1}^N\left(\log\left(\Phi\left(\frac{\Re\{u(y_n)\}-\Re\{\zeta_n+{a}_n(\omega)x\}}{\sigma/{\sqrt{2}}}\right)-\Phi\left(\frac{\Re\{l(y_n)\}-\Re\{\zeta_n+{ a}_n(\omega)x\}}{\sigma/{\sqrt{2}}}\right)\right)\right.\notag\\
    &\left.+\log\left(\Phi\left(\frac{\Im\{u(y_n)\}-\Im\{\zeta_n+{ a}_n(\omega)x\}}{\sigma/{\sqrt{2}}}\right)-\Phi\left(\frac{\Im\{l(y_n)\}-\Im\{\zeta_n+{ a}_n(\omega)x\}}{\sigma/{\sqrt{2}}}\right)\right)\right),
\end{align}
\end{small}
\begin{small}
    \begin{align}\label{lnppart}
        \frac{\partial \ln p({\mathbf y},{\boldsymbol\zeta};{\boldsymbol \theta})}
        {\partial {\boldsymbol\theta}}
        &=\sum\limits_{n=1}^N-\frac{\phi\left(\frac{\Re\{u(y_n)\}-\Re\{\zeta_n+{a}_n(\omega)x\}}{\sigma/{\sqrt{2}}}\right)-\phi\left(\frac{\Re\{l(y_n)\}-\Re\{\zeta_n+{a}_n(\omega)x\}}{\sigma/{\sqrt{2}}}\right)}{\Phi\left(\frac{\Re\{u(y_n)\}-\Re\{\zeta_n+{a}_n(\omega)x\}}{\sigma/{\sqrt{2}}}\right)-\Phi\left(\frac{\Re\{l(y_n)\}-\Re\{\zeta_n+{a}_n(\omega)x\}}{\sigma/{\sqrt{2}}}\right)}\frac{[\Re\{{a}_n(\omega)\},-\Im\{{a}_n(\omega)\}]^{\rm T}}{\sigma/{\sqrt{2}}}\notag\\
        &-\sum\limits_{n=1}^N\frac{\phi\left(\frac{\Im\{u(y_n)\}-\Im\{\zeta_n+{a}_n(\omega)x\}}{\sigma/{\sqrt{2}}}\right)-\phi\left(\frac{\Im\{l(y_n)\}-\Im\{\zeta_n+{a}_n(\omega)x\}}{\sigma/{\sqrt{2}}}\right)}{\Phi\left(\frac{\Im\{u(y_n)\}-\Im\{\zeta_n+{a}_n(\omega)x\}}{\sigma/{\sqrt{2}}}\right)-\Phi\left(\frac{\Im\{l(y_n)\}-\Im\{\zeta_n+{a}_n(\omega)x\}}{\sigma/{\sqrt{2}}}\right)}\frac{[\Im\{{a}_n(\omega)\},\Re\{{a}_n(\omega)\}]^{\rm T}}{\sigma/{\sqrt{2}}},
    \end{align}
\end{small}
and $\gamma_{{\rm th}, B}$ is the detection threshold.


Now we simplify the Rao test $T_{{\rm R},B}({\mathbf y},{\boldsymbol\zeta},\omega)$. We first compute ${\mathbf I}_B^{-1}({\boldsymbol\theta}_0)$ and $\frac{\partial \ln p({\mathbf y},{\boldsymbol\zeta};{\boldsymbol \theta})}
{\partial {\boldsymbol\theta}}|_{{\boldsymbol\theta}={\boldsymbol\theta}_0}$. According to (\ref{CRBsingle0}), ${\mathbf I}_B^{-1}({\boldsymbol\theta}_0)$ is 
\begin{small}
    \begin{align}\label{Iinvtheta0}
        &{\mathbf I}_B^{-1}({\boldsymbol\theta}_0)=\frac{\sigma^2}{2\left(\left({\mathbf a}^{\rm H}{\rm diag}({\mathbf h}_{B,+}({\boldsymbol\zeta})){\mathbf a}\right)^2-\Re^2\{\left({\mathbf a}^{\rm T}{\rm diag}({\mathbf h}_{B,-}({\boldsymbol\zeta})){\mathbf a}\right)\}-\Im^2\{\left({\mathbf a}^{\rm T}{\rm diag}({\mathbf h}_{B,-}({\boldsymbol\zeta})){\mathbf a}\right)\}\right)}\notag\\
        &\left({\mathbf a}^{\rm H}{\rm diag}({\mathbf h}_{B,+}({\boldsymbol\zeta})){\mathbf a}
        \left[
        \begin{array}{cc}
            1 & 0 \\
            0 & 1\\
        \end{array}\right]
        -\Re\{{\mathbf a}^{\rm T}{\rm diag}({\mathbf h}_{B,-}({\boldsymbol\zeta})){\mathbf a}\}
        \left[
        \begin{array}{cc}
            1 & 0 \\
            0 & -1\\
        \end{array}\right]
        +\Im\{{\mathbf a}^{\rm T}{\rm diag}({\mathbf h}_{B,-}({\boldsymbol\zeta})){\mathbf a}\}
        \left[
        \begin{array}{cc}
            0 & 1 \\
            1 & 0\\
        \end{array}\right]\right).
    \end{align}
\end{small}
$\frac{\partial \ln p({\mathbf y},{\boldsymbol\zeta};{\boldsymbol \theta})}
{\partial {\boldsymbol\theta}}\big|_{{\boldsymbol\theta}={\boldsymbol\theta}_0}$ is
\begin{align}\label{lnptheta0}
\begin{split}
\left.\frac{\partial \ln p({\mathbf y},{\boldsymbol\zeta};{\boldsymbol \theta})}
{\partial {\boldsymbol\theta}}\right|_{{\boldsymbol\theta}={\boldsymbol\theta}_0}
=-\frac{1}{\sigma/{\sqrt{2}}}
\left[
  \begin{array}{cc}
    \Re\{{{\mathbf a}^{\rm H}}{\boldsymbol \varphi}_B({\mathbf y},{\boldsymbol\zeta},\sigma)\}\\
    \Im\{{{\mathbf a}^{\rm H}}{\boldsymbol \varphi}_B({\mathbf y},{\boldsymbol\zeta},\sigma)\}
  \end{array}\right],
\end{split}
\end{align}
where ${\boldsymbol \varphi}_B({\mathbf y},{\boldsymbol\zeta},\sigma)=[[{\boldsymbol \varphi}_B({\mathbf y},{\boldsymbol\zeta},\sigma)]_1,[{\boldsymbol \varphi}_B({\mathbf y},{\boldsymbol\zeta},\sigma)]_2,\cdots,[{\boldsymbol \varphi}_B({\mathbf y},{\boldsymbol\zeta},\sigma)]_N]^{\rm T}$ and
\begin{align}\label{verphingeneral}
	[{\boldsymbol \varphi}_B({\mathbf y},{\boldsymbol\zeta},\sigma)]_n =& -\frac{\phi\left(\frac{\Re\{u(y_n)-\zeta_n\}}
		{\sigma/{\sqrt{2}}}\right)-\phi\left(\frac{\Re\{l(y_n)-\zeta_n\}}
		{\sigma/{\sqrt{2}}}\right)}{\Phi\left(\frac{\Re\{u(y_n)-\zeta_n\}}
		{\sigma/{\sqrt{2}}}\right)-\Phi\left(\frac{\Re\{l(y_n)-\zeta_n\}}
		{\sigma/{\sqrt{2}}}\right)}-{\rm j}\frac{\phi\left(\frac{\Im\{u(y_n)-\zeta_n\}}
		{\sigma/{\sqrt{2}}}\right)-\phi\left(\frac{\Im\{l(y_n)-\zeta_n\}}
		{\sigma/{\sqrt{2}}}\right)}{\Phi\left(\frac{\Im\{u(y_n)-\zeta_n\}}
		{\sigma/{\sqrt{2}}}\right)-\Phi\left(\frac{\Im\{l(y_n)-\zeta_n\}}
		{\sigma/{\sqrt{2}}}\right)}.
\end{align}
For simplicity, we use ${\boldsymbol \varphi}_B$ instead of ${\boldsymbol \varphi}_B({\mathbf y},{\boldsymbol\zeta},\sigma)$ for brevity. Inserting (\ref{Iinvtheta0}) and (\ref{lnptheta0}) into (\ref{Raoorig}) yields the simplified Rao test as
\begin{small}
    \begin{align}\label{generalRao}
        T_{{\rm R},B}(\mathbf y,{\boldsymbol\zeta},\omega)
        =&\left.\frac{\partial \ln p({\mathbf y},{\boldsymbol\zeta};{\boldsymbol \theta})}
        {\partial {\boldsymbol\theta}}\right|_{{\boldsymbol\theta}={\boldsymbol\theta}_0}^{\rm T}
        {\mathbf I}_B^{-1}({\boldsymbol\theta}_0) \left.\frac{\partial \ln p({\mathbf y},{\boldsymbol\zeta};{\boldsymbol \theta})}
        {\partial {\boldsymbol\theta}}\right|_{{\boldsymbol\theta}={\boldsymbol\theta}_0}\notag \\
        =&\frac{\overbrace{\left({\mathbf a}^{\rm H}{\rm diag}({\mathbf h}_{B,+}({\boldsymbol\zeta})){\mathbf a}\right)|{{\mathbf a}^{\rm H}}{\boldsymbol \varphi}_B|^2}^{\text{the first term}}
            -\overbrace{\Re\{\left({\mathbf a}^{\rm T}{\rm diag}({\mathbf h}_{B,-}({\boldsymbol\zeta})){\mathbf a}\right){({{\mathbf a}^{\rm H}}{\boldsymbol \varphi}_B)^2}\}}^{\text{the second term}}}{\left({\mathbf a}^{\rm H}{\rm diag}({\mathbf h}_{B,+}({\boldsymbol\zeta})){\mathbf a}\right)^2-\vert {\mathbf a}^{\rm T}{\rm diag}({\mathbf h}_{B,-}({\boldsymbol\zeta})){\mathbf a} \vert^2}
        \triangleq T_{{\rm R},B}(\boldsymbol{\varphi}_B,{\boldsymbol\zeta},\omega)
    \end{align}
\end{small}
Note that the Rao test $T_{{\rm R},B}(\mathbf y,{\boldsymbol\zeta},\omega)$ consists of two terms: The first term can be viewed as the MF based test with pseudo measurements ${\boldsymbol\varphi}_B$, the second term can be viewed as a regularization term due to the nonidentical thresholds in Inphase/Quadrature (I/Q) channels. By viewing ${\boldsymbol\varphi}_B$ as the pseudo measurements, we can use $T_{{\rm R},B}(\boldsymbol{\varphi}_B,{\boldsymbol\zeta},\omega)$ shown in (\ref{generalRao}) to represent the Rao test $T_{{\rm R},B}(\mathbf y,{\boldsymbol\zeta},\omega)$. When $\mathbf a$ is an array manifold vector, we can use $\mathbf a^{\rm H}\mathbf a=N$ to further simplify the Rao test $T_{{\rm R},B}(\boldsymbol{\varphi}_B,{\boldsymbol\zeta},\omega)$ as
\begin{small}
\begin{align}\label{simpleRAO}
&T_{{\rm R},B}(\boldsymbol{\varphi}_B,{\boldsymbol\zeta},\omega)\notag \\
&=\frac{\left({\mathbf 1}^{\rm T}{\mathbf h}_{B,+}({\boldsymbol\zeta})\right)|{{\mathbf a}^{\rm H}}{\boldsymbol \varphi}_B|^2
				-\Re\{\left({\mathbf a}^{\rm T}{\rm diag}({\mathbf h}_{B,-}({\boldsymbol\zeta})){\mathbf a}\right){({{\mathbf a}^{\rm H}}{\boldsymbol \varphi}_B)^2}\}}{\left({\mathbf 1}^{\rm T}{\mathbf h}_{B,+}({\boldsymbol\zeta})\right)^2-\vert {\mathbf a}^{\rm T}{\rm diag}({\mathbf h}_{B,-}({\boldsymbol\zeta})){\mathbf a} \vert^2}.
\end{align}
\end{small}
As stated in \cite{KayDet}, when the data record is large and the signal is weak, and when the MLE attains its asymptotic PDF, the Rao test $T_{{\rm R},B}(\boldsymbol{\varphi}_B,{\boldsymbol\zeta},\omega)$ has the PDF \cite{KayDet}
\begin{align}
T_{{\rm R},B}(\boldsymbol{\varphi}_B,{\boldsymbol\zeta},\omega)\stackrel{a}{\sim}\begin{cases}
\chi_{2}^2, ~~~~~{\rm under}~{\mathcal H}_0,\\
\chi_{2}^{\prime 2}(\lambda_{{\rm R},B}), ~{\rm under}~{\mathcal H}_1,
\end{cases}
\end{align}
where $\chi_{\nu}^2$ and $\chi_{\nu}^{\prime 2}(\lambda)$ denote the chi-squared PDF with $\nu$ degrees of freedom and the noncentral chi-squared PDF with $\nu$ degrees of freedom and noncentrality parameter $\lambda$, respectively, $\stackrel{a}{\sim}$ denotes an asymptotic PDF, $\lambda_{{\rm R},B}$ reduces to
\begin{align}\label{lambdageneralRB}
    \lambda_{{\rm R},B}&=[\Re\{x\},\Im\{x\}]{\mathbf I}_B({\boldsymbol\theta}_0)[\Re\{x\},\Im\{x\}]^{\rm T}\notag\\
    &=\frac{2}{\sigma^2}{\mathbf a}^{\rm H}{\rm diag}({\mathbf h}_{B,+}({\boldsymbol\zeta})){\mathbf a}({\boldsymbol\zeta})|x|^2+\frac{2}{\sigma^2}\Re\{{\mathbf a}^{\rm T}{\rm diag}({\mathbf h}_{B,-}({\boldsymbol\zeta})){\mathbf a}{{x}^2}\}.
\end{align}
Consequently, the false alarm probability ${\rm P}_{{\rm FA},B}$ and detection probability ${\rm P}_{{\rm D},B}$ are
\begin{align}\label{PFAPDknownfre}
	{\rm P}_{{\rm FA},B} &= \int_{\gamma_{{\rm th}, B}}^\infty\frac{1}{2}{\rm e}^{-\frac{x}{2}}{\rm d}x
	={\rm e}^{-\frac{\gamma_{{\rm th}, B}}{2}},\\
	{\rm P}_{{\rm D},B}&=Q_1\left(\sqrt{\lambda_{{\rm R},B}},\sqrt{\gamma_{{\rm th}, B}}\right)=Q_1\left(\sqrt{\lambda_{{\rm R},B}},\sqrt{-2\ln{{\rm P}_{{\rm FA},B}}}\right),
\end{align}
due to $\gamma_{{\rm th}, B} = -2\ln{{\rm P}_{{\rm FA},B}}$, where $Q_1(\cdot,\cdot)$ is the Marcum Q-function.

We now hope to provide some insights to reveal the relationship between the proposed general Rao detector (\ref{generalRao}), the detector without quantization and the detector under $1$ bit quantization. For unquantized measurements one has ${\mathbf h}_{\infty,+}({\boldsymbol\zeta})={\mathbf 1}_N$
and ${\mathbf h}_{\infty,-}({\boldsymbol\zeta},\omega)={\mathbf 0}_N$. $T_{{\rm R},\infty}(\boldsymbol{\varphi}_\infty,{\boldsymbol\zeta},\omega)$ (\ref{generalRao}) reduces to
\begin{align}\label{unqRaodet}
T_{{\rm R},\infty}(\boldsymbol{\varphi}_\infty,{\boldsymbol\zeta},\omega)=\frac{1}{{\mathbf a}^{\rm H}{\mathbf a}}|{{\mathbf a}^{\rm H}}{\boldsymbol \varphi}_{\infty}|^2.
\end{align}
According to
\begin{align}
    {\lim\limits_{\Delta \to 0^{+}}\frac{\phi(x+\Delta)-\phi(x)}{\Phi(x+\Delta)-\Phi(x)}}
    =\frac{\phi'(x)}{\phi(x)}=-x,
\end{align}
we let $B\rightarrow \infty$ in (\ref{verphingeneral}) and simplify ${\boldsymbol \varphi}_{\infty}$ as
\begin{align}
    {\boldsymbol \varphi}_{\infty} = \frac{\mathbf y-\boldsymbol{\zeta}}{\sigma/{\sqrt{2}}}.
\end{align}
Therefore $T_{{\rm R},{\infty}}(\boldsymbol{\varphi}_\infty,{\boldsymbol\zeta},\omega)$ (\ref{unqRaodet}) is simplified to be
\begin{align}\label{Raounq}
T_{{\rm R},{\infty}}(\boldsymbol{\varphi}_\infty,{\boldsymbol\zeta},\omega)=\frac{1}{{\mathbf a}^{\rm H}{\mathbf a}}\left| {{\mathbf a}^{\rm H}}{\frac{\mathbf y-\boldsymbol{\zeta}}
{\sigma/{\sqrt{2}}}} \right|^2=\frac{2}{\sigma^2}\frac{1}{{\mathbf a}^{\rm H}{\mathbf a}}\vert {{\mathbf a}^{\rm H}}\left({\mathbf y-\boldsymbol{\zeta}}\right)
 \vert^2.
\end{align}
In addition, $\lambda_{{\rm R},\infty}$ (\ref{lambdageneralRB}) is simplified to be
\begin{align}\label{lambdaunq}
    \lambda_{{\rm R},\infty}=\frac{2}{\sigma^2}{\mathbf a}^{\rm H}{\mathbf a}|x|^2.
\end{align}
It can be concluded that the Rao test $T_{{\rm R},\infty}(\boldsymbol{\varphi}_\infty,\boldsymbol{\zeta},\omega)$ (\ref{Raounq}) and the concentrality parameter $\lambda_{{\rm R},\infty}$ (\ref{lambdaunq}) are consistent with the results directly obtained with the unquantized measurement model \cite{Madhow16TSP}. Besides, the effects caused by the nonzero thresholds can easily be cancelled and the asymptotic distribution under either hypothesis ${\mathcal H}_0$ or hypothesis ${\mathcal H}_1$ is irrelevant with respect to the thresholds $\boldsymbol{\zeta}$. This reflects that in the oracle setting where the thresholds $\boldsymbol{\zeta}$ are perfectly known, estimating and detecting a single sinusoidal signal with a general $\boldsymbol{\zeta}$ is equivalent to estimating and detecting a pure sinusoidal signal, which makes sense.

For quantized measurement model, in general, ${\mathbf h}_{B,-}({\boldsymbol\zeta})$ is usually very small, compared to ${\mathbf h}_{B,+}({\boldsymbol\zeta})$. Dropping those terms involved with ${\mathbf h}_{B,-}({\boldsymbol\zeta})$ or letting ${\mathbf h}_{B,-}({\boldsymbol\zeta})={\mathbf 0}$, a simplified Rao test $T_{{\rm R},B}^{\prime}({\boldsymbol{\varphi}_B},{\boldsymbol\zeta},\omega)$
\begin{align}\label{TRprimey}
	T_{{\rm R},B}^{\prime}(\boldsymbol{\varphi}_B,{\boldsymbol\zeta},\omega)&=\frac{1}{{\mathbf a}^{\rm H}{\rm diag}({\mathbf h}_{B,+}({\boldsymbol\zeta})){\mathbf a}}|{{\mathbf a}^{\rm H}}{\boldsymbol \varphi}_B|^2
\end{align}
can be obtained, which is the MF filter with pseudo measurements ${\boldsymbol\varphi}_B$. According to (\ref{lambdageneralRB}), a simplified $\lambda_{{\rm R},B}^{\prime}$ omitting ${\mathbf h}_{B,-}({\boldsymbol\zeta})$ is calculated to be
\begin{align}
\lambda_{{\rm R},B}^{\prime}=\frac{2}{\sigma^2}\left({\mathbf a}^{\rm H}{\rm diag}({\mathbf h}_{B,+}({\boldsymbol\zeta})){\mathbf a}\right)|x|^2
\end{align}
Compared to the unquantized measurements, one can define the SNR loss incurred by the quantization as
\begin{align}\label{SNRloss}
{\rm SNR}_{{\rm loss},B}=\frac{\lambda_{{\rm R},\infty}}{\lambda_{{\rm R},B}^{\prime}}=\frac{N}{{\mathbf a}^{\rm H}{\rm diag}({\mathbf h}_{B,+}({\boldsymbol\zeta})){\mathbf a}}.
\end{align}
It can be seen that the SNR loss ${\rm SNR}_{{\rm loss},B}$ is related to the noise variance $\sigma^2$, the bit-depth $B$ caused by the low resolution quantization and the nonzero thresholds ${\boldsymbol\zeta}$ which can be regarded as the intersinusoidal interference in multiple sinusoids scenario. In unquantized scenario, a target is typically considered reliably detectable when its integrated SNR (time domain SNR $+10\log(N)$) exceeds about 12 dB. While in quantized scenario, taking the SNR loss into account,
it can be shown that a target is reliably detectable when its effective integrated SNR (time domain SNR $+10\log(N)-{\rm SNR}_{{\rm loss},B}$) exceeds about 12 dB.

Next we consider 1-bit quantization. According to (\ref{verphingeneral}), $[{\boldsymbol \varphi}_{B=1}]_n$ is simplified as
\begin{align}
	[{\boldsymbol \varphi}_{B=1}]_n = \frac{{\rm{sign}}(\Re\{y_n\})\phi\left(\frac{\Re\{\zeta_n\}}
		{\sigma/{\sqrt{2}}}\right)}{\Phi\left(\frac{{\rm{sign}}(\Re\{y_n\})\Re\{\zeta_n\}}
		{\sigma/{\sqrt{2}}}\right)}+{\rm j}\frac{{\rm{sign}}(\Im\{y_n\})\phi\left(\frac{\Im\{\zeta_n\}}
		{\sigma/{\sqrt{2}}}\right)}{\Phi\left(\frac{{\rm{sign}}(\Im\{y_n\})\Im\{\zeta_n\}}
		{\sigma/{\sqrt{2}}}\right)}.\notag
\end{align}
We discuss two cases in which $|{\zeta}_n|/\sigma$ is near zero and $|{\zeta}_n|/\sigma$ is very large.
For ${\zeta}_n/\sigma\approx 0$, using the first order Taylor series expansion, $[{\boldsymbol \varphi}_{B=1}]_n$ can be approximated as
\begin{align}
	[{\boldsymbol \varphi}_{B=1}]_n
	&\approx \sqrt{\frac{2}{\pi}}{\rm{sign}}(\Re\{y_n\})\left(1-{\rm{sign}}(\Re\{y_n\})\frac{2\Re\{\zeta_n\}}{\sqrt{\pi}\sigma}\right)\notag\\
	&+{\rm j}\sqrt{\frac{2}{\pi}}{\rm{sign}}(\Im\{y_n\})\left(1-{\rm{sign}}(\Im\{y_n\})\frac{2\Im\{\zeta_n\}}{\sqrt{\pi}\sigma}\right).\notag
\end{align}
It can be seen that $[{\boldsymbol \varphi}_{B=1}]_n\approx \sqrt{\frac{2}{\pi}}({\rm{sign}}(\Re\{y_n\})+{\rm j}{\rm{sign}}(\Im\{y_n\}))$ is independent of noise variance. For $|{\zeta}_n|\gg \sigma$, $[{\boldsymbol \varphi}_{B=1}]_n$ can be approximated as
\begin{small}
	\begin{align}
		[{\boldsymbol \varphi}_{B=1}]_n
		=\frac{{\rm{sign}}(\Re\{y_n\})}{\sqrt{2\pi}}{\rm e}^{-\frac{\Re^2\{\zeta_n\}}{\sigma^2}}
		+{\rm j}\frac{{\rm{sign}}(\Im\{y_n\})}{\sqrt{2\pi}}{\rm e}^{-\frac{\Im^2\{\zeta_n\}}{\sigma^2}},\notag
	\end{align}
\end{small}and the real (or imaginary) part of $[{\boldsymbol \varphi}_{B=1}]_n$ decays very quickly and approaches to $0$ as the absolute value of the real (or imaginary) part of $\zeta_n$ increases. This demonstrates that those measurements with large absolute thresholds compared to the noise standard deviation will not contribute too much for signal detection.

For one-bit quantization, we use the following two approximations \cite{Proakis}
\begin{equation}
\Phi(x)\Phi(-x)\approx \begin{cases}\frac{1}{4}{\rm e}^{-\frac{x^2}{2}},x\leq \sqrt{\frac{8}{\pi}}\\
\frac{1}{\sqrt{2\pi}|x|}{\rm e}^{-x^2/2},x\geq  \sqrt{\frac{8}{\pi}}
\end{cases}
\end{equation}
to approximate $h_{B=1}(x,\sigma^2)$ (\ref{onebith}) as
\begin{equation}\label{Chernoffbound}
h_{B=1}(x,\sigma^2)\approx\begin{cases}
\frac{2}{\pi}{\rm e}^{-x^2/\sigma^2},x\leq \sqrt{\frac{8}{\pi}}\sigma\\
\frac{|x|}{\sqrt{\pi}\sigma}{\rm e}^{-x^2/\sigma^2}, x\geq \sqrt{\frac{8}{\pi}}\sigma.
\end{cases}
\end{equation}
Let ${\mathcal I}_{{\rm R},\leq}\subseteq {\mathcal N}\triangleq \{1,2,\cdots,N\}$ and ${\mathcal I}_{{\rm I},\leq}\subseteq {\mathcal N}$ be the index set such that $\forall n\in {\mathcal I}_{{\rm R},\leq}$, $|\Re\{\zeta_n\}|\leq \sqrt{\frac{8}{\pi}}\sigma$, $\forall n\in {\mathcal I}_{{\rm I},\leq}$, $|\Im\{\zeta_n\}|\leq \sqrt{\frac{8}{\pi}}\sigma$, respectively. Using the approximation (\ref{Chernoffbound}) and the definition ${\mathbf h}_{B,+}({\boldsymbol \eta})$ (\ref{hBplus}), ${\rm SNR}_{{\rm loss},B=1}$ can be approximated as 
\begin{align}\label{SNRloss1bit0}
    {\rm SNR}_{{\rm loss},B=1}&\approx {2N}/\left(\underbrace{\frac{2}{\pi}\sum\limits_{n\in {\mathcal I}_{{\rm R},\leq}}{\rm e}^{-|\Re^2\{\zeta_n\}|/\sigma^2}}_{\text{The first term}}+\underbrace{\frac{2}{\pi}\sum\limits_{n\in {\mathcal I}_{{\rm I},\leq}}{\rm e}^{-|\Im^2\{\zeta_n\}|/\sigma^2}}_{\text{The second term}}\right.\notag\\
    &\left.+\underbrace{\frac{1}{\sqrt{\pi}{\sigma}}\sum\limits_{n\in {\mathcal N}\setminus{\mathcal I}_{{\rm R},\leq}}|\Re\{\zeta_n\}|{\rm e}^{-|\Re^2\{\zeta_n\}|/\sigma^2}}_{\text{The third term}}+\underbrace{\frac{1}{\sqrt{\pi}\sigma}\sum\limits_{n\in {\mathcal N}\setminus{\mathcal I}_{{\rm I},\leq}}|\Im\{\zeta_n\}|{\rm e}^{-|\Im^2\{\zeta_n\}|/\sigma^2}}_{\text{The fourth term}}
    \right).
\end{align}
We now analyze the SNR loss in two extreme cases corresponding to the zero threshold case, i.e., ${\boldsymbol \zeta}={\mathbf 0}_N$, and the identical large threshold case where ${\boldsymbol \zeta}/\sigma \gg {\mathbf 1}_N$. For the first case, it can be shown that ${\rm SNR}_{{\rm loss},B=1}\approx \pi/2$, which demonstrates that for weak signal detection under zero threshold, one achieves the minimal performance loss compared to the quantized system. For the second case, ${\mathcal I}_{{\rm R},\leq}=\emptyset$ and ${\mathcal I}_{{\rm I},\leq}=\emptyset$, and define $\zeta_{\rm min}=\min\{|\Re\{\zeta_n\}|,|\Im\{\zeta_n\}|,n=1,2,\cdots,N\}$, the righthand term of (\ref{SNRloss1bit0}) can be upper bounded as
\begin{align}\label{SNRloss1bit1}
{\rm SNR}_{{\rm loss},B=1}\geq  \frac{\sigma}{\zeta_{\rm min}}\sqrt{\pi}{\rm e}^{\frac{\zeta_{\rm min}^2}{\sigma^2}}.
\end{align}
The SNR loss increases rapidly as ${\zeta_{\rm min}}/{\sigma}$ increases. Another common case is that either $|\Re\{\zeta_n\}|/\sigma\approx 0$ or $|\Re\{\zeta_n\}|/\sigma\gg 1$, either $|\Im\{\zeta_n\}|/\sigma\approx 0$ or $|\Im\{\zeta_n\}|/\sigma\gg 1$. Let $N_{\rm R}$ and $N_{\rm I}$ denote the number of terms such that $|\Re\{\zeta_n\}|/\sigma\approx 0$ and $|\Im\{\zeta_n\}|/\sigma\approx 0$, respectively. In this setting, the SNR loss ${\rm SNR}_{{\rm loss},B=1}$ is dominated by the first and the second terms, which can be further approximated as
\begin{align}\label{SNRloss1bit2}
{\rm SNR}_{{\rm loss},B=1}\approx \frac{\pi}{2}\times \frac{N}{(N_{\rm R}+N_{\rm I})/2}.
\end{align}
The term $\frac{\pi}{2}$ in (\ref{SNRloss1bit2}) is the minimal SNR loss incurred by the $1$ bit quantization. The term $\frac{N}{(N_{\rm R}+N_{\rm I})/2}$ in (\ref{SNRloss1bit2}) is the loss due to the nonzero thresholds with large absolute value compared to the noise standard deviation $\sigma$ which, in our case, is the synthesized signal except the current signal. This demonstrates that if the synthesized signal is comparable to the noise standard deviation, the performance degradation is small.

\subsection{Frequency and Complex Amplitude are Unknown}\label{freqcampunksubSec}
For the frequency unknown case, we also evaluate the performance of unbiased estimator by deriving the CRB. Besides, we propose a detector and analyze its theoretical performance. We discrete the frequency into a number of grids ${\Omega}_{\rm {DFT}}=\{\frac{g}{N}2\pi, g=0,1,2,\cdots,N-1\}$. For each frequency $\omega_g=\frac{g}{N}2\pi\in {\Omega}_{\rm {DFT}}$,  we define the Rao test $T_{{\rm R},B}({\boldsymbol{\varphi}_B},{\boldsymbol\zeta},\omega_g)$ as 
\begin{small}
\begin{align}\label{TRByzetaomegag}
    T_{{\rm R},B}({\boldsymbol{\varphi}_B},&{\boldsymbol\zeta},\omega_g)
    =\left.\frac{\partial \ln p({\mathbf y},{\boldsymbol\zeta};{\boldsymbol \theta})}
    {\partial {\boldsymbol\theta}}\right|_{{\boldsymbol\theta}={\boldsymbol\theta}_0}^{\rm T}
    {\mathbf I}_B^{-1}({\boldsymbol\theta}_0) \left.\frac{\partial \ln p({\mathbf y},{\boldsymbol\zeta};{\boldsymbol \theta})}
    {\partial {\boldsymbol\theta}}\right|_{{\boldsymbol\theta}={\boldsymbol\theta}_0}\notag\\
    =&\frac{\left({\mathbf a}^{\rm H}(\omega_g){\rm diag}({\mathbf h}_{B,+}({\boldsymbol\zeta})){\mathbf a}(\omega_g)\right)|{{\mathbf a}^{\rm H}(\omega_g)}{\boldsymbol \varphi}_B|^2
        -\Re\{\left({\mathbf a}^{\rm T}(\omega_g){\rm diag}({\mathbf h}_{B,-}({\boldsymbol\zeta})){\mathbf a}(\omega_g)\right){({{\mathbf a}^{\rm H}}(\omega_g){\boldsymbol \varphi}_B)^2}\}}{\left({\mathbf a}^{\rm H}(\omega_g){\rm diag}({\mathbf h}_{B,+}({\boldsymbol\zeta})){\mathbf a}(\omega_g)\right)^2-\vert {\mathbf a}^{\rm T}(\omega_g){\rm diag}({\mathbf h}_{B,-}({\boldsymbol\zeta})){\mathbf a}(\omega_g) \vert^2}.
\end{align}
\end{small}We conjecture that the distribution of $T_{{\rm R},B}({\boldsymbol{\varphi}_B},{\boldsymbol\zeta},\omega_g)$ follows
\begin{align}
T_{{\rm R},B}({\boldsymbol{\varphi}_B},{\boldsymbol\zeta},\omega_g)\stackrel{a}{\sim}\begin{cases}
\chi_{2}^2, {\rm under}~{\mathcal H}_0,\\
\chi_{2}^{\prime 2}(\lambda_{{\rm R},B,g}), {\rm under}~{\mathcal H}_1,
\end{cases}
\end{align}
where $\lambda_{{\rm R},B,g}$ reduces to
\begin{align}\label{glambdageneral}
	\lambda_{{\rm R},B,g}=&\frac{2}{\sigma^2}\left|\frac{{\mathbf a}^{\rm H}(\omega_g){\mathbf a}(\omega)}{|{\mathbf a}(\omega_g)|_2^2}\right|^2\left({\mathbf a}^{\rm H}{\rm diag}({\mathbf h}_{B,+}({\boldsymbol\zeta})){\mathbf a}|x|^2+\Re\{{\mathbf a}^{\rm T}(\omega){\rm diag}({\mathbf h}_{-}({\boldsymbol\zeta})){\mathbf a}(\omega){{x}^2}\}\right)\notag\\
	=&\frac{2}{\sigma^2}\beta_g\left({\mathbf a}^{\rm H}{\rm diag}({\mathbf h}_{B,+}({\boldsymbol\zeta})){\mathbf a}|x|^2+\Re\{{\mathbf a}^{\rm T}(\omega){\rm diag}({\mathbf h}_{B,-}({\boldsymbol\zeta})){\mathbf a}(\omega){{x}^2}\}\right),
\end{align}
$\beta_g$ is
\begin{align}\label{betagcal}
	\beta_g=\left|{\sin\left(\frac{N(\omega-\omega_g)}{2}\right)}/\left({N\sin\left({\frac{\omega-\omega_g}{2}}\right)}\right)\right|^2
\end{align}
due to the mismatch between $\omega$ and $\omega_g$ and $\beta_g\leq 1$. Note that the term ${{\mathbf a}^{\rm H}(\omega_g)}{\boldsymbol \varphi}_B$ in $\{T_{{\rm R},B}({\boldsymbol{\varphi}_B},{\boldsymbol\zeta},\omega_g)\}_{g=0}^{N-1}$ can be evaluated efficiently through FFT, while the term ${\mathbf a}^{\rm T}(\omega_g){\rm diag}({\mathbf h}_{B,-}({\boldsymbol\zeta})){\mathbf a}(\omega_g)$ can also be evaluated through IFFT, which simplifies the computation complexity significantly. We propose the following Rao test $T_{{\rm R},B,\max}({\boldsymbol{\varphi}_B},{\boldsymbol\zeta})$ as
\begin{align}\label{Raotestfreunk}
\begin{split}
T_{{\rm R},B,\max}({\boldsymbol{\varphi}_B},{\boldsymbol\zeta})
=&\underset{{\omega_g}\in\Omega_{\rm DFT}}{\max}~T_{{\rm R},B}({\boldsymbol{\varphi}_B},{\boldsymbol\zeta},\omega_g).
\end{split}
\end{align}
Note that $T_{{\rm R},\rm B}({\boldsymbol{\varphi}_B},{\boldsymbol\zeta},\omega_g)$ is only related to ${{\mathbf a}^{\rm H}(\omega_g)}{\boldsymbol \varphi}_B$. In addition, one has
\begin{equation}
{\rm E}\left[{\boldsymbol \varphi}_B\right]={\mathbf 0},\notag\\
{\rm E}\left[[{\boldsymbol \varphi}_B]_n[{\boldsymbol \varphi}_B]_m\right]=0, {\rm E}\left[[{\boldsymbol \varphi}_B]_n[{\boldsymbol \varphi}_B]_m^*\right]=0,n\neq m,
\end{equation}
Provided $\Re\{\boldsymbol\zeta\}=\Im\{\boldsymbol\zeta\}$, one has
\begin{subequations}
	\begin{align}
		&{\rm E}\left[[{\boldsymbol \varphi}_B]_n[{\boldsymbol \varphi}_B]_n\right]=0, \\
		&{\rm E}\left[[{\boldsymbol \varphi}_B]_n[{\boldsymbol \varphi}_B]_n^*\right]=h_B(\Re\{\zeta_n\},\sigma^2)+h_B(\Im\{\zeta_n\},\sigma^2).
	\end{align}
\end{subequations}
In addition, suppose that
\begin{align}
h_B(\Re\{\zeta_n\},\sigma^2)+h_B(\Im\{\zeta_n\},\sigma^2)={\rm const}
\end{align}
independent of $n$ where $h_B(\cdot,\sigma^2)$ is defined in (\ref{hxsigma}), then the covariance matrix of ${\boldsymbol \varphi}_B$ is a scaled identity matrix. Due to ${{\mathbf a}^{\rm H}(\omega_g)}{{\mathbf a}(\omega_{g^{\prime}})}=\delta_{gg^{\prime}}$, it can be shown that ${{\mathbf a}^{\rm H}(\omega_g)}{\boldsymbol \varphi}_B$ is uncorrelated with ${{\mathbf a}^{\rm H}(\omega_{g^{\prime}})}{\boldsymbol \varphi}_B$ for $g\neq g^{\prime}$, thus $T_{{\rm R},B}({\boldsymbol{\varphi}_B},{\boldsymbol\zeta},\omega_g)$ is uncorrelated with $T_{{\rm R},B}({\boldsymbol{\varphi}_B},{\boldsymbol\zeta},\omega_{g^{\prime}})$. Here we make an assumption that $\{T_{{\rm R},B}({\boldsymbol{\varphi}_B},{\boldsymbol\zeta},\omega_g)\}_{g=1}^G$ are independent\footnote{Provided that ${\boldsymbol \varphi}_B$ follows Gaussian distribution, the assumption holds.}, the false alarm probability $\tilde{\rm P}_{{\rm {FA}},B}$ is
\begin{align}\label{Pfadef}
	&\tilde{\rm P}_{{\rm FA},B}={\rm {Pr}}\left\{\underset{g=0, \cdots, N-1}{\rm {max}}~T_{{\rm R},B}({\boldsymbol{\varphi}_B},{\boldsymbol\zeta},\omega_g)>\tau_{\rm th}\right\}\notag\\
	&= 1-{\rm {Pr}}\left( \underset{g=0, \cdots, N-1}{\rm {max}}~T_{{\rm R},B}({\boldsymbol{\varphi}_B},{\boldsymbol\zeta},\omega_g)\leq\tau_{\rm th}\right) \notag\\
	&=1 - \left({\rm {Pr}}\left(T_{{\rm R},B}({\boldsymbol{\varphi}_B},{\boldsymbol\zeta},\omega_g)\leq\tau_{\rm th}\right)\right)^N\notag\\
	&=1-F_{\chi_{2}^2}^N\left(\tau_{\rm th}\right)=1-(1-{\rm e}^{-\frac{\tau_{\rm th}}{2}})^N,
\end{align}
where $F_{\chi_{2}^2}(\cdot)$ is the cumulative distribution function of the chi-squared distribution with $2$ degrees of freedom,
and the threshold is
\begin{align}
    \tau_{\rm th} = -2\ln{(1-(1-\tilde{\rm P}_{{\rm FA},B})^{\frac{1}{N}})}.
\end{align}
To find the detection probability $\tilde{\rm P}_{{\rm D},B}$, we first define a detection as a threshold crossing in the \emph{correct} frequency bin $g^*$ corresponding to the frequency $\omega_{g^*}$ closest to the true frequency $\omega$ in wrap-around distance. Hence $\tilde{\rm P}_{\rm D}$ is defined as the probability that the peak of the spectrum occurs in the \emph{correct} frequency bin $g^*$ and crosses the threshold $\tau_{\rm th}$. With this definition and for a given $\tilde{\rm P}_{{\rm FA},B}$, we have
\begin{small}
	\begin{align}\label{unknownfreq}
		\tilde{\rm P}_{{\rm D},B}&=Q_1\left(\sqrt{\lambda_{{\rm R},B,g^*}},\sqrt{\tau_{\rm th}}\right)\notag\\
		&=Q_1\left(\sqrt{\lambda_{{\rm R},B,g^*}},\sqrt{-2\ln{(1-(1-\tilde{\rm P}_{{\rm FA},B})^{\frac{1}{N}})}}\right).
	\end{align}
\end{small}
where $Q_1(\cdot,\cdot)$ is the Marcum Q-function. According to the definition of $\lambda_{{\rm R},B,g^*}$, one could also define the SNR loss similar to the frequency known case. It can be easily seen that the SNR loss is the same as (\ref{SNRloss}).

It is worth noting that once the signal is detected, the gradient descent or Newton method is adopted to jointly refine the frequency and amplitude estimates. In order to accelerate the GNOMP approach developed later in Sec. \ref{GNOMP}, one could use oversampling to evaluate $T_{{\rm R},B}({\boldsymbol{\varphi}_B},{\boldsymbol\zeta},\omega_g)$, $\omega_g\in \Omega_{\rm os}$, where $\gamma_{\rm os}$ is the oversampling factor and $\Omega_{\rm os}=\{2\pi g/(\gamma_{\rm{os}}N),g=0,1,\cdots,\gamma_{\rm{os}}N-1\}$. The coarse estimate of the frequency can be obtained via finding the maximum of $T_{{\rm R},B}({\boldsymbol{\varphi}_B},{\boldsymbol\zeta},\omega_g)$, $\omega_g\in \Omega_{\rm os}$.

\subsection{Further Discussion on the Detection Probability of Multiple Sinusoids}
Although we focus on a single signal detection, the analysis can be used to provide an upper bound of the detection probabilities of all the targets. Let $A_i$ denote the event that the the $i$th target is detected. Define $\lambda_{g^*,i}$ as
\begin{align}\label{lambdagenerali}
	\lambda_{g^*,i}&=\frac{2}{\sigma^2}\beta_{g^*,i}{\mathbf a}^{\rm H}(\omega_i){\rm diag}({\mathbf h}_{B,+}({\boldsymbol\zeta}_{\setminus i})){\mathbf a}(\omega_i)|x_i|^2+\frac{2}{\sigma^2}\Re\{{\mathbf a}^{\rm T}(\omega_i){\rm diag}({\mathbf h}_{B,-}({\boldsymbol\zeta}_{\setminus i})){\mathbf a}(\omega_i){{x}_i^2}\},
\end{align}
where $\beta_{g^*,i}$ is calculated through (\ref{betagcal}) by replacing $\omega$ with $\omega_i$, ${\boldsymbol\zeta}_{\setminus i}=\sum\limits_{k=1,k\neq i}^K{\mathbf a}(\omega_k)x_k$, $x_i$ denotes the complex amplitude of the $i$th sinusoid. According to (\ref{unknownfreq}), the detection probability of the $i$th sinusoid with all the other signals being perfectly known is
\begin{align}\label{unknownfreqi}
\tilde{\rm P}_{{\rm D},i}=Q_1\left(\sqrt{\lambda_{g^*,i}},\sqrt{-2\ln{(1-(1-\tilde{\rm P}_{{\rm FA},B})^{\frac{1}{N}})}}\right).
\end{align}
Consequently, the detection probability ${\rm P}_{\rm D}^{\rm all}={\rm Pr}(A_1A_2\cdots A_K)$ of all the targets can be upper bounded as
\begin{align}
	{\rm P}_{\rm D}^{\rm all}={\rm Pr}(A_1A_2\cdots A_K)={\rm Pr}(A_1){\rm Pr}(A_2|A_1){\rm Pr}(A_K|A_1A_2\cdots A_{K-1})\leq	\prod\limits_{k=1}^K\tilde{\rm P}_{{\rm D},k}.\label{unknownfreqall}
\end{align}
A particular case is that all the $K-1$ targets except the $K$th target are strong such that ${\rm P}_{{\rm D},k}\approx 1$, $k=1,2,\cdots,K-1$, and the detection probability ${\rm P}_{\rm D}^{\rm all}$ (\ref{unknownfreqall}) can be simplified as
\begin{align}\label{unknownfreqallsimplified}
{\rm P}_{\rm D}^{\rm all}
\leq (\approx) \tilde{\rm P}_{{\rm D},K}.
\end{align}
\begin{figure}[htbp]
	\centering
	\includegraphics[width=16cm]{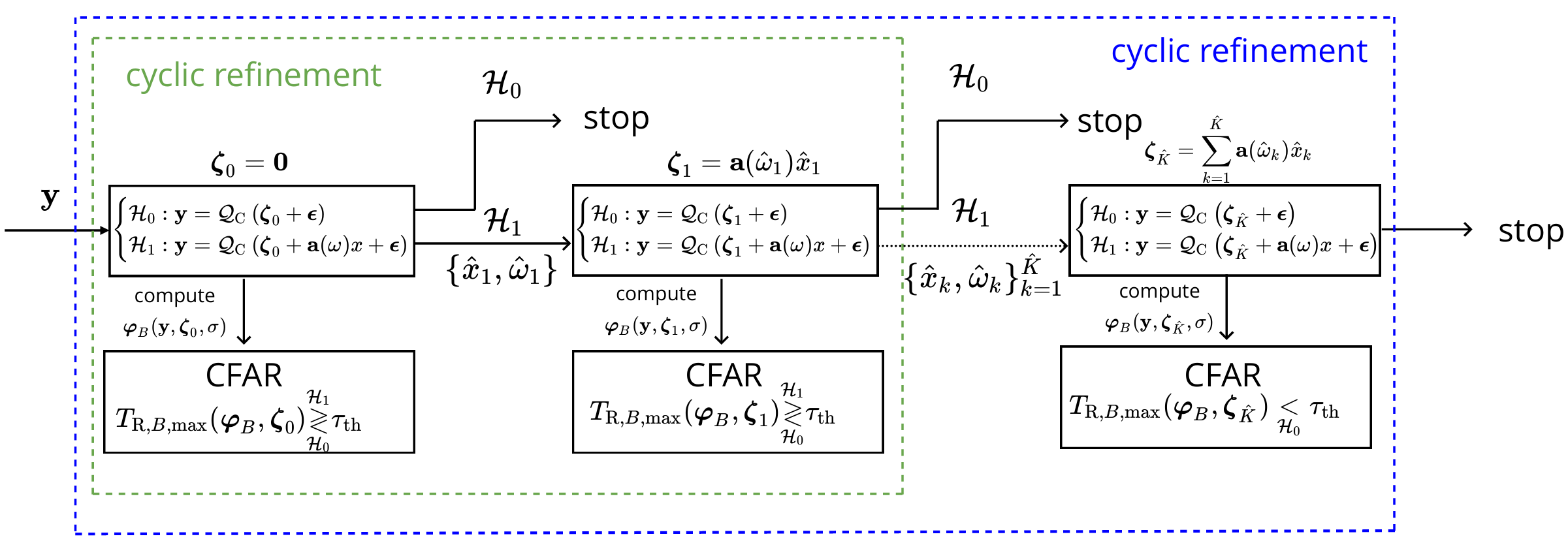}
	\caption{A bird's eye view of the GNOMP and the relationship between the theoretical parts established in Subsection \ref{freqcampunksubSec} and the GNOMP, where the GNOMP stops when the number of detected targets is $\hat{K}$.}
 \label{GNOMPBirdviewfigure}
\end{figure}
This demonstrates that the detection probability of all the targets is dominated by the detection probability of the weakest target, which makes sense.
\section{Generalized NOMP}\label{GNOMP}
This section develops a fast GNOMP for LSE\&D. The GNOMP mainly consists of two steps: Detection and Estimation. The signals that have been estimated can be synthesized to create  the thresholds $\boldsymbol\zeta$, and the BHT problem (\ref{BHT}) studied in Sec. \ref{Signalsignal} can be used to detect whether an additional sinusoid is present or not, and a coarse estimate of the frequency can be obtained to provide a good initial point for the Newton step. Fig. \ref{GNOMPBirdviewfigure} shows the steps of GNOMP and how the theoretical parts established in Subsection \ref{freqcampunksubSec} is used to develop the GNOMP. In the following, we will first demonstrate how to handle a single sinusoid, and then extend these findings to scenarios involving multiple sinusoids.

Given the number of sinusoids $K$ and provided that the real and imaginary parts of $y_n$ lie in the interval $[l(\Re\{y_n\}),u(\Re\{y_n\}))$ and $[l(\Im\{y_n\}),u(\Im\{y_n\}))$, respectively, the MLE of the frequencies and amplitudes are
\begin{align}
\underset{{\boldsymbol\omega},{\mathbf x}}{\operatorname{maximize}}~l({\mathbf y};{\boldsymbol\omega},{\mathbf x}),
\end{align}
where $l({\mathbf y};{\boldsymbol\omega},{\mathbf x})$ is defined as
\begin{small}
    \begin{align}\label{loglikelihoodly}
        &l({\mathbf y};{\boldsymbol \omega},{\mathbf x})\triangleq \sum\limits_{n=1}^N\left(\log\left(\Phi\left(\frac{u(\Re\{y_n\})-\Re\{\sum\limits_{k=1}^K{ a}_n(\omega_k)x_k\}}{\sigma/{\sqrt{2}}}\right)-\Phi\left(\frac{l(\Re\{y_n\})-\Re\{\sum\limits_{k=1}^K{ a}_n(\omega_k)x_k\}}{\sigma/{\sqrt{2}}}\right)\right)\right.\notag\\
        &\left.+\log\left(\Phi\left(\frac{u(\Im\{y_n\})-\Im\{\sum\limits_{k=1}^K{ a}_n(\omega_k)x_k\}}{\sigma/{\sqrt{2}}}\right)-\Phi\left(\frac{l(\Im\{y_n\})-\Im\{\sum\limits_{k=1}^K{ a}_n(\omega_k)x_k\}}{\sigma/{\sqrt{2}}}\right)\right)\right).
    \end{align}
    \end{small}Directly solving the above problem needs a $K$ dimensional search of the frequencies by restricting the frequencies onto the grids. With the frequencies being fixed, the amplitudes can be solved via
\begin{align}\label{ampeachest}
\underset{{\mathbf x}}{\operatorname{maximize}}~l({\mathbf y};{\boldsymbol\omega}_g,{\mathbf x}).
\end{align}
It has been shown that (\ref{ampeachest}) is a convex optimization problem \cite{CSquant}, which can be solved efficiently. In total, the number of convex optimization problems needed to be solved is $N_g^K$, where $N_g$ denotes the number of grids. Then a gradient descent or Newton method can be adopted to eliminate the offgrid effects. The computation complexity of this method is huge especially when the number of frequencies $K$ is large. In addition, the number of frequencies $K$ is usually unknown. Therefore we propose a greedy based low complexity approach named GNOMP, and use the CFAR based criterion to perform target detection and stop the GNOMP.

Motivated by the low complexity and high estimation accuracy of NOMP algorithm, we propose the GNOMP algorithm which iteratively cancels the interference in a nonlinear and greedy way.

{\bf Main Idea}: The proposed GNOMP is a sequential approach. We first treat the multiple targets estimation and detection as a single (or the strongest) target  estimation and detection problem by ignoring the other targets, which corresponds to a BHT problem (\ref{BHT}) with ${\boldsymbol\zeta}={\mathbf 0}$. If the detector decides that the target is absent, the algorithm stops. Otherwise, we estimate the frequency and amplitude of the target as $\hat{\theta}_1$ and $\hat{x}_1$. Then, we taking the estimated signal into account, and we address the BHT problem (\ref{BHT}) with ${\boldsymbol\zeta}={\mathbf a}(\hat{\theta}_1)\hat{x}_1$. The algorithm proceeds similarly in the single target scenario and stops until no target is detected. In the following, we provide the details.
\subsection{A Single Sinusoid Scenario}\label{singletarget}
At first, we treat the multiple targets estimation and detection as a single (or the strongest) target  estimation and detection problem by ignoring the other targets. In this way, the BHT problem (\ref{BHT}) is
\begin{align}\label{BHT0}
	\begin{cases}
		{\mathcal H}_0:{\mathbf y}=\mathcal{Q}_{\rm C}\left({\boldsymbol\zeta}_0+{\boldsymbol\epsilon}\right),\\
		{\mathcal H}_1:{\mathbf y}=\mathcal{Q}_{\rm C}\left({\boldsymbol\zeta}_0+{\mathbf{a}}({\omega }){{x}}+{\boldsymbol\epsilon}\right),
	\end{cases}
\end{align}
where ${\boldsymbol\zeta}_0={\mathbf 0}$. Note that the above BHT problem is a special case of (\ref{BHT}) corresponding to ${\boldsymbol\zeta}={\mathbf 0}$, which has been studied in depth in Sec. \ref{freqcampunksubSec} for a general ${\boldsymbol\zeta}$. We evaluate the proposed detector established in (\ref{Raotestfreunk}) to the threshold $\tau_{\rm th}$ calculated by (\ref{Pfadef}) with ${\boldsymbol\zeta}={\mathbf 0}$, we decide whether the target is present or not.

If no target is detected, the algorithm stops. Otherwise, we need to estimate the frequency and amplitude of the signal for the following model
\begin{align}\label{LSEgeneralmodel}
	{\mathbf y}={\mathcal Q}\left(\Re\{{\mathbf{a}}({{\omega}}){{x}}+{\boldsymbol\epsilon}\}\right)+{\rm j}{\mathcal Q}\left(\Im\{{\mathbf{a}}({{\omega}}){{x}}+{\boldsymbol\epsilon}\}\right).
\end{align}
The loglikelihood is $l({\mathbf y};\omega,{x})$. The MLE can be formulated as
\begin{align}\label{MLEsingle}
	(\hat{x}_{{\rm ML}},\hat \omega_{\rm {ML}})=\underset{{x},\omega}{\operatorname{argmax}}~l({\mathbf y};{\omega},{x}).
\end{align}
Directly solving the MLE of a single frequency is difficult as $l({\mathbf y};\omega,{x})$ is not concave with respect to the frequency and the real and imaginary parts of $x$. However, with the frequency $\omega$ being known, the loglikelihood function is concave with respect to the real and imaginary parts of $x$ \cite{CSquant}. Therefore we use the alternating minimization (AM) approach to solve (\ref{MLEsingle}). We first obtain a good initial point of $\omega$ as $\hat{\omega}$ and the amplitude ${x}$ as $\hat{x}$. Then we fix the amplitude estimate $\hat{x}$ and refine the estimate $\hat{\omega}$ as $\hat{\omega}^{\prime}$. The amplitude is further optimized as $\hat{x}^{\prime}$ by fixing the frequency as $\hat{\omega}^{\prime}$. To find a good initial point of $\omega$, the frequency $\omega\in [0,2\pi)$ is first discretized into a finite number of grids. As shown in \cite{Madhow16TSP}, an oversampling factor $\gamma_{\rm{os}}=4$ with respect to the Nyquist grid is preferable to ensure the convergence of Newton's method. For each grid $\omega_g=g2\pi/(\gamma_{\rm os}N)$, $g=0,1,2,\cdots,\gamma_{\rm os}N-1$, $\gamma_{\rm os}$ is the oversampling factor, the following subproblem
\begin{align}\label{subprob1}
	\hat{x}_g=\underset{{x}}{\operatorname{argmax}}~l({\mathbf y};{\omega}_g,{x}),
\end{align}
is solved globally. The frequency yielding the maximum loglikelihood is calculated as
\begin{align}
	\hat{\omega}=\underset{\omega_g\in {\Omega}_{\rm{os}}}{\operatorname{argmax}}~l({\mathbf y};{\omega}_g,\hat{x}_g).
\end{align}
The computation complexity of the above steps are very high, and we instead propose a very low complexity approach in the following.

We use oversampling to evaluate $T_{{\rm R},B}(\boldsymbol{\varphi}_B,{\boldsymbol\zeta},\omega_g)$, $\omega_g\in \Omega_{\rm os}$, where $\gamma_{\rm os}$ is the oversampling factor and $\Omega_{\rm os}=\{2\pi g/(\gamma_{\rm{os}}N),g=0,1,\cdots,\gamma_{\rm{os}}N-1\}$. The coarse estimate of the frequency can be obtained via finding the maximum of $T_{{\rm R},B}({\mathbf y},{\boldsymbol\zeta},\omega_g)$, $\omega_g\in \Omega_{\rm os}$. In particular, the steps can be summarized as follows:
\begin{itemize}
  \item  The Rao detector is adopted to obtain $\hat{\omega}$ efficiently through solving
\begin{align}
\hat{\omega}=\underset{\omega_g\in {\Omega}_{\rm{os}}}{\operatorname{argmax}}~T_{{\rm R},B}({\mathbf y},{\boldsymbol\zeta},\omega_g),
\end{align}
where ${\boldsymbol\zeta}={\mathbf 0}$ due to detecting the first signal. It is worth noting that $\hat{\omega}$ can be obtained via efficient FFT and IFFT. Additionally, the amplitude estimate $\hat{x}$ is obtained via solving
\begin{align}\label{subprobsimplified}
	\hat{x}=\underset{{x}}{\operatorname{argmax}}~l({\mathbf y};\hat{\omega},{x}).
\end{align}
  \item AM: With the coarse detection frequency $\hat{\omega}$ and amplitude $\hat{x}$, the Newton refinement is adopted to refine the frequency $\hat{\omega}$ with the amplitudes $\hat{x}$ being fixed, i.e.,
\begin{align}
\hat{\omega}^{\prime}=\hat{\omega}-\frac{\dot{l}({\mathbf y};\hat{\omega},\hat{x})}{\ddot{l}({\mathbf y};\hat{\omega},\hat{x})},
\end{align}
where $\dot{l}({\mathbf y};\hat{\omega},\hat{x})$ and $\ddot{l}({\mathbf y};\hat{\omega},\hat{x})$ denote the first and second order derivative of $l({\mathbf y};{\omega},\hat{x})$ with respect to ${\omega}$ evaluated at $\hat{\omega}$. For brevity, the detailed computations of $\dot{l}({\mathbf y};{\omega},\hat{x})$ and $\ddot{l}({\mathbf y};{\omega},\hat{x})$ are omitted. Then the amplitude $\hat{x}$ is refined via the Newton step with  ${\omega}$ fixed at $\hat{ \omega}^{\prime}$, i.e.,
\begin{align}\label{ampest}
\left[
\begin{array}{c}
  \Re\{\hat{x}^{\prime}\} \\
  \Im\{\hat{x}^{\prime}\}
\end{array}
\right]=\left[
\begin{array}{c}
  \Re\{\hat{x}\} \\
  \Im\{\hat{x}\}
\end{array}
\right]-[\nabla^2l({\mathbf y};{\hat \omega}^{\prime},\hat{x})]^{-1}\nabla l({\mathbf y};{\hat \omega}^{\prime},\hat{x}),
\end{align}
where $\nabla l({\mathbf y};{\omega},{x})$ and $\nabla^2l({\mathbf y};{\omega},{x})$ denote the gradient and Hessian of $l({\mathbf y};{\omega},{x})$ with respect to $[\Re\{x\};\Im\{x\}]^{\rm T}$.
\end{itemize}


\subsection{Multiple Sinusoids Scenario}
Suppose we have detected $L-1$ mixtures of sinusoids, and all the amplitudes-frequencies pairs are $\{(\hat{x}_k^{\prime},\hat{\omega}^{\prime}_k),k=1,\cdots,L-1\}$. Then we face the following BHT problem
\begin{align}\label{BHTL1}
	\begin{cases}
		{\mathcal H}_0:{\mathbf y}=\mathcal{Q}_{\rm C}\left({\boldsymbol\zeta}_{L-1}+{\boldsymbol\epsilon}\right),\\
		{\mathcal H}_1:{\mathbf y}=\mathcal{Q}_{\rm C}\left({\boldsymbol\zeta}_{L-1}+{\mathbf{a}}({\omega }){{x}}+{\boldsymbol\epsilon}\right),
	\end{cases}
\end{align}
where ${\boldsymbol\zeta}_{L-1}=\sum\limits_{k=1}^{L-1}\hat{x}_k^{\prime}{\mathbf a}(\hat{\omega}^{\prime}_k)$, which has been studied in Sec. \ref{freqcampunksubSec}. We still use the Rao detector to decide whether the $L$th target is present or not. If the $L$th target is absent, the GNOMP stops. Otherwise, we follow the steps described in Subsection \ref{singletarget} to obtain the refined estimates of the frequency and the amplitude of the $L$th target.

Now we have detected $L$ mixtures of sinusoids. Let ${\mathcal P} = \{({x}_k,\omega_k),k=1,\cdots,L\}$ denote the set in which the sinusoids have been detected. Then the block coordinate descent (BCD) is applied to refine all the amplitudes frequencies pairs, which amounts to solve
\begin{align}
\underset{\{\omega_k\}_{k=1}^L,{\mathbf x}}{\operatorname{maximize}}~l({\mathbf y};\{\omega_k\}_{k=1}^L,{\mathbf x}),
\end{align}
where $l({\mathbf y};\{\omega_k\}_{k=1}^L,{\mathbf x})$ denotes the loglikelihood whose form is similar to (\ref{loglikelihoodly}). The BCD proceeds as follows: For the $l$th sinusoid, the amplitudes frequencies pairs of the other sinusoids are fixed, and the amplitude frequency pair are optimized via AM similar to Subsection \ref{singletarget} does. After optimizing the $l$th sinusoid, we begin to optimize the other sinusoid in a cyclic way.

Once all the amplitudes-frequencies pairs have been updated and put into the list ${\mathcal P}^{\prime} = \{(\hat{x}_k^{\prime},\hat{\omega}^{\prime}_k),k=1,\cdots,L\}$, we reestimate all the amplitudes of the frequencies by fixing the frequencies to further improve the estimation accuracy, i.e.,
\begin{align}\label{refineallamps}
\underset{{\mathbf x}}{\operatorname{maximize}}~l({\mathbf y};\{\omega_k^{\prime}\}_{k=1}^L,{\mathbf x}).
\end{align}
The gradient and Hessian of $l({\mathbf y};\{\omega_k^{\prime}\}_{k=1}^L,{\mathbf x})$ (\ref{refineallamps}) with respect to $[\Re\{{\mathbf x}\};\Im\{{\mathbf x}\}]$ are calculated and the Newton method can be applied to refine the amplitudes' estimates. Note that all the above steps are accepted provided that these steps improve the loglikelihood.

We have found an interesting phenomenon in the numerical experiments. In the two-sinudoid coexistence scenario where one sinusoid's integrated SNR (defined later in Sec. \ref{NS}) is very large such as $60$ dB, we first detect this strong signal. Then we will detect a spurious component whose amplitude is small. Next we detect the  second sinusoid and its amplitude is stronger than that of the spurious component. We redo the CFAR detection for the spurious component and its test does not exceed the threshold. Therefore, we add a Spurious Component Suppression Step in GNOMP summarized in Algorithm \ref{Alg1}. It is worth noting that the cardinality of ${\mathcal P}_m$ is not always equal to $m$, i.e., $|{\mathcal P}_m|\neq m$ in some scenarios due to this step. Further details about GNOMP could be referred to the NOMP algorithm \cite{Madhow16TSP}.

Note that we have used the CFAR criterion to stop the GNOMP algorithm. We could use the one-bit Bayesian information criterion (1bBIC) proposed in \cite{LiJian18SPL} to select the model order $\hat{K}$ that minimizes the 1bBIC cost function
\begin{align}
{\rm 1bBIC}(\hat{K})=-2\ln p({\mathbf y},\hat{\boldsymbol\omega},\hat{\mathbf x})+5\hat{K}\ln N.
\end{align}
Still, we emphasize that using the Rao test (which can be implemented via FFT and IFFT) instead of 1bBIC reduces the computation complexity significantly.
\begin{algorithm}[ht!]
\caption{GNOMP.}\label{Alg1}
1: \textbf{Procedure} EXTRACTSPECTRUM $({{\mathbf y}, \tau_{\rm th}}):$\\
2: $m\leftarrow 0$, ${\mathcal P}_0 = \{\}$, ${\boldsymbol\zeta}_0=0$\\
3: $\textbf{while}$ {$\max\limits_{\omega\in \Omega_{\rm {DFT}}}T_{\rm R}({\boldsymbol{\zeta}}_B,{\boldsymbol \zeta_m},\omega)>\tau_{\rm th}$} $\textbf{do}$\\
4: \text{\quad}$m\leftarrow m+1$\\
5: \text{\quad}{\scshape Identify}\\
\text{\quad}\text{\quad}$\hat{\omega} = \arg\max\limits_{\omega \in {\Omega}_{\rm{os}}} T_{\rm R,B}({\boldsymbol{\zeta}}_B,{\boldsymbol \zeta_{m-1}},\omega)$\\
\text{\quad}\text{\quad}and its corresponding ${x}$ vector estimate \\
\text{\quad}\text{\quad}$\hat{x} = \arg\max\limits_{x} l({\mathbf y},{\boldsymbol \zeta}_{m-1};\hat{\omega} ,{x}) $\\
7: \text{\quad}{\scshape Single Refinement}: Refine $(\hat {x}, \hat \omega)$ using single frequency Newton update algorithm (${R}_s$ Newton \\
\text{\quad}\text{\quad~}steps) to obtain improved estimates $(\hat {x}', \hat {\omega}')$ and set ${\mathcal P}'_m={\mathcal P}_{m-1}\cup\{(\hat {x}', \hat {\omega}')\}$.\\
8: \text{\quad}${\boldsymbol \zeta}_{m}'={\boldsymbol \zeta}_{m-1}+ {\mathbf a}({\hat \omega}'){\hat {x}}'$\\
9: \text{\quad}{\scshape Cyclic Refinement}: Refine parameters in ${\mathcal P}'_m$ one at a time: For each $(x_l, \omega_l)$ inside ${\mathcal P}'_m$, we\\
\text{\quad}\text{\quad~}treat ${\boldsymbol \zeta}_l={\boldsymbol \zeta}_{m}'-{\mathbf a}({\omega_l}){{x}_l}$ as the pseudo thresholds ${\boldsymbol \zeta}$ and apply single frequency Newton
update\\
\text{\quad}\text{\quad~}algorithm. We perform ${R}_c$ rounds of cyclic refinements. Let ${\mathcal P}''_m$ denote the new set of parameters.\\
10: \text{~~}{\scshape Update} all ${\mathbf x}$ vector estimate in ${\mathcal P}''_m$ via solving (\ref{refineallamps}), i.e., ${\mathbf x}'' = \underset{{\mathbf x}}{\operatorname{argmax}}~l({\mathbf y};\{\omega_k\}_{k=1}^m,{\mathbf x})$. \\
11: \text{~~}{\scshape Spurious Component Suppression}\\
\text{\quad}\text{\quad\quad}\textbf{if}\text{~} $m>2$ and $|x_{m-1}|<0.5|x_{m}|$ \text{\quad} Let ${\mathcal P}'''_m$ denote the new set of parameters \\
 \text{\qquad\quad\quad     }\text{~}and  set ${\boldsymbol \zeta}_m = \sum\limits_{k\in {\mathcal P}'''_m}{\mathbf a }(\omega_k){x}_k$. Let ${\boldsymbol \zeta}_r= {\boldsymbol \zeta}_m-{\mathbf a}({\omega_{m-1}})x_{m-1}$\\
        \text{\qquad\quad\quad     }\text{~}\textbf{if} $T_{\rm R}({\mathbf y},{\boldsymbol \zeta_r},\omega_{m-1})>\tau$\text{\quad} Keep the $(m-1)$th sinusoid into the List ${\mathcal P}'''_m$.\\
        \text{\qquad\quad\quad     }\text{~}\textbf{else}\text{\quad} Remove the $(m-1)$th sinusoid from the List ${\mathcal P}'''_m$. Run {\scshape Cyclic Refinement} and  \\
  \text{\qquad}\text{\qquad~~\quad}\text{~}      \text{\quad}{\scshape Update} steps to optimize parameters in ${\mathcal P}'''_m$. \\
        \text{\qquad\quad\quad     }\text{~}\textbf{end}\\
\text{\quad}\text{\quad\quad}\textbf{end}\\
12: \text{~~}Let ${\mathcal P}_m$ denote the new set of parameters and set ${\boldsymbol \zeta}_m=\sum\limits_{(x_k,\omega_k)\in{\mathcal {P}_m}}{\mathbf a }(\omega_k){x}_k$.\\
13: \textbf{return} ${\boldsymbol \zeta}_{m}$ and ${\mathcal P}_m$.
\end{algorithm}

The computation complexity of GNOMP, assuming it runs for exactly $K$ iterations, can be analyzed as follows:
\begin{itemize}
	\item {\bf Detection Step}: The computation cost of the Detection step can be efficiently implemented via FFT and IFFT with a cost of $KN\log N$.
	\item {\bf Identify Step}: Involves obtaining the initial point of $\omega$, which can also be efficiently implemented via FFT and IFFT with a cost of $K\gamma_{\rm os}N\log(\gamma_{\rm os}N)$, where $\gamma_{\rm os}$ is the oversampling factor.
	\item {\bf $x\in {\mathbb C}$ Vector Estimate}:  Consists of calculating the gradient, Hessian matrix, and the Newton step with costs of $O(\text{Iter}\times KN)$, $O(\text{Iter}\times KN)$, and $O(\text{Iter}\times K)$, respectively, where $\text{Iter}$ denotes the number of Newton steps (usually about $2\sim 5$).
	\item {\bf Single Refinement Step}:  Takes only $O(\text{Iter}\times R_sN)$ operations per sinusoid, where $R_s$ is the number of single refinement steps (default is 1), and the total cost is $O(\text{Iter}\times R_sKN)$.
	\item {\bf Cyclic Refinement Step}: Involves refining all frequencies with an overall complexity of $O(\text{Iter}\times R_cR_sK^2N)$, where $R_c$ is the number of cyclic refinement steps (default is 3).
	\item {\bf Update Step}: Involves computing the gradient, Hessian matrix, and the Newton step of the amplitudes of the $K$ sinusoids. The costs are $O(\text{Iter}\times KN)$, $O(\text{Iter}\times K^2N)$, and $O(\text{Iter}\times (K^3+K^2))$ per outer iteration. The overall cost of the Update step is $O(\text{Iter}\times (K^2N+K^3N+K^4+K^3))$.
\end{itemize}

It's noteworthy that the computation complexities of 1bMMRELAX and MVALSE-EP are $O(KN^2)$ and $O(N^2+\text{Iter}\times NK^3)$ with both scaling with $N^2$ \cite{LiJian19TSP}, while GNOMP scales with $N$, resulting in a lower computation complexity.

We provide an example illustrating the results of GNOMP during iterations. The parameters are set as follows: $K=2$, $\omega_1=2.2$, $\omega_2=2.4$, $x_1=-1.505-0.497\rm{j}$, $x_2=-0.164-0.609\rm{j}$, the time domain SNRs are $4$ dB and $-4$ dB, respectively, $N=128$, $P_{\rm FA}=0.01$. The threshold can be calculated as $\tau_{\rm th}=25.5$ dB. Results are shown in Fig. \ref{Demo_GNOMP1} and Fig. \ref{Demo_GNOMP2} for $B=1$ and $B=2$, respectively. The upper graph in Fig. \ref{Demo_1iter_B_1} shows that the Rao test $41.4$ dB exceeds the threshold $25.5$ dB, and the alternative hypothesis is supported. The lower graph in Fig. \ref{Demo_1iter_B_1} provides an initial frequency estimate $2.197$, which is close to the true frequency $\omega_1=2.2$ corresponding to the strongest target. The Newton method and the AM method is used to refine the frequency and amplitude estimates. Taking the first estimated signal into account, we perform target detection in the 2nd iteration and results are shown in Fig. \ref{Demo_2iter_B_1}. It can be seen that the Rao test is $28.8$ dB, exceeding the threshold about $3.3$ dB, thus the second target is detected and the initial frequency estimate is $2.405$, close to the true frequency of the second target. After taking the two estimated targets into consideration, the Rao test is $24$ dB, smaller than the threshold, no target is detected and the GNOMP stops. In addition, we could also evaluate the theoretical $\rm{SNR}_{\rm {Loss}}$ of the two targets as $2.58$ dB and $5.21$ dB by treating the other signal as a perfectly known threshold, and we obtain the effective integrate SNRs of the two targets as $4+10\log 128-2.58=22.49$ dB and $-4+10\log 128-5.21=11.86$ dB, respectively. After estimating the frequency and amplitude of the two sinusoids as $\hat\omega_1=2.198$, $\hat\omega_2=2.401$ and $\hat x_1=-1.506-0.680\rm{j}$, $\hat x_2=-0.333-0.591\rm{j}$, we obtain the estimated SNR loss of the two targets as $\rm{SNR}^\prime_{\rm {Loss,1}}=2.67$ dB and $\rm{SNR}^\prime_{\rm {Loss,2}}=5.42$ dB, which are close to the theoretical SNR loss.

For $B=2$, results are shown in Fig. \ref{Demo_GNOMP2}. It can be seen that the phenomena are similar and the results are omitted here.

\begin{figure*}[htbp]
	\centering
	\subfigure[The 1st iteration]{
		\label{Demo_1iter_B_1}
		\includegraphics[width=5cm]{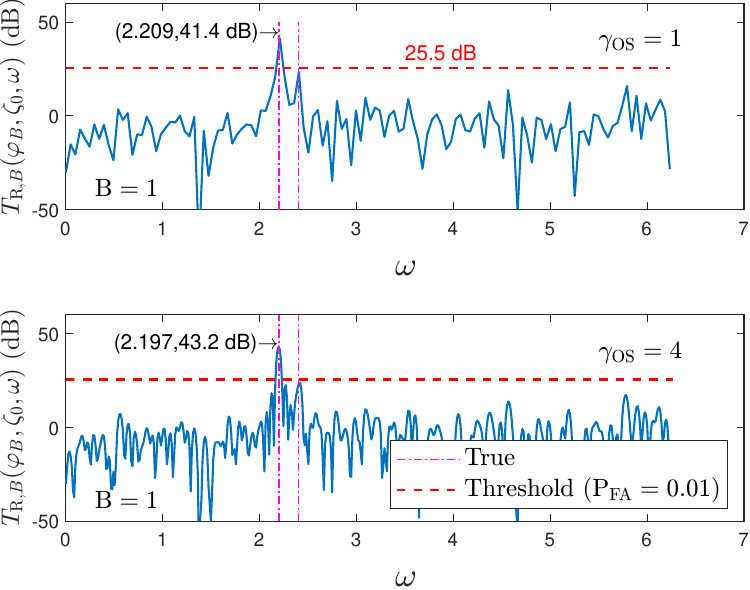}}
		\subfigure[The 2nd iteration]{
		\label{Demo_2iter_B_1}
		\includegraphics[width=5cm]{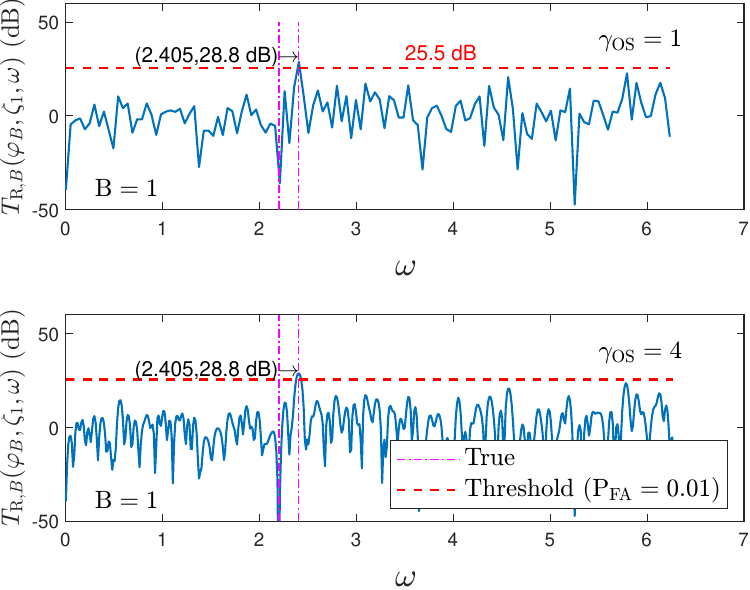}}
		\subfigure[The 3rd iteration]{
		\label{Demo_3iter_B_1}
		\includegraphics[width=5cm]{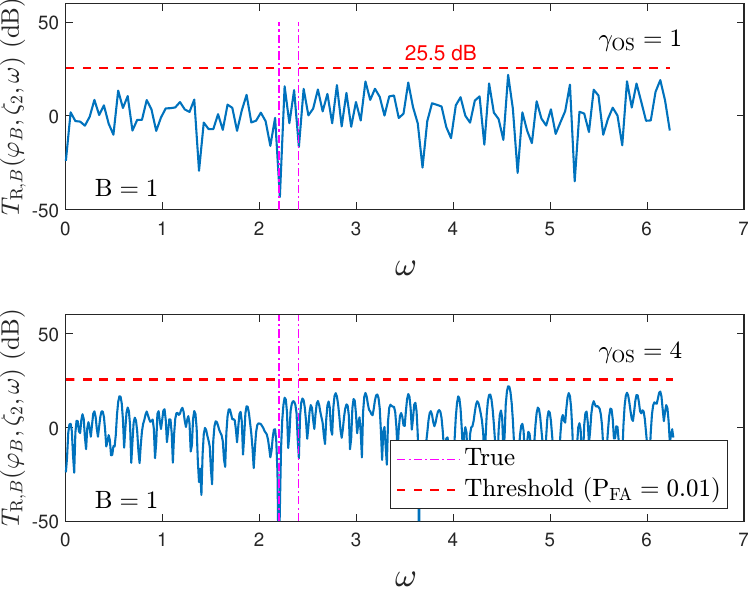}}
	\caption{The detection results of GNOMP under $B=1$.}
	\label{Demo_GNOMP1}
\end{figure*}
\begin{figure*}[htbp]
	\centering
	\subfigure[The 1st iteration]{
		\label{Demo_1iter_B_2}
		\includegraphics[width=5cm]{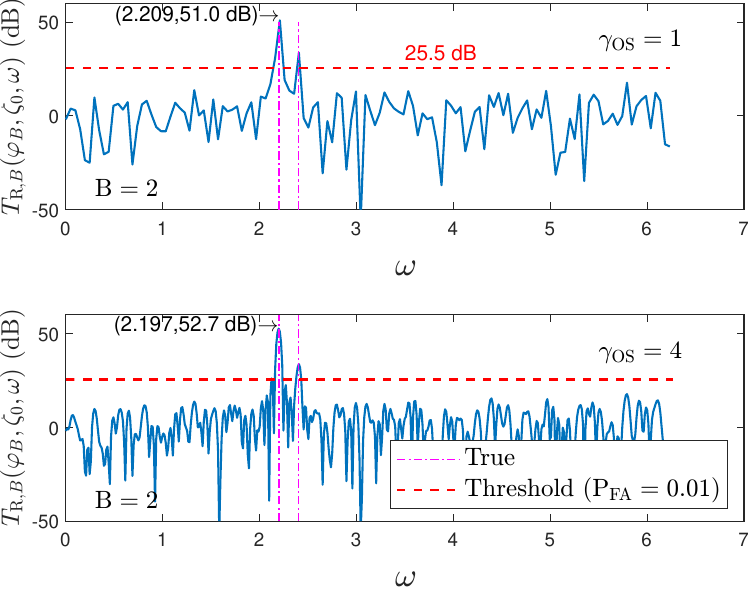}}\subfigure[The 2nd iteration]{
		\label{Demo_2iter_B_2}
		\includegraphics[width=5cm]{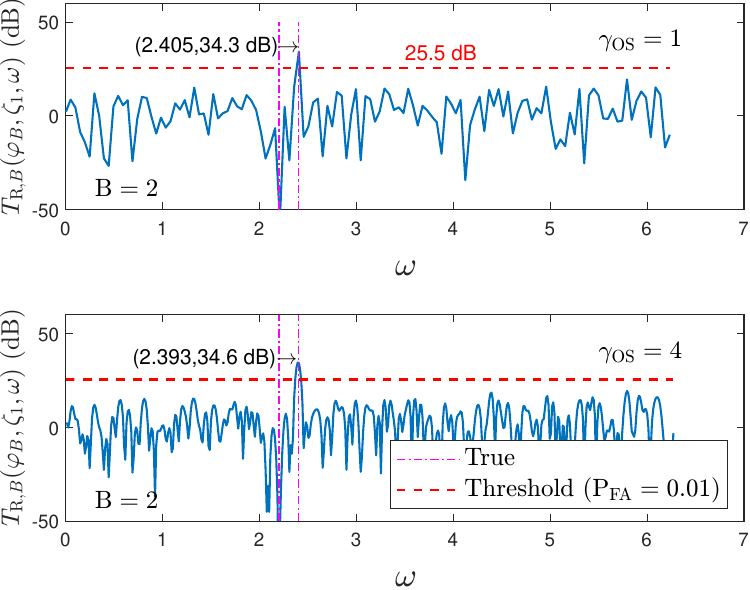}}\subfigure[The 3rd iteration]{
		\label{Demo_3iter_B_2}
		\includegraphics[width=5cm]{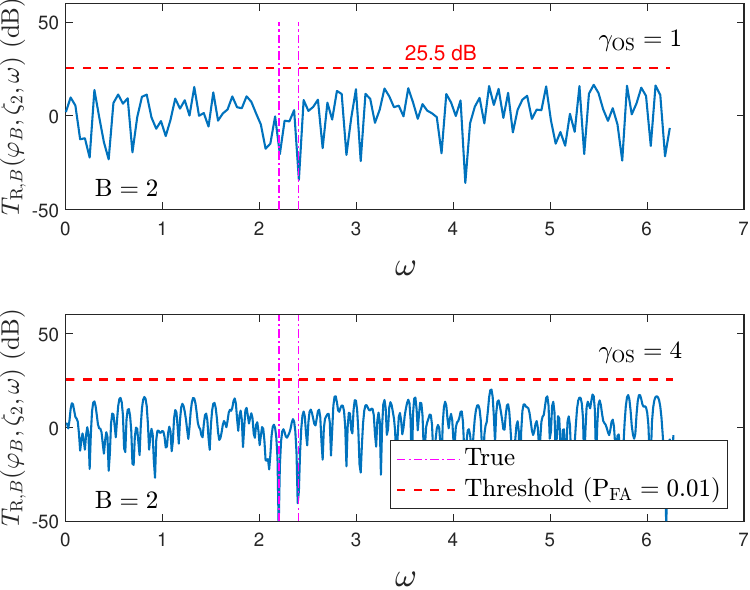}}
	\caption{The detection results of GNOMP under $B=2$.}
	\label{Demo_GNOMP2}
\end{figure*}
\section{Numerical Simulation}\label{NS}
The quantizer is designed with $\gamma$ set as $\gamma = \max(|x_1|,\cdots,|x_k|,\cdots,|x_K|,3\sigma/{\sqrt{2}})$, where $|x_k|$ denotes the magnitude of the $k$th sinusoidal signal\footnote{The design of the quantizer's maximum full-scale range is important for optimization of the estimation and detection performance. However, fine-tuning the quantizer for improved performance goes beyond the scope of this paper. Interested readers can refer to \cite{JianLiquant, guanyudet, JiangzhuTSP15} for further discussion on this topic.}. The choice of $\gamma$ for the quantizer's maximum full-scale range suggests that when all signal amplitudes are weaker than the noise standard deviation, the noise variance is used to design $\gamma$. In contrast, if any signal amplitude is stronger than the noise standard deviation, we design $\gamma$ based on the magnitude of the strongest signal. Additionally, the time domain SNR $10\log(|x|^2/\sigma^2)$ and the integrated SNR $10\log(N|x|^2/\sigma^2)$  are employed. The integrated SNR is $10\log N$ dB higher than the time domain SNR. Typically, it is considered that a signal can be reliably detected if its integrated SNR is greater than about $12$ dB based on unquantized measurements in AWGN environments. For unquantized measurements, the NOMP is run, but the notation GNOMP ($B=\infty$) is used instead. The false alarm probability or rate  is set as $\tilde{\rm P}_{{\rm FA},B}=0.01$ unless stated otherwise.

\subsection{Validate the Estimation Performance In a Single Sinusoid Scenario with Nonidentical Thresholds}
For the initial numerical simulation, we aim to validate the theoretical results of single signal estimation with nonidentical thresholds, as established in Section \ref{Signalsignal}. The nonidentical thresholds $\boldsymbol\zeta$ are set as $\boldsymbol\zeta={\mathbf a}(\omega_1)x_1$. We set $x_1=-0.96-1.75{\rm j}$, and the time domain SNR is approximately $6$ dB, with $\omega_1 = 0.15$. We consider two scenarios:
\begin{itemize}
	\item Scenario 1: Weak signal scenario, $x=-0.27+0.29{\rm j}$, and the time domain SNR is $-8$ dB.
	\item Scenario 2: Strong signal scenario, $x=-0.68+0.73{\rm j}$, and the time domain SNR is about $0$ dB.
\end{itemize}
The parameters are set as follows: $\omega=2.34$, $\sigma^2=1$. We assess the MSE and the CRB of the amplitude of the signal with known thresholds $\boldsymbol\zeta$ for the case where the frequency is unknown. The results are depicted in Fig. \ref{Sin_MSE_fig}. It can be observed that as the number of measurements $N$ increases, the algorithms asymptotically approach the CRBs.
\begin{figure}[htbp]
	\centering
	\subfigure[Scenario 1: Time domain ${\rm SNR}=-8$ dB]{
		\label{MSE_scenario_1}
		\includegraphics[width=2.6in]{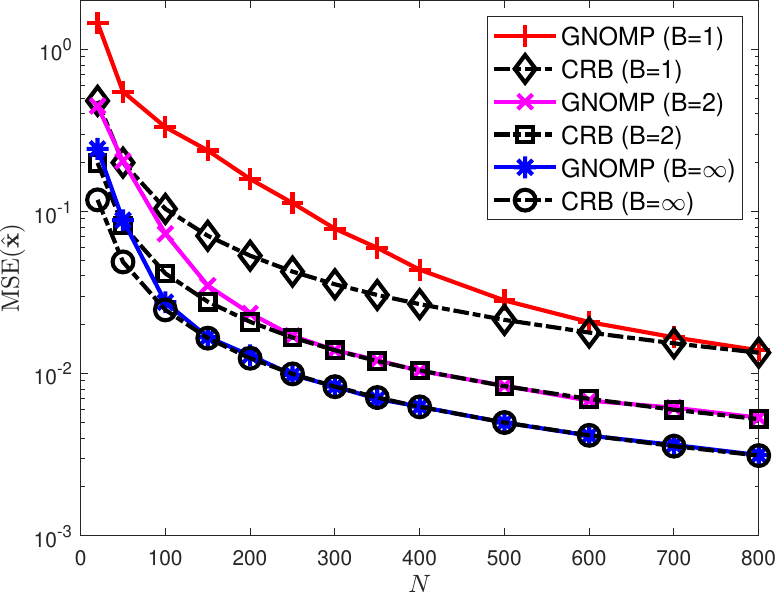}}\subfigure[Scenario 2: Time domain ${\rm SNR}=0$ dB]{
		\label{MSE_scenario_2}
		\includegraphics[width=2.6in]{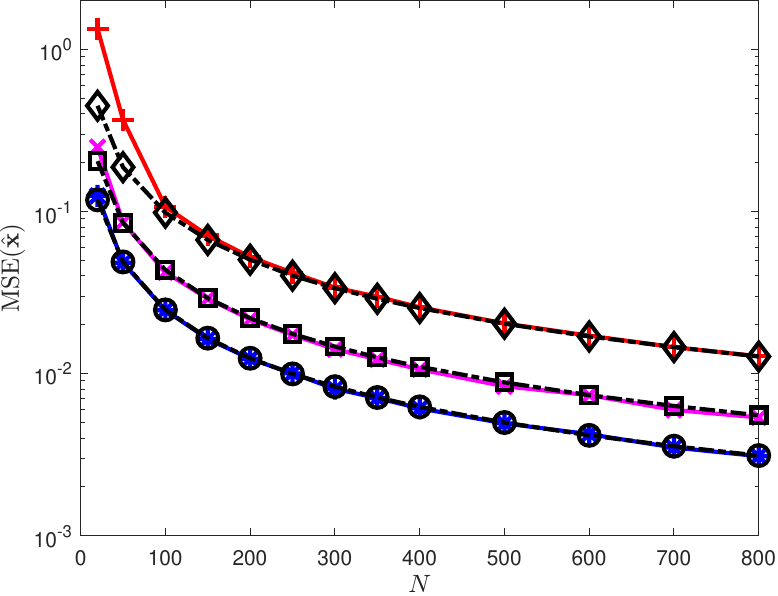}}
	\caption{Measured MSEs and CRBs of amplitude versus the number of measurements $N$ averaged over $5000$ Monte Carlo (MC) trials.}
	\label{Sin_MSE_fig}
\end{figure}

\subsection{Validate the Detection Performance In a Single Sinusoid Scenario with Nonidentical Thresholds}
We set $N=1024$. The threshold ${\boldsymbol \zeta}$ is defined as ${\boldsymbol \zeta}={\mathbf a}(\omega_1)x_1$, where $x_1=2$ and the time domain SNR is $6$ dB. We use two slightly different frequencies to generate $\boldsymbol \zeta$.
\begin{itemize}
	\item Scenario 1: $\omega_1 = \frac{\pi}{2}$, and the SNR losses under 1 bit, 2 bits, and 3 bits quantization are $4.8$ dB, $2.3$ dB, $1$ dB according to (\ref{SNRloss}).
	\item Scenario 2: $\omega_1 = \frac{\pi}{2}+0.1$, and the SNR losses under 1 bit, 2 bits, and 3 bits quantization are $6.4$ dB, $2.2$ dB, $0.9$ dB, according to (\ref{SNRloss}).
\end{itemize}
The detection probability versus the integrated SNR of the target signal is illustrated in Fig. \ref{Sin_P_D_figure}. The measured detection probability closely aligns with the theoretical detection probability, affirming the accuracy of the analysis. As shown in Fig. \ref{Sin_P_D_Scenario_1}, for a detection probability ${\rm P}_{\rm D}=0.5$, the integrated SNRs of 1 bit, 2 bits, 3 bits, and $\infty$ bits quantization are $17.2$ dB, $14.7$ dB, $13.3$ dB, $12.4$ dB, respectively. Therefore, the SNR losses of 1 bit quantization, 2 bits quantization, and 3 bits quantization compared to unquantized measurements are about $17.2-12.4=4.8$ dB, $14.7-12.4=2.3$ dB, and $13.3-12.4=0.9$ dB in Scenario 1. As shown in Fig. \ref{Sin_P_D_Scenario_2}, similarly, the SNR losses under 1 bit, 2 bits, and 3 bits quantization compared to unquantized measurements are about $18.7-12.4=6.3$ dB, $14.6-12.4=2.2$ dB, and $13.2-12.4=0.8$ dB in Scenario 2. These simulation results are consistent with the theoretical analysis.
\begin{figure}[htbp]
	\centering
	\subfigure[$\omega_1=\frac{\pi}{2}$]{
		\label{Sin_P_D_Scenario_1}
		\includegraphics[width=2.6in]{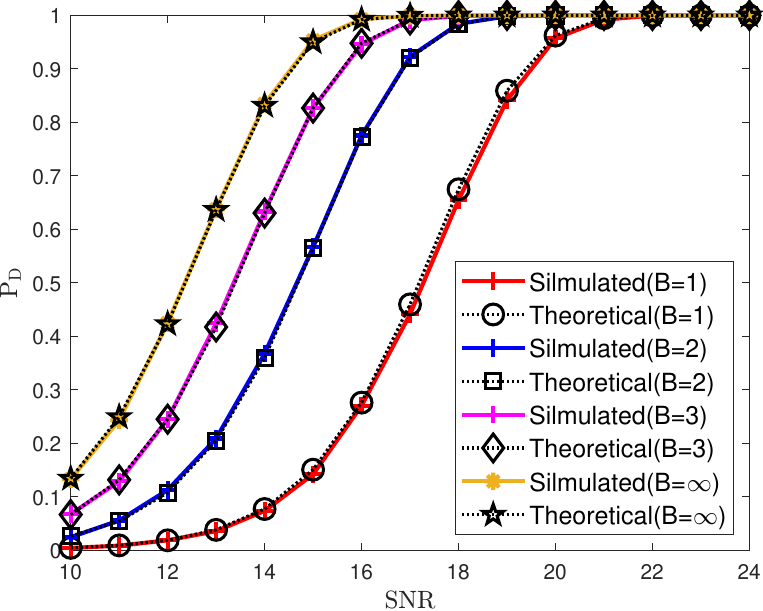}}
	\subfigure[$\omega_1=\frac{\pi}{2}+0.1$]{
		\label{Sin_P_D_Scenario_2}
		\includegraphics[width=2.6in]{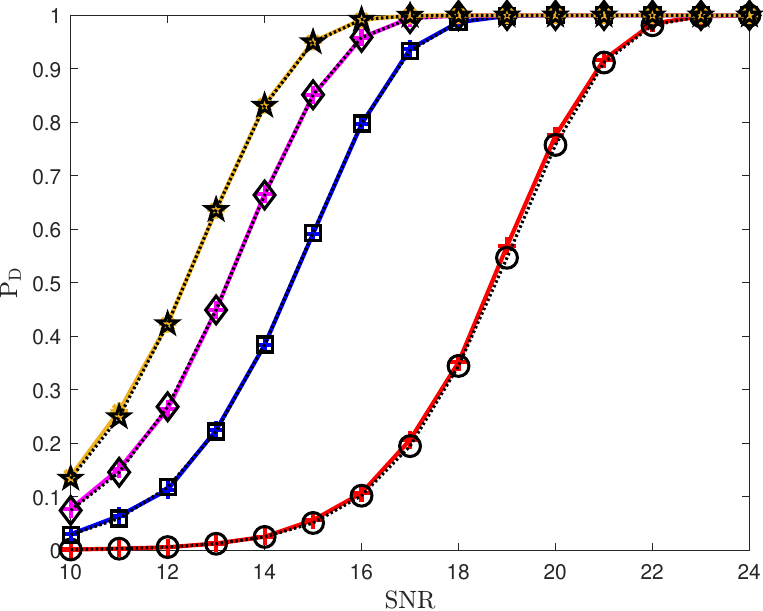}}
	\caption{Measured and computed probability of detection versus the integrated SNR averaged over $10^4$ MC trials.}
	\label{Sin_P_D_figure}
\end{figure}
\subsection{Validate the CFAR Property}
We evaluate the performance of GNOMP using two criteria: the overestimation probability ${\rm P}_{\rm OE}$ and the false alarm probability ${\rm P}_{\rm FA}$. An overestimation occurs when GNOMP overestimates the model order $K$, and a false alarm occurs when the minimum wrap-around distance between a given estimated frequency and all the true frequencies exceeds $\pi/N$. All $K$ targets have identical integrated SNRs ${\rm SNR}$ and their frequencies satisfy the minimum frequency separation $\Delta\omega_{\rm min}=2.5\Delta_{\rm DFT}$.
%
The ``measured'' overestimation and the ``measured'' false alarm probability versus the "nominal" false alarm rates under different bit-depths and integrated SNRs are shown in Fig. \ref{MUL_P_FA_OE_20_dB_figure}. Each point in the plot is generated by 300 runs of the GNOMP algorithm for estimating frequencies in a mixture of $K=8$ sinusoids of $\rm{SNR}=20$ dB. As shown in Fig. \ref{MUL_P_FA_OE_20_dB_figure}, both the empirical false alarm probability and overestimation probability closely follow the nominal value under $\rm{SNR}=20$ dB, demonstrating the high estimation accuracy of GNOMP.

\begin{figure}[htbp]
	\centering
	\subfigure[False alarm probability (${\rm SNR}=20$ dB)]{
		\label{MUL_P_FA_20dB_figure}
		\includegraphics[width=2.6in]{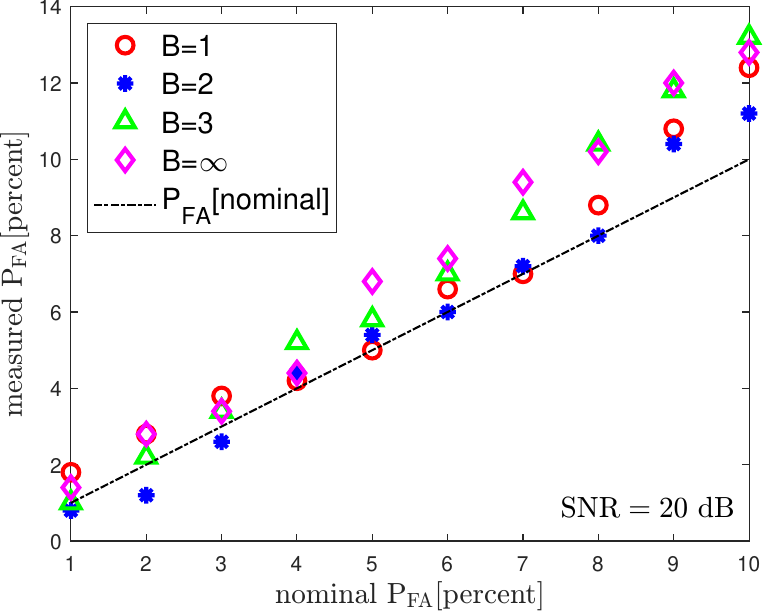}}
	\subfigure[Overestimating probability (${\rm SNR}=20$ dB)]{
		\label{MUL_P_OE_20dB_figure}
		\includegraphics[width=2.6in]{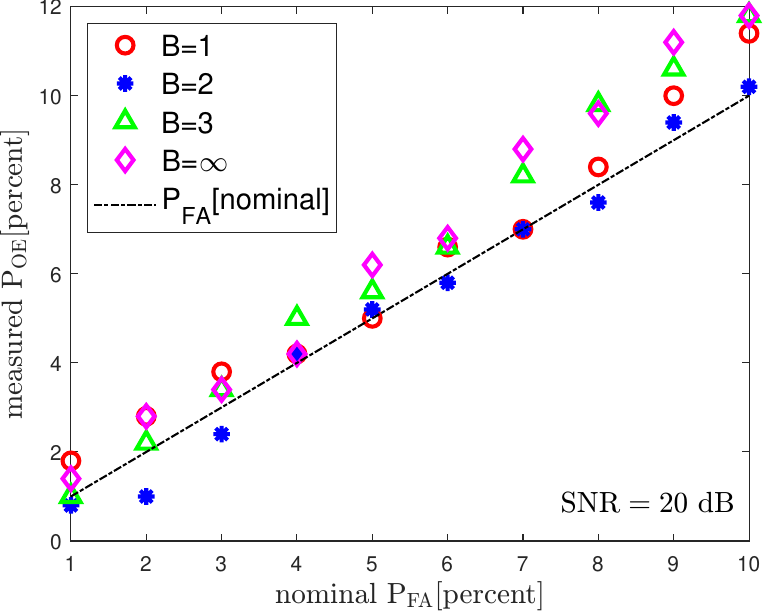}}
	\caption{The false alarm probability and overestimating probability versus the nominal false alarm probability, where all $K$ targets have identical integrated SNRs ${\rm SNR}$ and their frequencies satisfy the minimum frequency separation $\Delta\omega_{\rm min}=2.5\Delta_{\rm DFT}$.}
	\label{MUL_P_FA_OE_20_dB_figure}
\end{figure}
\subsection{Dynamic Range}
We consider two signals, and $N=512$. The amplitude of the first signal is stronger than that of the second signal. Define ${\rm SNR}_i$ as the integrated SNR of the $i$th sinusoid, and the DR as ${\rm DR}\triangleq {\rm SNR}_1-{\rm SNR}_2$. We evaluate the detection probability of the weakest signal versus ${\rm SNR}_1$ and ${\rm DR}$ for a given false alarm rate. The detection probability of target 1 (the weaker target) under different bit-depths is shown in Fig. \ref{DR_fig}. It can be seen that the instantaneous DR is about $10$ dB under $1$ bit quantization\footnote{In fact, the performance of FFT-based approaches on two simultaneous signals is investigated in \cite{DWR2016}. It is shown that when two signals are of the same amplitude, the receiver does not report them all the time. The receiver reports both signals only about 24\% of the time. About 76\% of the time, the receiver only reports one signal. Besides, the instantaneous DR of the monobit receiver is about $5$ dB, and the receiver measures the weak signal whose amplitude is $5$ dB weaker than that of the strong signal in 33/1000 trials. Here we show that our nonlinear approach performs significantly better than that of the FFT-based linear approach.}. For $2$ bits and $3$ bits quantization, the DRs are about $22$ dB and $30$ dB. This demonstrates that the proposed GNOMP enlarges the DR compared to the FFT-based linear approach.
\begin{figure*}[htbp]
	\centering
	\subfigure[B=1]{
		\label{DR_1bit_fig}
		\includegraphics[width=5cm]{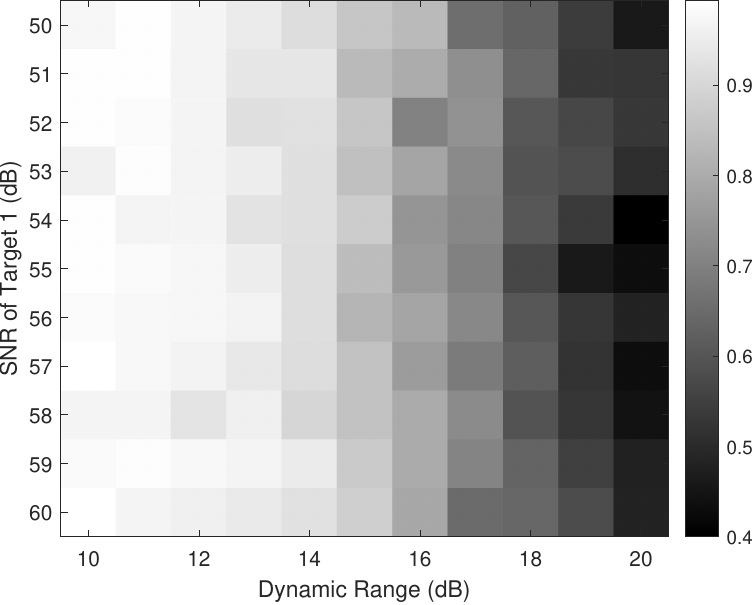}}\subfigure[B=2]{
		\label{DR_2bit_fig}
		\includegraphics[width=5cm]{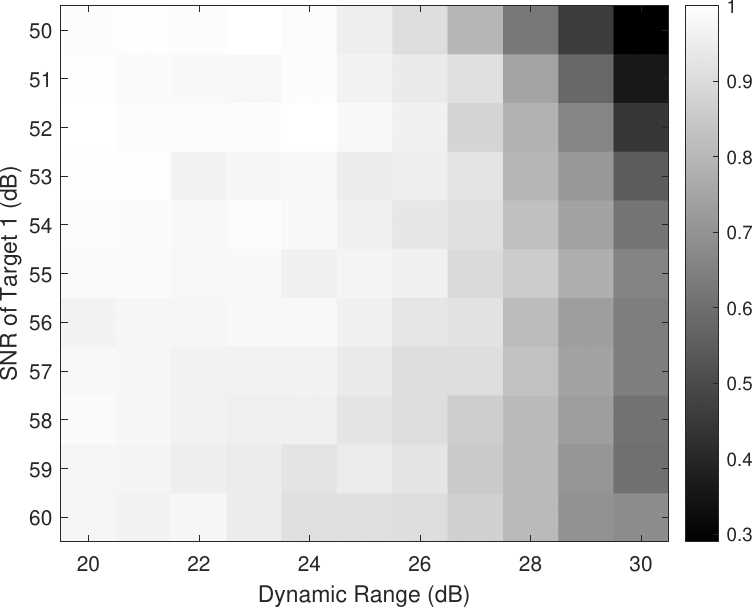}}\subfigure[B=3]{
		\label{DR_3bit_fig}
		\includegraphics[width=5cm]{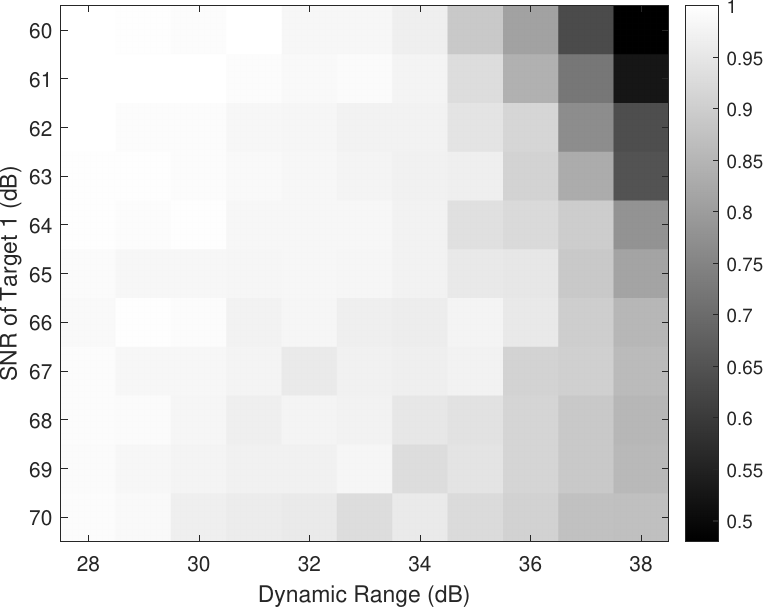}}
	\caption{Instantaneous DR under different bit-depth computed over $300$ MC trials.}
	\label{DR_fig}
\end{figure*}

\subsection{The MSE Performance of the Frequency}
In Fig. \ref{MUL_MSE_fig}, we plot the ``measured'' MSE and CRB versus the integrated SNR under different bit-depths. Each point is generated by 300 runs of the GNOMP algorithm for estimating frequencies in a mixture of $K=8$ sinusoids, with identical integrated SNRs ranging from $14$ dB to $40$ dB, and $N=512$. It can be seen that as SNR increases from $14$ dB to about $34$ dB, the GNOMP asymptotically approaches the CRB under 1-3 bit quantization. As SNR increases further, the GNOMP deviates away from the CRB, except in the no quantization situation.
\begin{figure}[htbp]
	\centering
	\includegraphics[width=4in]{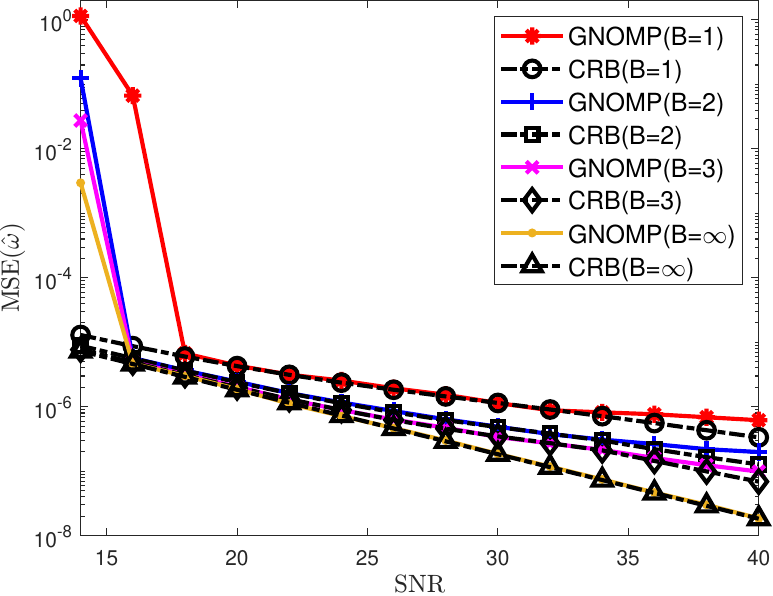}
	\caption{The measured MSE and CRB of frequencies versus the integrated SNRs under different bit-depth.}
	\label{MUL_MSE_fig}
\end{figure}

\subsection{The Detection Probability versus the SNR}\label{DetvsSNR}
We generate $8$ sinusoids, where $7$ of them have an identical integrated ${\rm SNR}= 30$ dB and $N=512$. The integrated SNR of the remaining sinusoid increases from $10$ dB to $26$ dB. The measured false alarm probability, the measured overestimating probability, and the detection probability of the remaining target are shown in Fig. \ref{MUL_Sin_P_FA_OE_D_figure}, where each point is generated by 1000 MC trials. It can be observed that the measured false alarm rate is close to the nominal false alarm rate, and the overestimating probability tends to be lower than the nominal false alarm rate. Moreover, the measured detection probability is close to the oracle detection probability, which is computed through (\ref{unknownfreq}) with $\boldsymbol\zeta=\sum_{i=1}^7x_i{\mathbf a}(\omega_i)$ excluding the weakest sinusoid, demonstrating the excellent performance of GNOMP.
\begin{figure*}[htbp]
	\centering
	\subfigure[False Alarm Rate]{
		\label{MUL_Sin_P_FA_figure}
		\includegraphics[width=5.2cm]{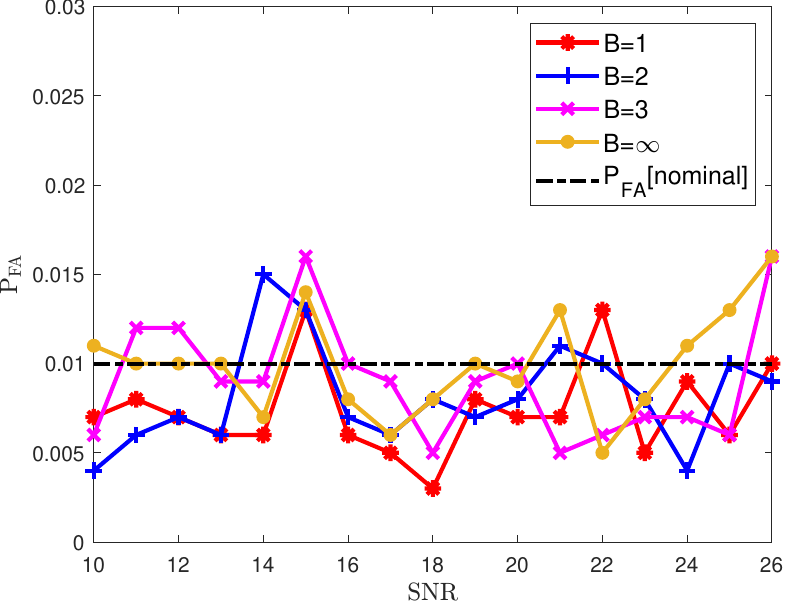}}
	\subfigure[Overestimating Probability]{
		\label{MUL_Sin_P_OE_figure}
		\includegraphics[width=5.2cm]{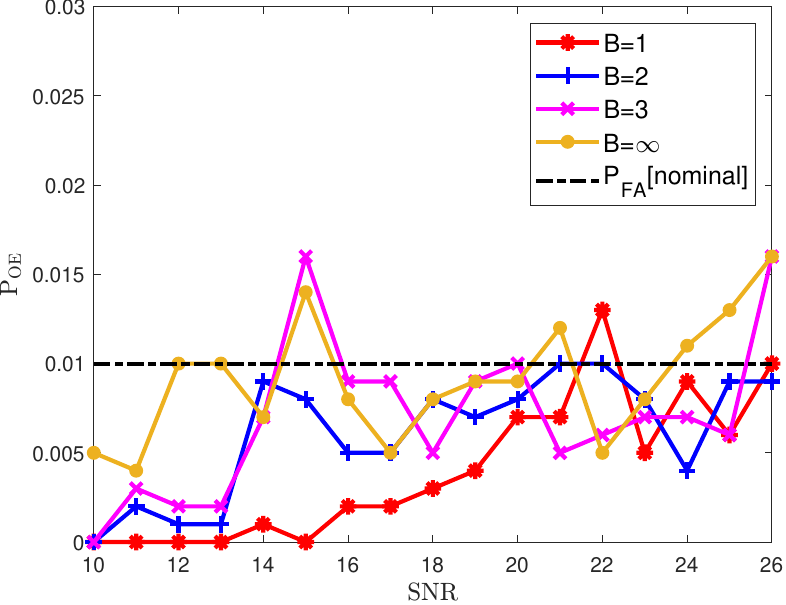}}
	\subfigure[Detection Probability]{
		\label{MUL_Sin_P_D_figure}
		\includegraphics[width=5.2cm]{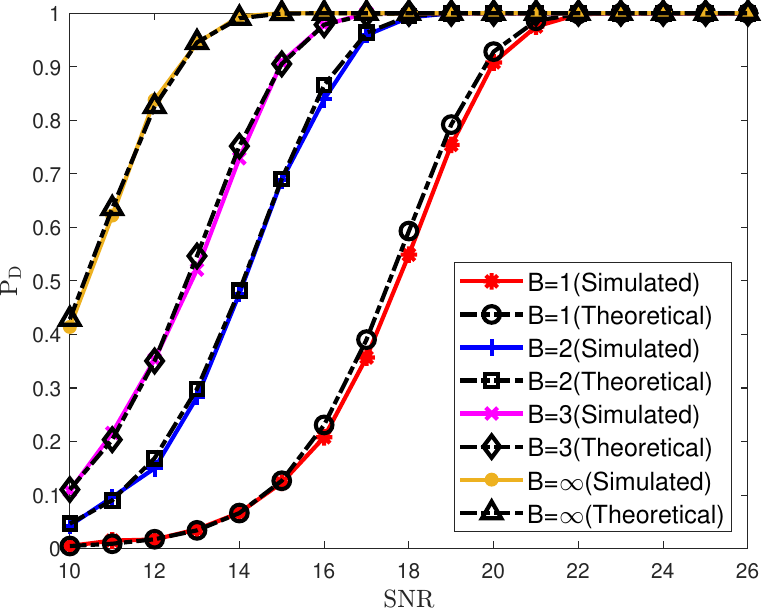}}
	\caption{The false alarm rate, the overestimating probability, the measured and theoretical detection probability of the remaining target versus the integrated SNR.}
	\label{MUL_Sin_P_FA_OE_D_figure}
\end{figure*}
\subsection{The Detection Probability versus the SNR from Signed Measurements with A Time-Varying Threshold}
The false alarm probability, overestimating probability, and detection probability versus SNR are investigated using signed measurements with a time-varying threshold. The thresholds used to obtain the signed measurements are randomly generated from a discrete set of 8 values uniformly distributed over $[-1,1]$. The number of sinusoids is $8$, with $7$ of them having the identical integrated ${\rm SNR}= 25$ dB. The integrated SNR of the remaining sinusoid increases from $14$ dB to $24$ dB. The noise variance is $\sigma^2=1$, and the results are averaged over $500$ MC trials. The GNOMP is compared with the MVALSE-EP and 1bMMRELAX, where the GNOMP and the MVALSE-EP are modified to address the LSE\&D in this setting. It is worth noting that both MVALSE-EP and 1bMMRELAX estimate the noise variance, and the proposed GNOMP is more flexible as the false alarm rate can be specified. We set $R_c=3$ for GNOMP and $R_c=4$ for GNOMP ($\sigma^2$ unknown). The results are shown in Fig. \ref{PFAOEDvsSNR}. Fig. \ref{PFAvsSNR} and Fig. \ref{POEvsSNR} show that the false alarm probabilities and overestimating probabilities of 1bMMRELAX, GNOMP (${\rm P}_{\rm FA}=0.01$), and GNOMP (${\rm P}_{\rm FA}=0.01$, $\sigma^2$ unknown) are close to $0.01$, except for GNOMP (${\rm P}_{\rm FA}=0.01$, $\sigma^2$ unknown) at ${\rm SNR}=25$ dB. The false alarm probability and overestimating of GNOMP (${\rm P}_{\rm FA}=0.05$) and GNOMP (${\rm P}_{\rm FA}=0.05$, $\sigma^2$ unknown) are close to the nominal value $0.05$ except GNOMP (${\rm P}_{\rm FA}=0.05$, $\sigma^2$ unknown) at ${\rm SNR}=25$ dB. The reason may be that one of the conditions under which the asymptotic expressions of the Rao detector hold is that the signal is weak. As for the detection probability, GNOMP (${\rm P}_{\rm FA}=0.05$) and GNOMP (${\rm P}_{\rm FA}=0.05$, $\sigma^2$ unknown) are highest and are close to the corresponding theoretical detection probability, followed by GNOMP (${\rm P}_{\rm FA}=0.01$), GNOMP (${\rm P}_{\rm FA}=0.01$, $\sigma^2$ unknown), 1bMMRELAX, MVALSE-EP. For ${\rm P}_{\rm D}=0.5$ and ${\rm P}_{\rm FA}=0.01$, the SNRs required by GNOMP (${\rm P}_{\rm FA}=0.01$), GNOMP (${\rm P}_{\rm FA}=0.01$, $\sigma^2$ unknown), and 1bMMRELAX are about $15.7$ dB, $15.7$ dB, and $16.7$ dB, respectively, demonstrating that GNOMP (${\rm P}_{\rm FA}=0.01$, $\sigma^2$ unknown) has a $1$ dB gain compared to 1bMMRELAX. We also evaluate the running time averaged over $50$ Monte Carlo trials. The running time is obtained on a desktop computer with an Intel(R) Core(TM) i7-12700 4.90 GHz CPU and 32 GB of RAM, running the operating system Microsoft Windows 11. It can be seen that the running time (mean$\pm$3 standard deviation) of GNOMP (${\rm P}_{\rm FA}=0.01$), GNOMP (${\rm P}_{\rm FA}=0.05$), GNOMP (${\rm P}_{\rm FA}=0.01$, $\sigma^2$ unknown), GNOMP (${\rm P}_{\rm FA}=0.05$, $\sigma^2$ unknown), 1bMMRELAX, MVALSE-EP are $1.70\pm 0.25$ sec., $1.74\pm 0.21$ sec., $6.12\pm 1.56$ sec., $6.18\pm 1.38$ sec., $7.20\pm 0.84$ sec., $12.84\pm 0.15$ sec., respectively. This demonstrates the computational efficiency of the proposed GNOMP approach.
\begin{figure*}[htbp]
	\centering
	\subfigure[False Alarm Rate]{
		\label{PFAvsSNR}
		\includegraphics[width=5.2cm]{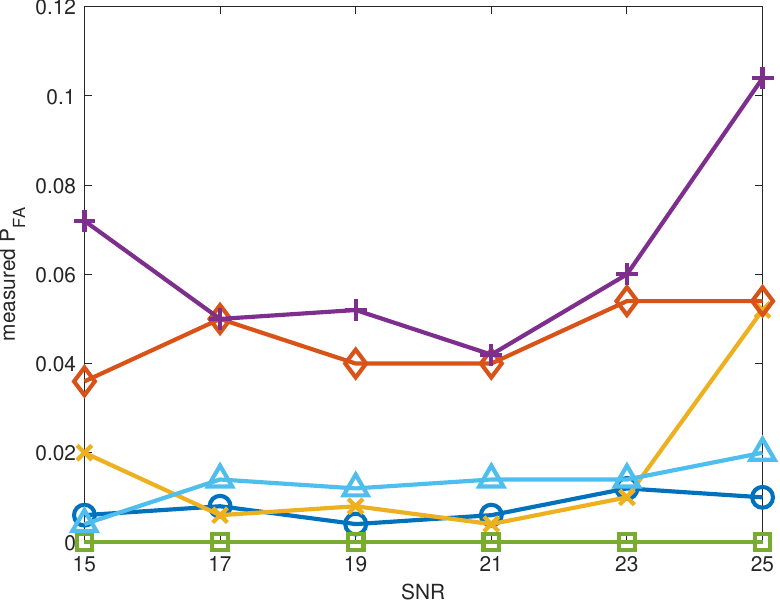}}
	\subfigure[Overestimating Probability]{
		\label{POEvsSNR}
		\includegraphics[width=5.2cm]{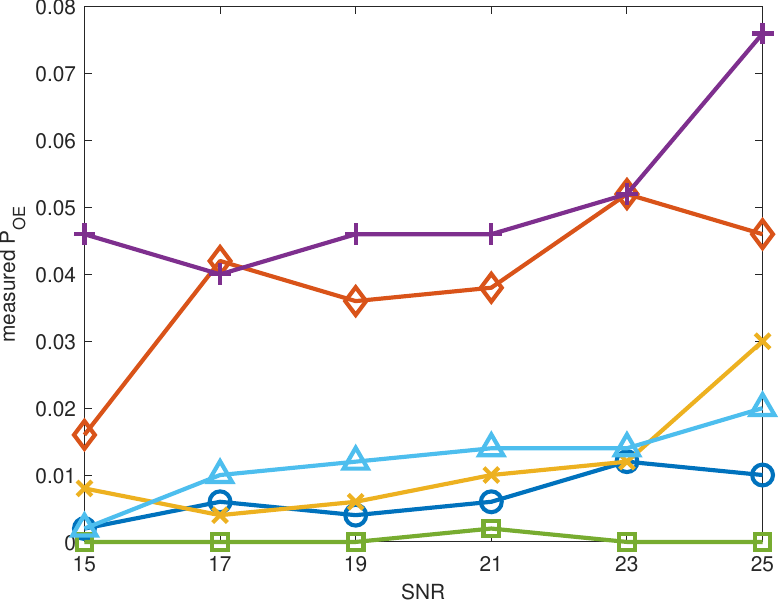}}
	\subfigure[Detection Probability]{
		\label{PDvsSNR}
		\includegraphics[width=5.2cm]{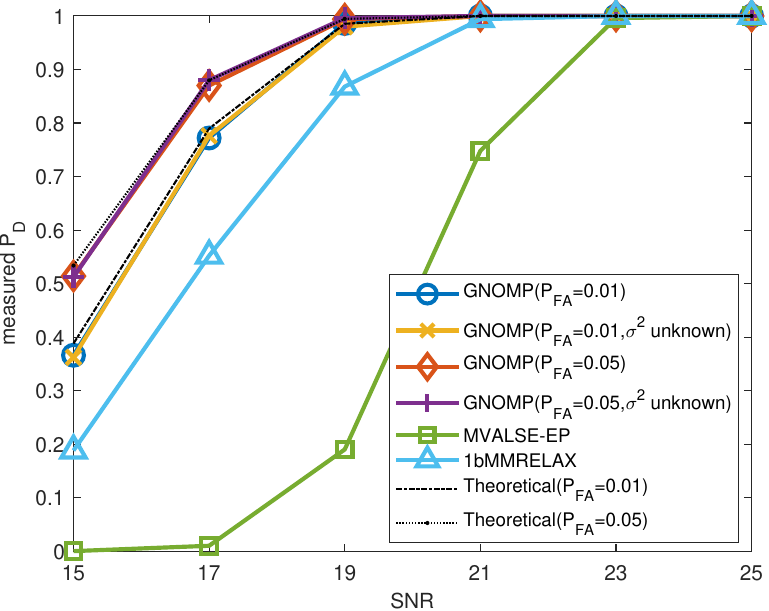}}
	\caption{The false alarm rate, the overestimating probability, the measured and theoretical detection probability of the remaining target versus the integrated SNR from signed measurements with a time-varying threshold.}
	\label{PFAOEDvsSNR}
\end{figure*}
\subsection{LSE\&D from quantized random linear measurements}
The problem can be formulated as ${\mathbf y}=\mathcal{Q}_{\rm C}\left({\boldsymbol \Phi}\left(\sum\limits_{k=1}^K{\mathbf{a}}({\omega }_k){{x}_k}\right)+{\boldsymbol\epsilon}\right)$. By defining $\breve{\mathbf{a}}({\omega }_k)\triangleq{\boldsymbol \Phi}\mathbf{a}({\omega }_k)$, the original model can be equivalently formulated as
\begin{align}\label{y_a_tilde}
   {\mathbf y} =\mathcal{Q}_{\rm C}\left(\sum\limits_{k=1}^K\breve{\mathbf{a}}({\omega }_k){{x}_k}+{\boldsymbol\epsilon}\right).
\end{align}
Consequently, the GNOMP can be applied directly by replacing ${\mathbf a}$ with $\breve{\mathbf{a}}$, and the Rao test $T_{{\rm R},B}(\boldsymbol{\varphi}_B,{\boldsymbol\zeta},\omega)$ in (\ref{generalRao}) can be calculated by replacing ${\mathbf{a}}$ with $\breve{\mathbf{a}}$.
In this way, one could solve and analyze this problem without much difficulty. We find that GNOMP and atomic norm soft-thresholding (AST) \cite{Fu} both perform well when $M>>N$. When $M$ is smaller, AST does not perform well, while GNOMP still performs well. Here we conduct two challenging numerical simulations to evaluate the performance of  GNOMP and AST under oversampling ($M=200>N=128$) and undersampling ($M=100<N=128$) setups. The parameters are set as follows: $N=128$, $K=2$, $\omega_1=1.5$, $\omega_2=3.2$, the time domain SNRs are $\rm{SNR}_1=\rm{SNR}_2=0$ dB, $\boldsymbol{\Phi}\in \mathbb{C}^{M\times N} $is a random matrix in which each element's real part or imaginary part follows standard Gaussian distribution.

For the oversampling scenario, AST fails to estimate the two targets as shown in Fig. \ref{AST_M_200}, while GNOMP perfectly estimates the two targets and terminates in the third iteration as shown in Fig.  \ref{GNOMP_M_200}. The estimation results are $\hat\omega_1=1.5008$ and $\hat\omega_2=3.2002$. For the undersampling scenario, the results of two algorithm are shown in Fig. \ref{AST_M_100} and Fig. \ref{GNOMP_M_100}, whose phenomena are similar. AST still fails to estimate the two targets. GNOMP perfectly estimates the two targets and terminated in the third iteration. The estimation results are $\hat\omega_1=1.4997$ and $\hat\omega_2=3.2015$. In summary, the GNOMP algorithm works well in both oversampling and undersampling scenarios.

\begin{figure}[htbp!]
	\centering
	\includegraphics[width=2.3in]{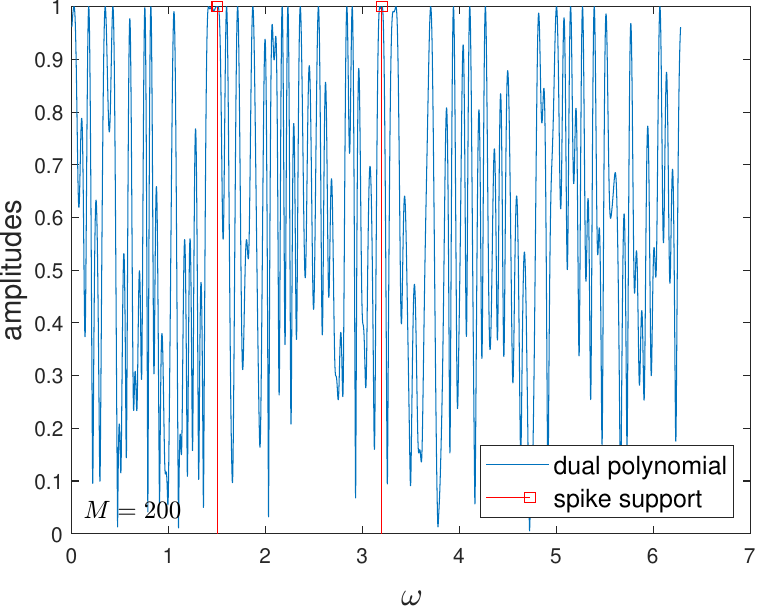}
	\caption{AST results ($M = 200$).}
 \label{AST_M_200}
\end{figure}
\begin{figure*}[htbp!]
	\centering
	\subfigure[The 1st iteration]{
		\label{Demo_1iter_M_200}
		\includegraphics[width=5cm]{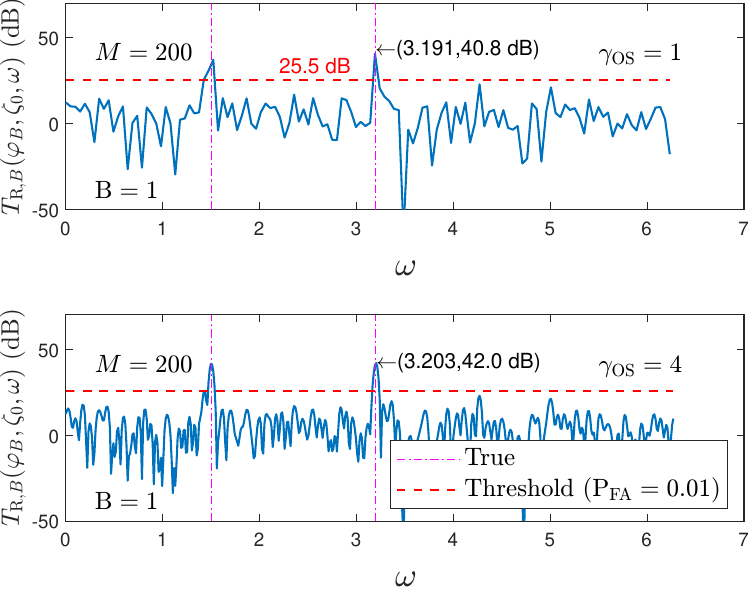}}\subfigure[The 2nd iteration]{
		\label{Demo_2iter_M_200}
		\includegraphics[width=5cm]{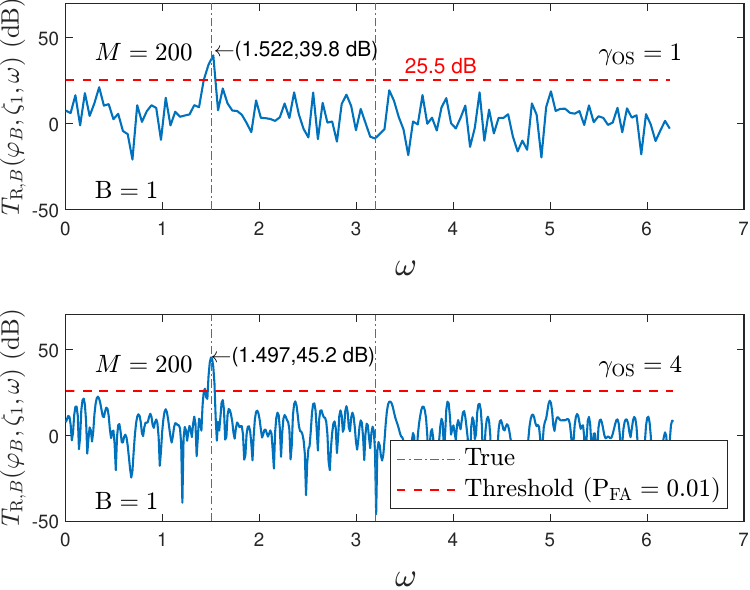}}
		\subfigure[The 3rd iteration]{
		\label{Demo_3iter_M_200}
		\includegraphics[width=5cm]{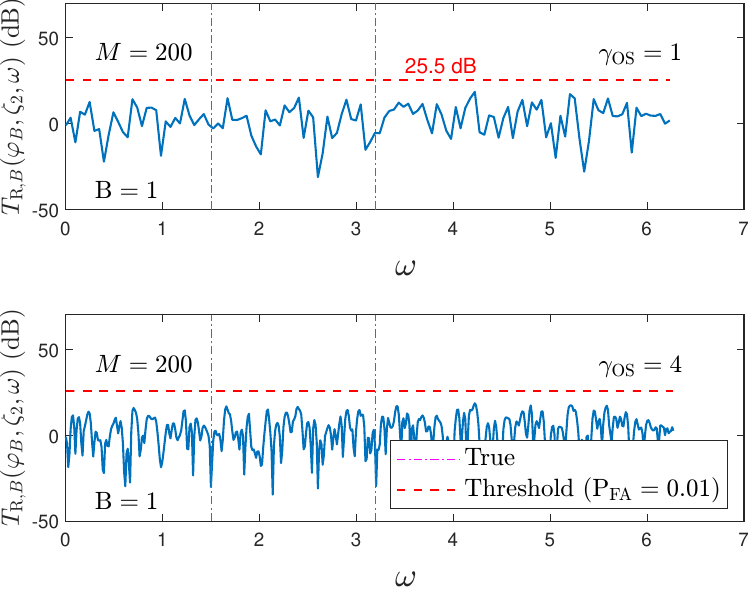}}
	\caption{GNOMP iterations ($M=200$).}
	\label{GNOMP_M_200}
\end{figure*}

\begin{figure}[htbp!]
	\centering
	\includegraphics[width=2.3in]{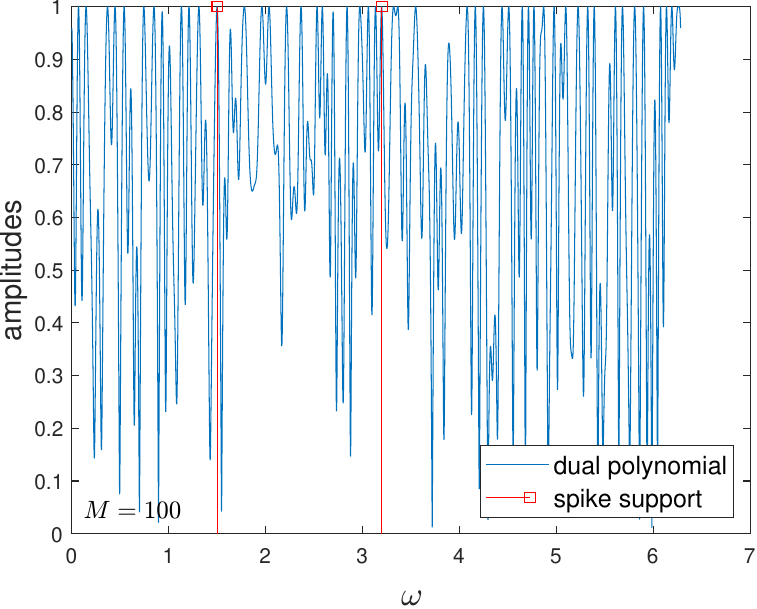}
	\caption{AST results ($M = 100$)}
 \label{AST_M_100}
\end{figure}
\begin{figure*}[htbp!]
	\centering
	\subfigure[The 1st iteration]{
		\label{Demo_1iter_M_100}
		\includegraphics[width=5cm]{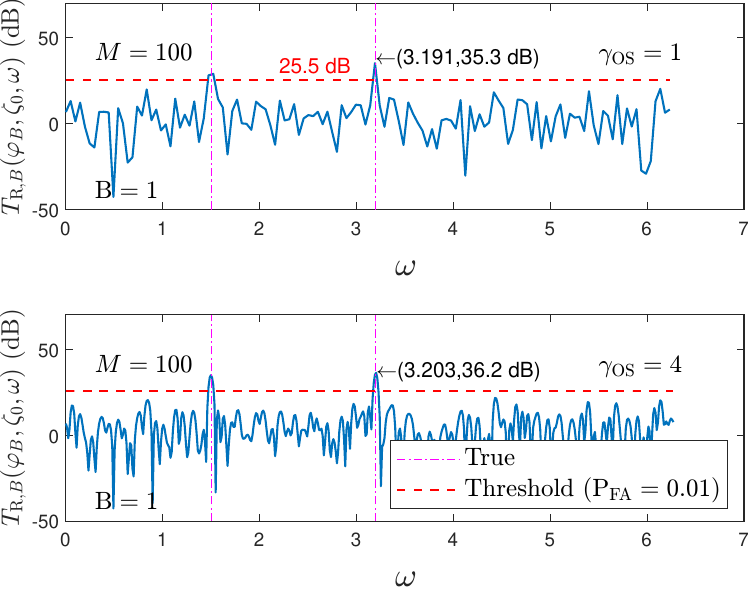}}\subfigure[The 2nd iteration]{
		\label{Demo_2iter_M_100}
		\includegraphics[width=5cm]{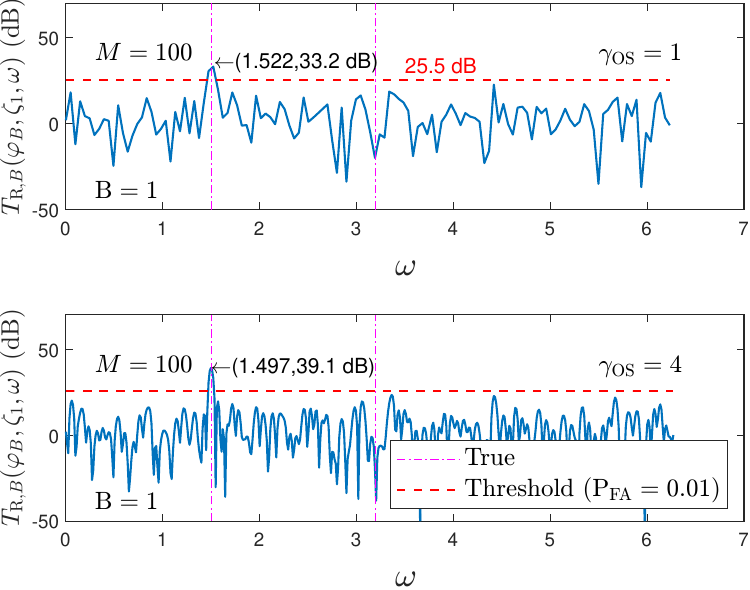}}
	\subfigure[The 3rd iteration]{
		\label{Demo_3iter_M_100}
		\includegraphics[width=5cm]{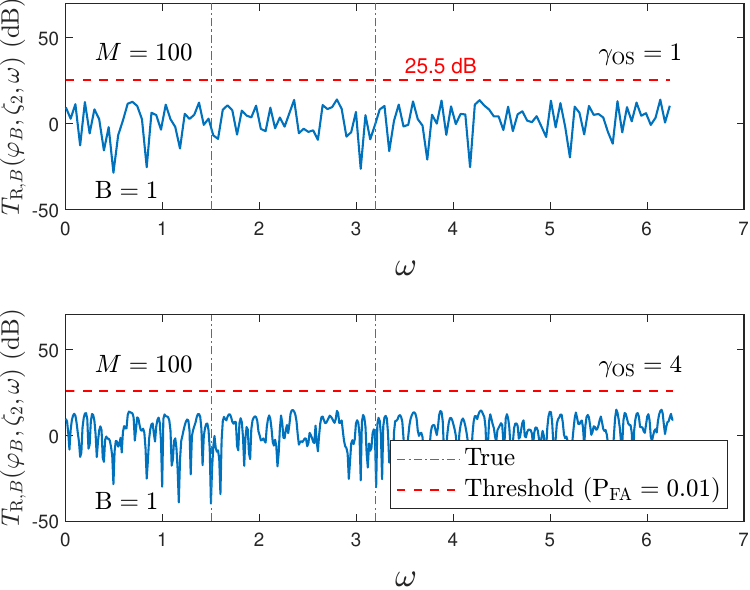}}
	\caption{GNOMP iterations ($M=100$).}
	\label{GNOMP_M_100}
\end{figure*}

\section{Real Experiment}\label{RE}
In this section, we use an AWR1642 radar to demonstrate the performance of proposed GNOMP.
The AWR1642 radar consists of two transmitters and four receivers, and it is an FMCW MIMO radar. The parameters and waveform details we used in the experiments are presented in Table \ref{tab:Radar_parameters}.
\newcommand{\tabincell}[2]{\begin{tabular}{@{}#1@{}}#2\end{tabular}}
\begin{table}[!t]
  \centering
  \scriptsize
  \caption{Parameters Setting of the Two Experiments}
  \label{tab:Radar_parameters}
  \begin{tabular}{ll}
    \\[-2mm]
    \hline
    \hline\\[-2mm]
    { \small Parameters}&\qquad {\small Value}\\
    \hline
    \vspace{1mm}\\[-3mm]
    Number of Receivers $L$      &   4 \\
    Carrier frequency $f_c$   &   77 GHz\\
    \vspace{1mm}
    Frequency modulation slope $\mu$          &  $29.982$ MHz/s\\
    \vspace{1mm}
    sweep time $T_p$         &  $60 {\rm \mu s}$\\
    \vspace{1mm}
    Pulse repeat interval $T_{\text{r}}$         &  $160 {\rm \mu s}$\\
    \vspace{1mm}
    Bandwidth $B$          &  1798.92 MHz\\
    \vspace{1mm}
    Sampling frequency $f_s = 1 / T_s$         &  10 MHz\\
    \vspace{1mm}
    Number of pulses in one CPI $M$          & 128\\
    \vspace{1mm}
    Number of fast time samples $N$          &  256 \\
    \hline
    \hline
  \end{tabular}
\end{table}

The real data acquired in \cite{NOMPCFAR} is utilized to investigate the estimation and detection performance of GNOMP. For GNOMP, the FFT is used to estimate the noise variance, which is then input to the GNOMP algorithm. Additionally, GNOMP is also implemented without the knowledge of the noise variance to perform target estimation and detection with bit-depth greater than $1$. The experimental setup involves setting $\sigma^2=250$, $\tilde{\rm P}_{\rm{FA},B}=0.01$, $B=1,2,3,\infty$, and the maximum full-scale range of the quantizer is $\gamma=60$, with $N=256$.

\begin{figure}[ht!]
	\centering
	\subfigure[The setup of the first experiment]{
		\label{experiment_1}
		\includegraphics[height=1.7in, width=2.4in]{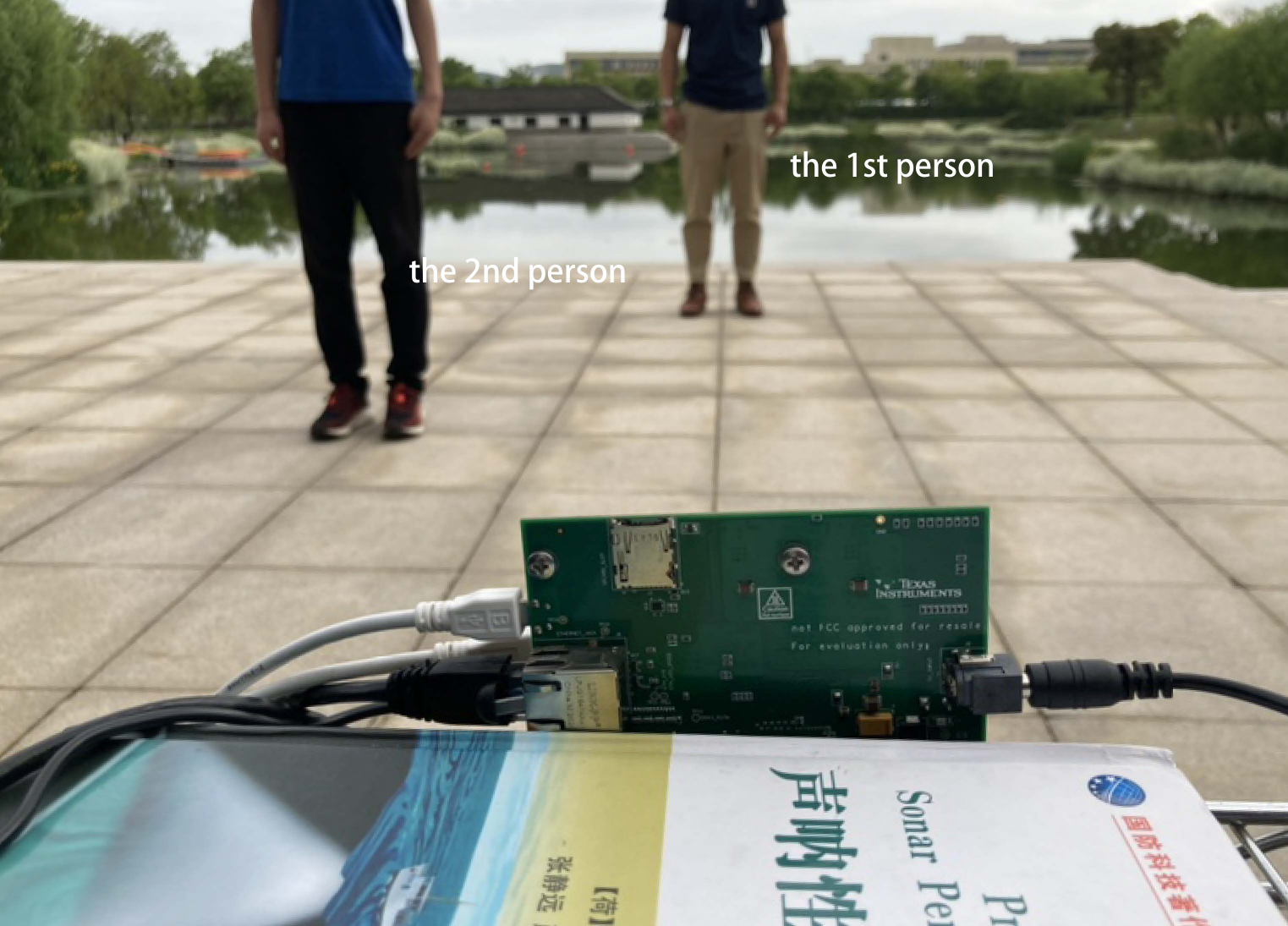}}
	\subfigure[The setup of the second experiment]{
		\label{experiment_2}
		\includegraphics[height=1.7in, width=2.4in]{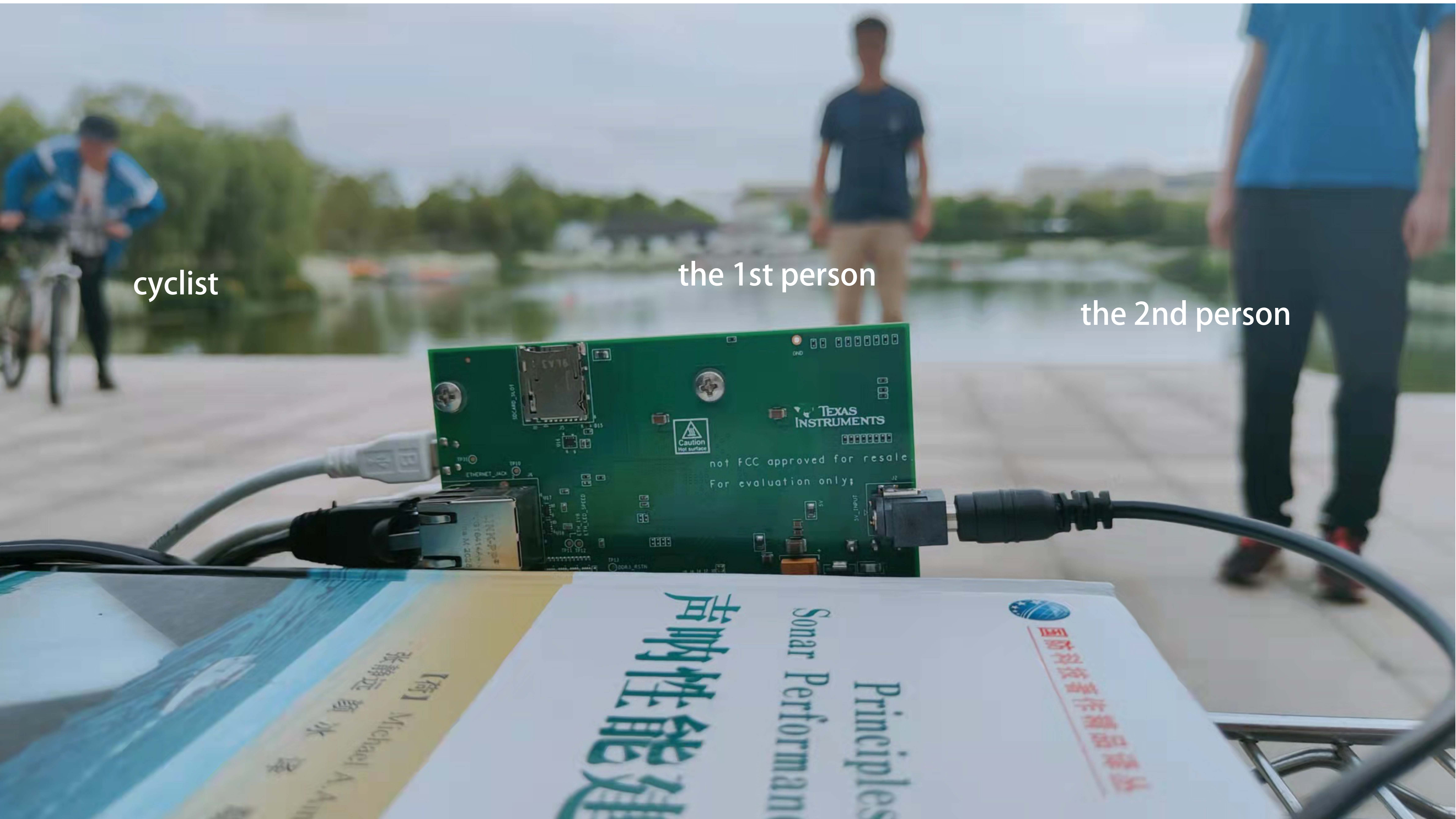}}
	\caption{The setup of the two experiments.}
	\label{experiment_12}
\end{figure}

\begin{figure}[ht!]
	\centering
	\subfigure[$B=1$]{
		\label{GNOMP_people2_1bit}
		\includegraphics[width=2.5in]{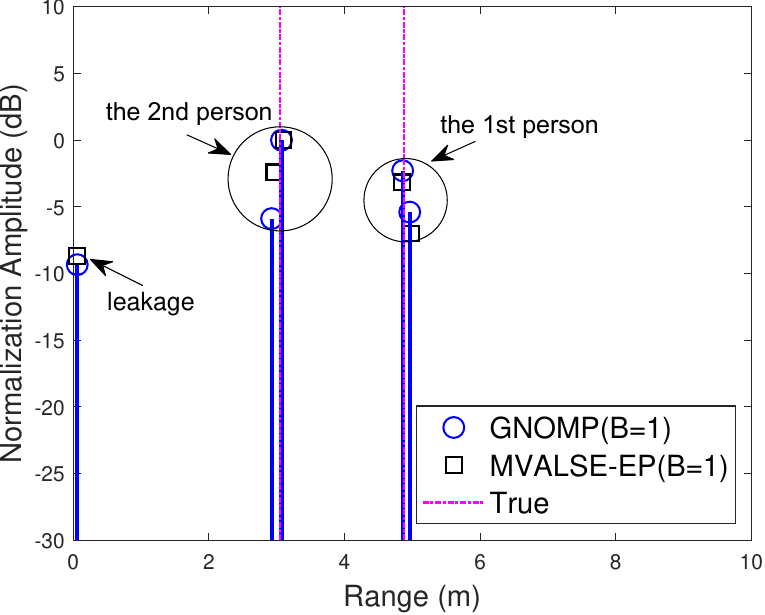}}\subfigure[$B=2$]{
		\label{GNOMP_people2_2bit}
		\includegraphics[width=2.5in]{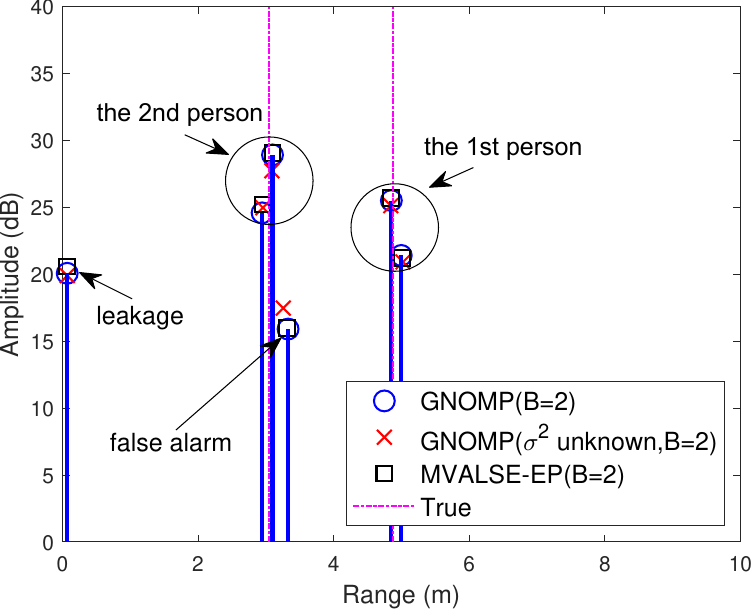}}
	\subfigure[$B=3$]{
		\label{GNOMP_people2_3bit}
		\includegraphics[width=2.5in]{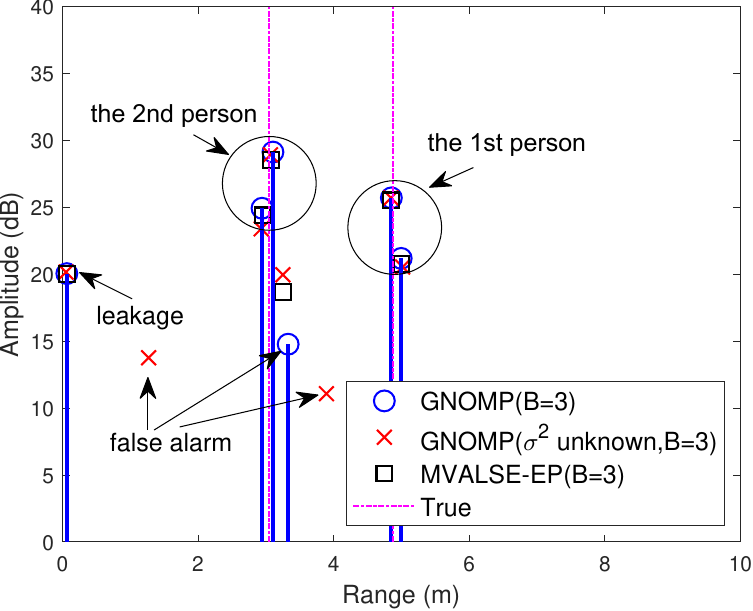}}\subfigure[$B=\infty$]{
		\label{GNOMP_people2_infty}
		\includegraphics[width=2.5in]{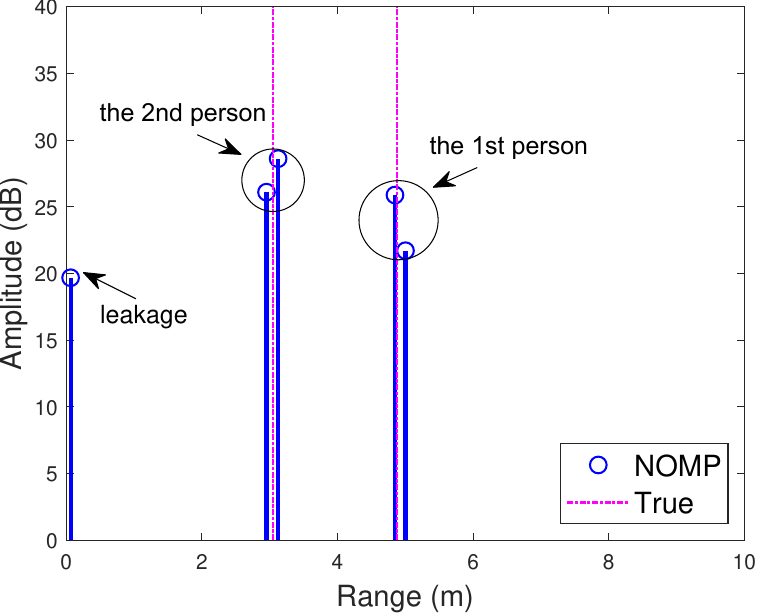}}
	\caption{The range estimation and detection results of the first experiment.}
	\label{GNOMP_real_people2}
\end{figure}
\subsection{The first Experiment}
In the first experiment, as depicted in Fig. \ref{experiment_1}, two individuals, the first person and the second person, are positioned in front of the radar at radial distances of approximately $4.88$ m and $3.05$ m, respectively. The results presented in Fig. \ref{GNOMP_real_people2} demonstrate the detection of the first person, the second person, and the leakage component. GNOMP and GNOMP ($\sigma^2$ unknown) generate false alarms under $2$ and $3$ bit quantization. Regarding running time, NOMP takes $0.004$ sec, GNOMP ($B=1$) takes $0.11$ sec, GNOMP ($B=2$) takes $0.15$ sec (and $0.85$ sec for unknown noise variance), and GNOMP ($B=3$) takes $0.14$ sec (and $1.05$ sec for unknown noise variance). On the other hand, MVALSE-EP ($B=1$) takes $1.18$ sec, MVALSE-EP ($B=2$) takes $1.5$ sec, and MVALSE-EP ($B=3$) takes $1.47$ sec. Therefore, GNOMP exhibits much faster processing times compared to MVALSE-EP.

\begin{figure}[htbp]
	\centering
	\subfigure[$B=1$]{
		\label{GNOMP_multiple_1bit}
		\includegraphics[width=2.5in]{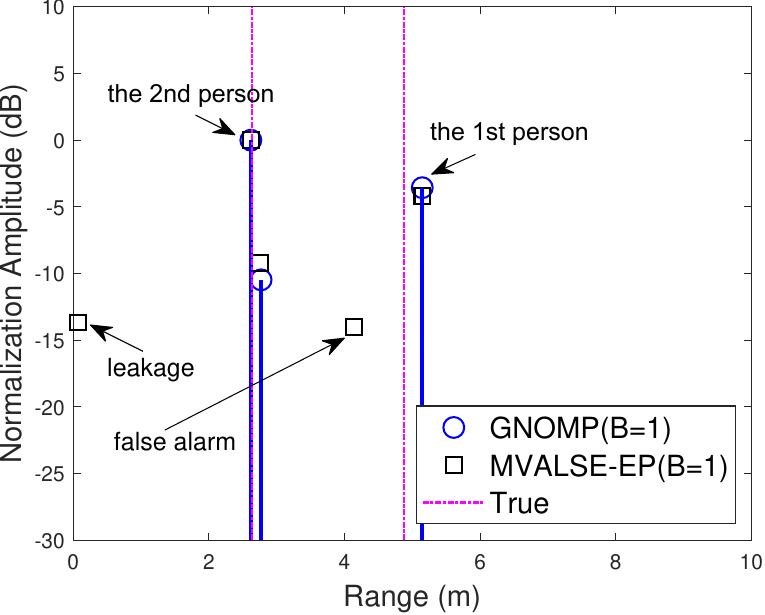}}\subfigure[$B=                                   2$]{
		\label{GNOMP_multiple_2bit}
		\includegraphics[width=2.5in]{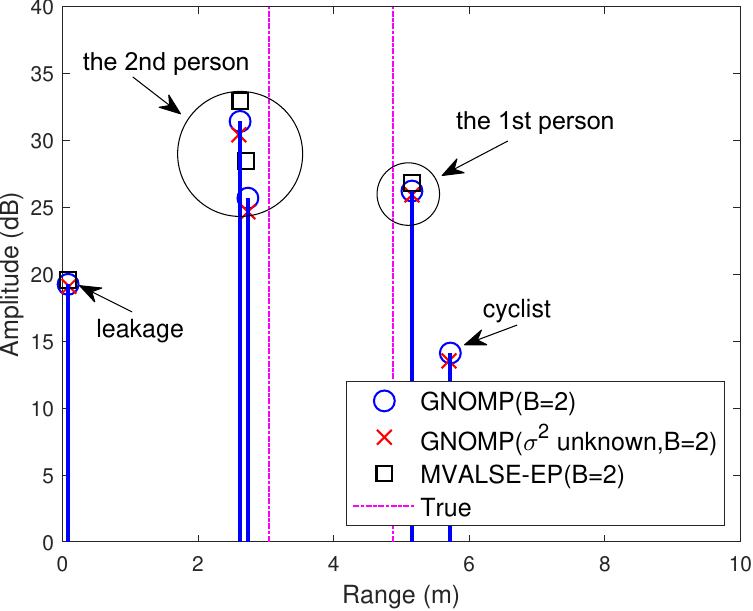}}
	\subfigure[$B=3$]{
		\label{GNOMP_multiple_3bit}
		\includegraphics[width=2.5in]{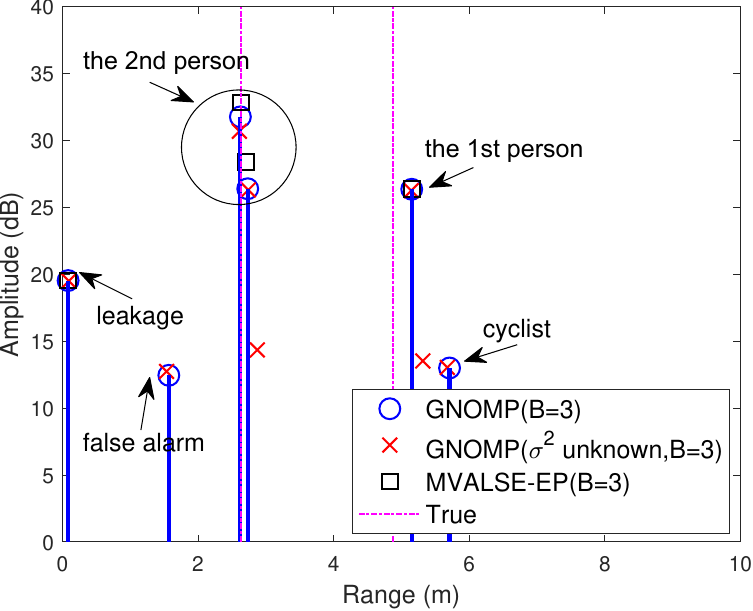}}\subfigure[$B=\infty$]{
		\label{GNOMP_multiple_infty}
		\includegraphics[width=2.5in]{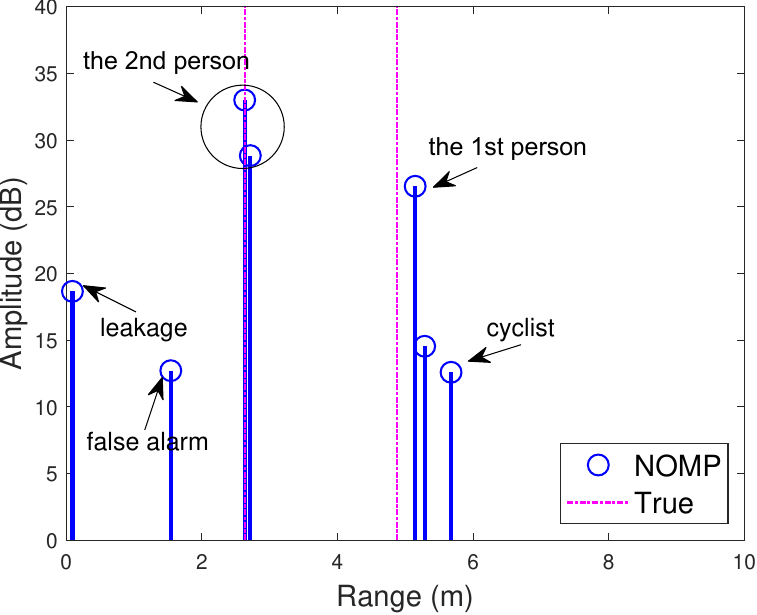}}
	\caption{The estimation and detection results of the second experiment.}
	\label{GNOMP_real_multiple}
\end{figure}
\subsection{The Second Experiment}
In the second field experiment, as illustrated in Fig. \ref{experiment_2}, two stationary individuals,  the first person and the second person, are positioned at radial distances of about $4.87$ m and $2.63$ m. Meanwhile, a cyclist is moving toward the radar with a radial distance ranging from $7$ m to $2$ m and a velocity of approximately $2$ m/s. The range estimation and detection results are depicted in Fig. \ref{GNOMP_real_multiple}. Observations indicate that under $1$ bit quantization, both GNOMP and MVALSE-EP fail to detect the cyclist due to the inherent low DR associated with $1$ bit quantization. MVALSE-EP also generates a false alarm. For $B\geq 2$, GNOMP successfully detects both persons and the cyclist, while MVALSE-EP continues to miss the cyclist. Additionally, under $3$ bit quantization, both GNOMP and GNOMP ($\sigma^2$ unknown), along with NOMP, produce a false alarm.
\section{Conclusion}\label{Cons}
This paper presents a theoretical analysis of false alarm probability and detection probability under low-resolution quantization. The SNR loss revealing the impact of both low-resolution quantization and intersinusoidal interference on the detection of weak signals is introduced. Furthermore, a fast and super-resolution algorithm, GNOMP, which utilizes FFT and IFFT for implementing the Rao detector, achieving LSE\&D while maintaining CFAR behavior, is proposed. The theoretical findings are verified and the efficiency and excellent performance of GNOMP are demonstrated through extensive numerical simulations and real experiments, comparing with state-of-the-art algorithms, the CRB, and the detection probability bound. Considering the low computation complexity and excellent estimation and detection performance of GNOMP, evaluating its computation resource and efficiency on the hardware will be done in future work.
\section{Acknowledgement}
The authors acknowledge Dr. Jiaying Ren and Dr. Haoyu Fu for sharing their codes to make performance comparisons.
\section{Appendix}
\subsection{The FIM for the Single Sinusoid Model (\ref{BHT}) with Nonzero Thresholds}\label{FIM0single}
We now evaluate the FIM for the model (\ref{BHT}) under the hypothesis ${\mathcal {H}}_1$  using the following lemma.
\begin{lemma}\label{lemmaFIM}
\cite{Fu, VALSEEP, NingTAES22} Let ${\boldsymbol \kappa}\in {\mathbb R}^P$ denote the set of unknown deterministic parameters. Note that in the case of quantized observations ${\mathbf y}={\mathcal Q}({\mathbf r})\in{\mathbb R}^N$ where ${\mathbf r}\sim {\mathcal {N}}({\boldsymbol \mu}({\boldsymbol \kappa}),{\sigma}^2{\mathbf I}_N/2)$, the FIM is given by
\begin{align}\label{FIMcal}
{\mathbf I}({\boldsymbol \kappa})=\frac{2}{\sigma^2}\left[\frac{\partial {\boldsymbol \mu}({\boldsymbol \kappa})}{\partial {\boldsymbol \kappa}^{\rm T}}\right]^{\rm T}{\boldsymbol \Lambda}\left[\frac{\partial {\boldsymbol \mu}({\boldsymbol \kappa})}{\partial {\boldsymbol \kappa}^{\rm T}}\right],
\end{align}
where
\begin{align}
\frac{\partial {\boldsymbol \mu}({\boldsymbol \kappa})}{\partial {\boldsymbol \kappa}^{\rm T}}=\left[
                                                                                         \begin{array}{cccc}
                                                                                           \frac{\partial [{\boldsymbol \mu}({\boldsymbol \kappa})]}{\partial \kappa_1} & \frac{\partial [{\boldsymbol \mu}({\boldsymbol \kappa})]}{\partial \kappa_2} & \cdots & \frac{\partial [{\boldsymbol \mu}({\boldsymbol \kappa})]}{\partial \kappa_P}
                                                                                         \end{array}
                                                                                       \right]\in {\mathbb R}^{N\times P},
\end{align}
and ${\boldsymbol \Lambda}$ is a diagonal matrix with the $(i,i)$th element
\begin{align}
\Lambda_{i,i}=h_B(\mu_i(\boldsymbol \kappa),\sigma^2),
\end{align}
$h_B(x,\sigma^2)$ is given by (\ref{hxsigma}) and $B$ is the bit-depth of the quantizer. For unquantized system, the FIM (\ref{FIMcal}) is obtained with ${\boldsymbol \Lambda}={\mathbf I}_N$.
\end{lemma}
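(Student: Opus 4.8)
The plan is to compute the Fisher information matrix directly from its score-covariance form, ${\mathbf I}(\boldsymbol\kappa)={\rm E}[(\partial_{\boldsymbol\kappa}\ln p)(\partial_{\boldsymbol\kappa}\ln p)^{\rm T}]$, exploiting the independence of the entries of $\mathbf r$. Since $r_n\sim\mathcal N(\mu_n(\boldsymbol\kappa),\sigma^2/2)$, each quantized sample $y_n=\mathcal Q(r_n)$ is a categorical random variable that realizes bin $d$ (i.e. $r_n\in(\tau_d,\tau_{d+1})$) with probability $P_{n,d}=\Phi((\tau_{d+1}-\mu_n)/(\sigma/\sqrt2))-\Phi((\tau_d-\mu_n)/(\sigma/\sqrt2))$, so the joint log-likelihood factorizes as $\ln p(\mathbf y;\boldsymbol\kappa)=\sum_{n=1}^N\ln[\Phi((u(y_n)-\mu_n)/(\sigma/\sqrt2))-\Phi((l(y_n)-\mu_n)/(\sigma/\sqrt2))]$, exactly mirroring the expression (\ref{lnpdef}) already recorded for the single-sinusoid case.

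First I would apply the chain rule to separate the dependence on $\boldsymbol\kappa$ from the dependence on the mean. Writing $s_n\triangleq\partial\ln p_n/\partial\mu_n$ for the per-sample score in the mean, one has $\partial\ln p/\partial\kappa_j=\sum_n s_n\,\partial\mu_n/\partial\kappa_j$, and differentiating $\Phi((\tau-\mu_n)/(\sigma/\sqrt2))$ in $\mu_n$ brings down a factor $-\phi(\cdot)/(\sigma/\sqrt2)$, so that $s_n=-\frac{1}{\sigma/\sqrt2}\,\frac{\phi((u(y_n)-\mu_n)/(\sigma/\sqrt2))-\phi((l(y_n)-\mu_n)/(\sigma/\sqrt2))}{\Phi((u(y_n)-\mu_n)/(\sigma/\sqrt2))-\Phi((l(y_n)-\mu_n)/(\sigma/\sqrt2))}$, which is precisely the differentiation pattern of (\ref{lnppart}). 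Because the samples are independent and each score has zero mean, all cross-sample terms in the outer product vanish in expectation, collapsing the FIM to $I_{ij}=\sum_n{\rm E}[s_n^2]\,(\partial\mu_n/\partial\kappa_i)(\partial\mu_n/\partial\kappa_j)$, where the Jacobian entries are deterministic and factor out of the expectation.

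The crux is the per-sample calculation ${\rm E}[s_n^2]$, which is where the function $h_B$ materializes. Since $y_n$ is categorical, this expectation is a finite sum over the $b$ bins weighted by $P_{n,d}$: each term contributes $P_{n,d}\cdot\frac{2}{\sigma^2}[\phi((\tau_{d+1}-\mu_n)/(\sigma/\sqrt2))-\phi((\tau_d-\mu_n)/(\sigma/\sqrt2))]^2/P_{n,d}^2$, and the single power of $P_{n,d}$ in the weight cancels one power in the denominator, leaving $\frac{2}{\sigma^2}[\phi-\phi]^2/P_{n,d}$; summing over $d$ reproduces $\frac{2}{\sigma^2}h_B(\mu_n,\sigma^2)$ with $h_B$ exactly as in (\ref{hxsigma}). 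Collecting the samples into $\partial\boldsymbol\mu/\partial\boldsymbol\kappa^{\rm T}$ and the scalars $h_B(\mu_n,\sigma^2)$ into the diagonal $\boldsymbol\Lambda$ gives the claimed form (\ref{FIMcal}), and the unquantized case follows by letting $B\to\infty$ and invoking $h_{B=\infty}(x,\sigma^2)=1$, whence $\boldsymbol\Lambda=\mathbf I_N$ and one recovers the standard Gaussian FIM $\frac{2}{\sigma^2}(\partial\boldsymbol\mu/\partial\boldsymbol\kappa^{\rm T})^{\rm T}(\partial\boldsymbol\mu/\partial\boldsymbol\kappa^{\rm T})$. The main obstacle is not any single step but keeping the bin-bookkeeping honest: one must verify that the ratio-of-Gaussian score, squared and averaged against the very bin probabilities appearing in its own denominator, telescopes cleanly into the summation (\ref{hxsigma}), which is precisely the identity defining $h_B$.
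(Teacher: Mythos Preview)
Your proof is correct and follows the standard score--covariance computation: factor the likelihood over independent samples, apply the chain rule through the mean, use the zero-mean property of the score to kill cross-sample terms, and evaluate the per-sample second moment as a finite sum over bins to obtain $h_B$. Note, however, that the paper does \emph{not} prove this lemma at all; it is simply quoted from the cited references \cite{Fu, VALSEEP, NingTAES22} and then applied to derive Proposition~1. Your argument is essentially the derivation one would find in those sources, so there is nothing to contrast methodologically.
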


In our setting, the observations are \small{$\left[\Re\{{\mathbf y}\};\Im\{{\mathbf y}\} \right]$}. Note that ${\boldsymbol \kappa}\in{\mathbb R}^{2}$ and ${\boldsymbol \mu}({\boldsymbol \kappa})\in{\mathbb R}^{2N}$ are
\begin{align}\label{partres}
{\boldsymbol \kappa}=\left[\begin{array}{c}
  \Re\{{x}\} \\
  \Im\{{x}\}
\end{array}
 \right],
 {\boldsymbol \mu}({\boldsymbol \kappa})=\left[\begin{array}{c}
  \Re\{{\boldsymbol\zeta}+{\mathbf a}{x}\} \\
  \Im\{{\boldsymbol\zeta}+{\mathbf a}{x}\}
\end{array}
 \right],
\end{align}
respectively. Thus
\begin{align}\label{parmukappasingle}
\frac{\partial {\boldsymbol \mu}({\boldsymbol \kappa})}{\partial {\boldsymbol \kappa}^{\rm T}}=\left[
                  \begin{array}{cc}
                    \Re\{{\mathbf a}\} & -\Im\{{\mathbf a}\} \\
                    \Im\{{\mathbf a}\} & \Re\{{\mathbf a}\} \\
                  \end{array}
                \right]\in{\mathbb R}^{2N\times 2}.
\end{align}
Substituting (\ref{parmukappasingle}) in (\ref{FIMcal}), the FIM ${\mathbf I}_B({\boldsymbol \theta})$ is
\begin{align}\label{FIM0}
	{\mathbf I}_B({\boldsymbol \theta})=&\frac{2}{\sigma^2}\left[
	\begin{array}{cc}
		\Re\{{\mathbf a}\} & -\Im\{{\mathbf a}\} \\
		\Im\{{\mathbf a}\} & \Re\{{\mathbf a}\} \\
	\end{array}
	\right]^{\rm T}\rm {diag}
	\begin{pmatrix}
		h_B(\Re\{{\mathbf a}x+\boldsymbol{\zeta}\},\sigma^2)\\
		h_B(\Im\{{\mathbf a}x+\boldsymbol{\zeta}\},\sigma^2)\\
	\end{pmatrix}\left[
	\begin{array}{cc}
		\Re\{{\mathbf a}\} & -\Im\{{\mathbf a}\} \\
		\Im\{{\mathbf a}\} & \Re\{{\mathbf a}\} \\
	\end{array}
	\right]
\end{align}
Simplifying (\ref{FIM0}) yields (\ref{FIMsingle0}). The inverse of ${\mathbf I}_B({\boldsymbol \theta})$ is shown to be (\ref{CRBsingle0}).
\end{document}